\begin{document}

\journalname{Commun. Math. Phys.}
\title{Acausal quantum theory for non-Archimedean scalar fields}
\author{M. L. Mendoza-Mart\'{\i}nez\inst{1}
\and J. A. Vallejo\inst{2}
\and W. A. Z\'{u}\~{n}iga-Galindo\inst{1}
\thanks{\emph{emails}: \texttt{mmendoza@math.cinvestav.mx}, \texttt{wazuniga@math.cinvestav.edu.mx},
\texttt{jvallejo@ fc.uaslp.mx}}}
\institute{Centro de Investigaci\'{o}n y de Estudios Avanzados del Instituto
Polit\'{e}cnico Nacional, Departamento de Matem\'{a}ticas, Unidad
Quer\'{e}taro, Libramiento Norponiente \#2000, Fracc. Real de Juriquilla.
Santiago de Quer\'{e}taro, Qro. 76230, M\'{e}xico.
\and
Facultad de Ciencias, Universidad Aut\'onoma de San Luis Potos\'i, Av. Salvador
Nava s/n, San Luis Potos\'{\i}, SLP 78290, M\'{e}xico.
}

\maketitle

\begin{abstract}

We construct a family of quantum scalar fields over a $p-$adic
spacetime which satisfy $p-$adic analogues of the G\aa rding--Wightman
axioms. Most of the axioms can be formulated the same way in both, the
Archimedean and non-Archimedean frameworks; however, the axioms depending on
the ordering of the background field must be reformulated, reflecting the 
acausality of $p-$adic spacetime. The $p-$adic scalar
fields satisfy certain $p-$adic Klein-Gordon pseudo-differential equations. The
second quantization of the solutions of these Klein-Gordon equations
corresponds exactly to the scalar fields introduced here.

\end{abstract}

\tableofcontents

\section{Introduction}

Ever since the advent of Quantum Mechanics, the question of its compatibility with Special Relativity 
was raised. The occurrence of non-locality in the quantum world and its implications regarding the
relativistic causal structure was the central theme in the well-known works by Einstein, Podolsky and
Rosen, and Bell. These issues are still debated today, but there is a increasing amount of research
pointing towards the fact that quantum mechanics is incompatible at a fundamental level not only
with the causal structure furnished by Special Relativity (through light cones), but with any other
possible \emph{causal} ordering\footnote{Notice that we emphasize the causal character. There are other
possible orderings (chronological, horismos) that will be not considered here, although they are related,
see \cite{KP67}.}. In \cite{Cav18}, it is concluded that the description of non-localities
requires fine-tuning of the system's parameters, thus violating a basic principle of any causal model.
In \cite{Ore12}, quantum correlations
incompatible with a definite causal order are constructed (although they prove that a causal order
emerges in the classical limit), and the experimental existence of these correlations is reported in
\cite{Rub17}. See also \cite{Rin16} for the incompatibility of
Quantum Mechanics with some \emph{non-local} causal models. Applications of the absence of a predefined
causal structure to quantum computations are given in \cite{Chi13}.

Motivated by these considerations, one could wonder whether it is possible to construct a quantum
field theory (QFT) on a spacetime devoid of any \emph{a priori} causal structure. The notions of spacelike
and timelike intervals which, from an operational point of view, characterize the causal structure, are 
intimately tied to the existence of a total order on the field number $\mathbb{R}$ compatible with the 
algebraic field operations, so a possibility is to start from a non-ordered number field. Leaving aside 
the case of finite fields, the most obvious choice is to consider the non-Archimedean
field of $p-$adic numbers
$\mathbb{Q}_p$. The corresponding spacetime would be $\mathbb{Q}^4_p$. In this way,  ($p-$adic) time no 
longer acts
as an ordering parameter. While this is completely consistent with the requirement of covariance, it
raises some questions about its meaning in Quantum Mechanics; for some theoretical points of view about
the possibility of quantum processes without a time parameter see \cite{Woo84,Rov90}.

The spacetime $\mathbb{Q}^4_p$ is acausal in the broad sense of lacking a causal structure, but also in the particular, technical, sense that for any pair of points on it, there exists no causal curve connecting them (which, in particular, also implies that it is achronal).
The question of the intrinsic (a)causality of
spacetime has been studied sometime ago \cite{Poi92}, and is a topic of obligated discussion
when dealing with the possibility of `travels in time' \cite{Luk03,Tip17}. Acausal (portions of)
spacetimes appears often in relation with wormholes in General Relativity \cite{Mor88}. 
There have been problems
in constructing the $S$ matrix for interacting massive scalar fields in this setting 
\cite{Fri92}, 
but it should be stressed that these are due to the
interaction along closed timelike curves, which do not exist at all in the framework of a globally
acausal spacetime such as the one presented here, where the very notion of `timelike' does not make
sense. 

A problem present in any acausal theory is the characterization of microcausality or local commutativity, 
that is, the vanishing of the commutator of field operator-valued distributions when the test functions
have support in spacelike separated regions. It is not clear \emph{a priori} that a theory without a
causal structure will allow for vanishing commutators even restricting the domain of the involved
operators, but we will show 
below that a similar property holds when the test functions are supported in the $p-$adic unit ball.
Thus, there is no room for phenomena arising in the non-Archimedean case, such as
the connection of spacelike regions by large timelike loops.
It is also reasonable to expect that the consideration of $p-$adics numbers could also cure the divergences in $1-$loop effective Lagrangians that appear in the real Euclidean case 
\cite{Cas97}, although no attempt is made here to pursue this direction of research. 

Another, different, kind of motivation for studying quantum field theory in the $p-$adic setting
comes from the conjecture of Vladimirov and Volovich stating that spacetime
has a non-Archimedean nature at the Planck scale, \cite{Vol}, see also
\cite{Var1}. The existence of the Planck scale implies that below it the very
notion of measurement as well as the idea of `infinitesimal length' become
meaningless, and this fact translates into the mathematical statement that the
Archimedean axiom is no longer valid. Before Volovich, some authors explored
the possibility of constructing theories of the spacetime using background
fields different from \ $\mathbb{R}$ and $\mathbb{C}$; for instance, in
\cite{EU66} Everett and Ulam study the Lorentz group over $\mathbb{Q}_{p}$ in
the hope that `spaces of this sort might be useful in some future models of
nuclear or subnuclear theories', see also \cite{Var1}, \cite[Chapter 6]{Var2}
and references therein. Volovich's conjecture propelled a wide variety of
investigations in cosmology, quantum mechanics, string theory, QTF, etc., and
the influence of this conjecture is still relevant nowadays, see e.g.
\cite{Abd et al}, \cite{B-F}-\cite{DD97}, \cite{D-K-K-V}, \cite{Dra01},
\cite{Gubser et al.}-\cite{Harlow et al}, \cite{Koch-Sait}-\cite{Mis2},
\cite{Var et al}-\cite{Vol}, \cite{Zuniga-FAA-2017}, \cite{Zuniga-LNM-2016}.
In a completely different framework, that of the physics of complex systems, the paradigm
asserting that the space of states of several complex systems has an
ultrametric structure has also originated a large amount of research, see
\cite{RTV86}, \cite{KKZuniga}\ and references therein. These two ideas are
the main motivations driving the development of $p-$adic mathematical physics.
In particular, during the last thirty years $p-$adic QFT has been studied
intensively, a topic whose importance has been highlighted by Varadarajan 
in \cite{Var2}.

In this article we present a second-quantization, based on Segal's
formalism, for $p-$adic free scalar fields whose evolution is described by
a certain class of Klein-Gordon type pseudo-differential operators. 
In order to guarantee that the resulting theory has some physical content, we show
that the corresponding quantum non-Archimedean scalar fields satisfy
$p-$adic versions of G\aa rding--Wightman's axioms. Most of them can be
formulated in a way valid in both the Archimedean and non-Archimedean cases,
but some of them must be appropriately re-formulated in the $p-$adic
setting by introducing new mathematical ideas and re-interpreting some
classical constructions that are not directly available in the $p-$adic
context. For instance, the absence of an ordering in the background number
field implies some profound modifications in the usual interpretation of
notions such as the timelike or spacelike character of $p-$adic spacetime
events, and the introduction of new mathematical objects such as the
$p-$adic restricted Lorentz group, that we will discuss below. As another
example, our $p-$adic spectral condition does not provide a definition of
energy and momentum operators, because this would require a theory of
semigroups, with $p-$adic time, for operators acting on complex-valued
functions, and such a theory does not exist at the moment. However, the outcomes
of our analysis are consistent with the requirement that the
mathematical description of physical reality must not depend on the background
number field, see \cite{Volovich1}. This property is due to the particular nature
of the Klein-Gordon field, notice that the same is not true for the Schr\"odinger
equation, as the number $i$ does not have an analog in an arbitrary field.

Thus, the main conclusion is that there seems to be
no obstruction to the existence of a mathematically rigorous quantum field
theory (QFT) for free fields in the $p-$adic framework, based on an acausal
spacetime. It must be remarked
that we deal with free fields, omitting interactions. The reason for this is
that, due to Haag's theorem, interactions require a more technical treatment,
but having a consistent theory for the free case is the first step towards a
complete $p-$adic QFT.

We have remarked some features derived from the fact that the spacetime is
$p-$adic. Let us now make some comment about those originated in the
configuration space of the fields. A key fact is that we work with
complex-valued fields. This allow us to use the tools from classical
functional analysis, in particular Segal quantization. On the other hand, it
is also possible to work with $p-$adic valued fields. In this setting,
Khrennikov developed a theory of Gaussian integration of
non-Archimedean-valued functions on infinite-dimensional non-Archimedean
spaces and a calculus of pseudo-differential operators which is suitable for
the second-quantization representation in non-Archimedean quantum field
theory, see \cite{Kh1}-\cite{Khrennikov1} and references therein.
Mathematically speaking, this is a completely different setting from ours: for
instance, $p-$adic Hilbert spaces are radically different to their complex counterparts.

The construction of a quantum field theory over a $p-$adic spacetime raises
the question about the physical meaning of the prime $p$. Once a choice for
$p$ is made, we can construct $\mathbb{Q}_{p}^{4}$ (endowed with the maximum
norm) and then give it a geometric structure through a quadratic form
$\mathfrak{q}$. The geometry of the resulting spacetime, the quadratic space
$(\mathbb{Q}_{p}^{4},\mathfrak{q})$, depends crucially on both, $p$ and
$\mathfrak{q}$. We choose the simplest case in which the quadratic form is the
unique elliptic form of dimension four and a prime number $p\equiv
1\,\mathrm{\operatorname{mod}}\,4$. The first choice is motivated by the need
for ellipticity when doing the explicit computation of the fundamental
solutions (and the corresponding propagators) of the Klein-Gordon equation.
Notice that the naive choice $\mathfrak{q}(k)=k_{0}^{2}-(k_{1}^{2}+k_{2}%
^{2}+k_{3}^{2})$ is excluded because it is not elliptic. It is possible to
develop a theory based on this form, but at the cost of facing greater
technical difficulties. However, as we will see, our choice for $\mathfrak{q}$
retains all the essential features of a relativistic theory, so it is
justifiable from a physical point of view. Regarding the choice of $p$, the
quantum fields introduced here will strongly depend on the geometry of the
hypersurface $V=\left\{  k\in\mathbb{Q}_{p}^{4};\mathfrak{q}(k)=1\right\}  $,
and if we pick $p\equiv1\,\mathrm{\operatorname{mod}}\,4$, then we can
guarantee that $\sqrt{\omega(\mathbf{k})}\neq0$ for any $\mathbf{k}\in U_{\mathfrak{q}}$, where
$U_{\mathfrak{q}}\subset\mathbb{Q}_{p}^{3}$ is a certain open and compact subset (depending on
$\mathfrak{q}$) that will be defined later on. Notice that, due to these
choices, we are actually defining a family of quantizations, a fact that
could be viewed as an advantage over the rigidity of the classical case.

Thus, given a prime number $p\equiv1$ $\operatorname{mod}4$ and a $p-$adic
elliptic quadratic form $\mathfrak{q}$ of dimension $4$, we will denote by
$\boldsymbol{O}(\mathfrak{q})$ the orthogonal group of $\mathfrak{q}$. As
stated, the $p-$adic Minkowski spacetime is, by definition, the quadratic
space $(\mathbb{Q}_{p}^{4},\mathfrak{q})$, so the Lorentz group of spacetime
is $\boldsymbol{O}(\mathfrak{q})$. In this article, `time' is a $p-$adic
variable, so the notions of past and future are not clearly defined. However,
the $p-$adic implicit function theorem allows us to determine $k_{0} $, from
$\mathfrak{q}\left(  k_{0},\mathbf{k}\right)  =1$, as $k_{0}=\pm\sqrt{\omega\left(
\mathbf{k}\right)  }$, where $\sqrt{\omega\left(  \mathbf{k}\right)  }$ is a $p-$adic analytic
function \ defined in $U_{\mathfrak{q}}$, and in this way we can define the
mass shells:
\[
V^{\pm}=\left\{  \left(  k_{0},\mathbf{k}\right)  \in\mathbb{Q}_{p}\times\mathbb{Q}%
_{p}^{3};k_{0}=\pm\sqrt{\omega\left(  \mathbf{k}\right)  }\mbox{ , }\mathbf{k}\in U_{\mathfrak{q}%
}\right\}  \,.
\]

In the $p-$adic setting the usual geometric notion of cone does not make
sense, because it depends on the fact that the real numbers form an ordered
field. Therefore, the notion of closed forward light cone is replace by the
notion of `closed forward semigroup', which is the topological closure of the
additive semigroup generated by $V^{+}$. This notion allow us to construct a
spectral measure attached to a strongly continuous unitary representation \ of
the $p-$adic Poincar\'{e} group as in the classical case, see Theorem
\ref{Theorem2}.

We will denote by $\mathcal{F}$ the Fourier transform operator associated to
the quadra\-tic form $\mathfrak{q}$. The $p-$adic Klein-Gordon operator
attached to $\mathfrak{q}$ with unit mass is defined as%
\[
\square_{\mathfrak{q},\alpha}\varphi=\mathcal{F}^{-1}\left(  \left\vert
\mathfrak{q}-1\right\vert _{p}^{\alpha}\mathcal{F}\varphi\right)  \,
\]
where $\varphi$ is a test function and $\alpha$ is a fixed positive number.

In conventional QFT there have been some studies devoted to the optimal choice of
the space of test functions. In \cite{Jaffe}, Jaffe discussed this topic (see
also \cite{Strocchi} and \cite{Lopu}); his conclusion was that, rather than an
optimal choice, there exists a set of conditions that must be satisfied by the
candidate space, and any class of test functions with these properties should
be considered as valid. The main condition is that the space of test functions
must be a nuclear countable Hilbert one. In this article, we use the following
Gel'fand triple: $\mathcal{H}_{\infty}\left(  \mathbb{K}\right)  \subset
L_{\mathbb{K}}^{2}\subset\mathcal{H}_{\infty}^{\ast}\left(  \mathbb{K}\right)
$, where $\mathbb{K}=\mathbb{R}$, $\mathbb{C}$. This triple was introduced in
\cite{Zuniga-FAA-2017}. The space $\mathcal{H}_{\infty}\left(  \mathbb{K}%
\right)  $ is a nuclear countable Hilbert space, which is invariant under the
action of a large class of pseudo-differential operators. This space can be considered the
`true' non-Archimedean analogue of the classical Schwartz space, as we will repeatedly
justify in what follows. In fact, our results
could be summarized by saying that the G\aa rding--Wightman axioms make sense in the
$p-$adic context if we replace the Schwartz space of the classical framework
by $\mathcal{H}_{\infty}\left(  \mathbb{C}\right)  $.

The $p-$adic Klein-Gordon equation
\begin{equation}
\square_{\mathfrak{q},\alpha}u\left( t,\mathbf{x}\right)  =0\label{eq_klein_gordon}%
\end{equation}
admits solutions of plane wave type, more precisely, the functions
\[
\exp2\pi i\left\{  tE^{\pm}-sx_{1}l_{1}-px_{2}l_{2}+spx_{3}l_{3}\right\}_{p},
\]
where $\left\{  \cdot\right\}  _{p}$ denotes the $p-$adic fractional part,
$\mathbf{l}=\left(  l_{1},l_{2},l_{3}\right)
\in\mathbb{Q}_{p}^{3}$ is a fixed vector, and $E^{\pm}=\pm\sqrt{\omega\left(
\mathbf{l}\right)  }$ (here $\sqrt{\omega\left(  \mathbf{k}\right)  }$ is
the $p-$adic dispersion) are weak solutions \ of (\ref{eq_klein_gordon}), see
Theorem \ref{Theorem3}. The general solution of (\ref{eq_klein_gordon}), up to
multiplication by a non-zero complex constant, is
\begin{equation}\label{eq_sol_klein_gordon}
 \int\limits_{U_{\mathfrak{q}}}\left( \chi_{p}\left( -\sqrt{\omega (\mathbf{k})}t
 +\mathbf{k}\cdot \mathbf{x}\right) 
 a\left(\mathbf{k}\right)  +\chi_{p}\left(  \sqrt
{\omega (\mathbf{k})}t-\mathbf{k}\cdot \mathbf{x}\right)  a^{\dagger} (-\mathbf{k})
\right) \frac{d^{3}\mathbf{k}}{\left\vert \sqrt{\omega\left( \mathbf{k}\right)  }\right\vert
_{p}}\,,
\end{equation}
where $\chi_{p}\left(  \cdot\right)  =\exp\left(  2\pi i\left\{
\cdot\right\}  _{p}\right)  $ is the standard additive character of
$\mathbb{Q}_{p}$, $U_{\mathfrak{q}}\subset\mathbb{Q}_{p}^{3}$ is an open and
compact subset, $\mathbf{k}\cdot \mathbf{x}$ denotes a suitable bilinear form, and $a\left(
\mathbf{k}\right)  $, $a^{\dagger}\left(  -\mathbf{k}\right)  $ are test functions, see Theorem
\ref{Theorem3}. The solutions (\ref{eq_sol_klein_gordon}) can be quantized
using the techniques described below, and the corresponding Klein-Gordon
fields satisfy the corresponding Wightman axioms, see Theorem \ref{Theorem2}.

The $p-$adic Klein-Gordon equations in the form used in this article were
introduced by the third author, see \cite[Chapter 6]{Zuniga-LNM-2016}\ and
references therein, where also the problem of the second quantization of their
solutions was posed \cite[Chapter 7]{Zuniga-LNM-2016}. The resulting field theory
has a strong number-theoretic flavor. For instance, the
calculation of the Green functions is related to the meromorphic continuation
of Igusa's local zeta functions, see Theorem \ref{Theorem1} and the references
\cite{Igusa}, \cite[Chapter 10]{KKZuniga}, \cite[Chapter 5]{Zuniga-LNM-2016}.

Finally, let us remark that there are a lot of open questions related to
$p-$adic quantum fields and their underlying mathematical techniques that
remain to be studied within the present framework. Among them, probably the
most important one is the reconstruction theorem, which depends on an
appropriate definition of Wightman distributions, and, of course, the
inclusion of non-trivial interactions, that will be discussed elsewhere.
The corresponding theory for non-elliptic quadratic forms $\mathfrak{q}$, though
much more difficult, is also of interest.

\section{\label{Section_1}Preliminaries}

\label{se2} Along this article $p$ will denote a prime number different from
$2$. Due to physical considerations we will formulate all our results in
dimension $4$, however, many of our results are still valid in arbitrary dimension.

\subsection{The field of $p-$adic numbers}

In this section we summarize the essential aspects and basic results on
$p-$adic analysis that we will use through the article. For a detailed
exposition of $p-$adic analysis the reader may consult \cite{A-K-S,Taibleson,V-V-Z}.

The field of $p-$adic numbers $\mathbb{Q}_{p}$ is defined as the completion of
the field of rational numbers $\mathbb{Q}$ with respect to the $p-$adic norm
$|\cdot|_{p}$, which in turn is defined as
\[
|x|_{p}=%
\begin{cases}
0 & \text{if }x=0\\
p^{-\gamma} & \text{if }x=p^{\gamma}\dfrac{a}{b}\,,
\end{cases}
\]
where $a$ and $b$ are integers coprime with $p$. The integer $\gamma:=ord(x)
$, with $ord(0):=+\infty$, is called the \emph{$p-$adic order} of
$x$. Any $p-$adic number $x\neq0$ has a unique expansion of the form
\begin{equation}
x=p^{ord(x)}\sum_{j=0}^{\infty}x_{j}p^{j}\,,\label{eq_expansion}%
\end{equation}
where $x_{j}\in\{0,\dots,p-1\}$ and $x_{0}\neq0$. Any non-zero $p-$adic number
$x$\ can be written uniquely as $x=p^{ord(x)}ac\left(  x\right)  $, with
$\left\vert ac\left(  x\right)  \right\vert _{p}=1$, $ac\left(  x\right)  $ is
called \textit{the angular component} of $x$.

By using expansion (\ref{eq_expansion}), we define \textit{the fractional part
of }$x\in\mathbb{Q}_{p}$, denoted $\{x\}_{p}$, as the rational number
\[
\{x\}_{p}=%
\begin{cases}
0 & \text{if }x=0\text{ or }ord(x)\geq0\\
p^{ord(x)}\sum_{j=0}^{-ord(x)-1}x_{j}p^{j} & \text{if }ord(x)<0\,.
\end{cases}
\]
As a topological space $\mathbb{Q}_{p}$\ is homeomorphic to a Cantor-like
subset of the real line, see e.g. \cite{A-K-S,V-V-Z}. The balls and
spheres are compact subsets.

We extend the $p-$adic norm to $\mathbb{Q}_{p}^{4}$ by taking%
\[
||x||_{p}:=\max_{0\leq i\leq3}|x_{i}|_{p},\qquad\text{for }x=(x_{0}%
,x_{1},x_{2},x_{3})\in\mathbb{Q}_{p}^{4}.
\]
We define $ord(x)=\min_{0\leq i\leq3}\{ord(x_{i})\}$, then $||x||_{p}%
=p^{-ord(x)}$. The metric space $\left(  \mathbb{Q}_{p}^{4},||\cdot
||_{p}\right)  $ is a complete ultrametric space. Thus $(\mathbb{Q}_{p}%
^{4},\Vert\cdot\Vert_{p})$ is a locally compact topological space.

For $l\in\mathbb{Z}$, denote by $B_{l}^{4}(a)=\{x\in\mathbb{Q}_{p}%
^{4}:||x-a||_{p}\leq p^{l}\}$ \textit{the ball of radius }$p^{l}$ \textit{with
center at} $a=(a_{0},a_{1},a_{2},a_{3})\in\mathbb{Q}_{p}^{4}$, and take
$B_{l}^{4}(0):=B_{l}^{4}$. Note that $B_{l}^{4}(a)=B_{l}(a_{0})\times
\cdots\times B_{l}(a_{3})$, where $B_{l}(a_{i}):=\{x\in\mathbb{Q}_{p}%
:|x-a_{i}|_{p}\leq p^{l}\}$ is the one-dimensional ball of radius $p^{l}$ with
center at $a_{i}\in\mathbb{Q}_{p}$. The ball $B_{0}^{4}$ equals the product of
four copies of $B_{0}:=\mathbb{Z}_{p}$, \textit{the ring of }$p-$\textit{adic
integers}. For $l\in\mathbb{Z}$, denote by $S_{l}^{4}(a)=\{x\in\mathbb{Q}%
_{p}^{4}:||x-a||_{p}=p^{l}\}$ \textit{the sphere of radius }$p^{l}$
\textit{with center at} $a\in\mathbb{Q}_{p}^{4}$, and take $S_{l}%
^{4}(0):=S_{l}^{4}$.

\begin{remark}
The natural map $\mathbb{Z}_{p}\rightarrow\mathbb{Z}_{p}/p\mathbb{Z}_{p}%
\simeq\mathbb{F}_{p}$, where $\mathbb{F}_{p}$ is the finite field with $p $
elements, is called the reduction modulo $p$, denoted as $\overline{\cdot}$.
We will identify $\mathbb{F}_{p}=\left\{  \overline{0},\overline{1}%
,\ldots,\overline{p-1}\right\}  $, where the addition and multiplication are
defined modulo $p$. We will distinguish between $\left\{  0,1,\ldots
,p-1\right\}  \subset\mathbb{Z}_{p}$ and $\mathbb{F}_{p}$. Later on, we will
also use the symbol `$\overline{\cdot}$' to mean conjugation of complex
numbers, but it will clear from the context which case it is being used.
\end{remark}

\begin{note} Let us collect here some conventions. 
\begin{enumerate}[(i)]
\item We denote by $\Omega(\left\Vert x\right\Vert _{p})$ the characteristic
function of $B_{0}^{4}$. For more general sets, say Borel sets, we use
${\LARGE 1}_{A}\left(  x\right)  $ to denote the characteristic function of
$A$.

\item From now on, we denote by $d^{4}x$ the Haar measure of the locally
compact group $\left(  \mathbb{Q}_{p}^{4},+\right)  $ normalized so that the
volume of $\ \mathbb{Z}_{p}^{4}$ is one.

\item We will use \ the notation $x=\left(  x_{0},x_{1},x_{2},x_{3}\right)
=\left(  x_{0},\boldsymbol{x}\right)  \in\mathbb{Q}_{p}\times\mathbb{Q}%
_{p}^{3}$ from now up to Section \ref{Section 5.5}.
\end{enumerate}
\end{note}

\subsection{Some function spaces}

\subsubsection{The Bruhat-Schwartz space}

We take $\mathbb{K}$ to mean $\mathbb{R}$ or $\mathbb{C}$. A $\mathbb{K}%
$-valued function $\varphi$ defined on $\mathbb{Q}_{p}^{4}$ is \textit{called
locally constant,} if for any $x\in\mathbb{Q}_{p}^{4}$ there exists an integer
$l(x)\in\mathbb{Z}$ such that%
\begin{equation}
\varphi(x+x^{\prime})=\varphi(x)\text{ for }x^{\prime}\in B_{l(x)}%
^{4}.\label{local_constancy}%
\end{equation}
A function $\varphi:\mathbb{Q}_{p}^{4}\rightarrow\mathbb{K}$ is called a
\textit{Bruhat-Schwartz function (or a test function),} if it is locally
constant with compact support. The $\mathbb{K}$-vector space of
Bruhat-Schwartz functions is denoted by $\mathcal{D}_{\mathbb{K}}%
(\mathbb{Q}_{p}^{4}):=\mathcal{D}_{\mathbb{K}}$. Let $\mathcal{D}_{\mathbb{K}%
}^{^{\prime}}(\mathbb{Q}_{p}^{4}):=\mathcal{D}_{\mathbb{K}}^{^{\prime}}$
denote the space of all continuous functionals (distributions) on $\mathcal{D}%
_{\mathbb{K}}$. The space $\mathcal{D}_{\mathbb{K}}^{^{\prime}}$ coincides
with the algebraic dual of $\mathcal{D}_{\mathbb{K}}$, i.e. any linear
functional on $\mathcal{D}_{\mathbb{K}}$ is continuous. For an in-depth
discussion the reader may consult \cite{A-K-S}, \cite{Taibleson}, \cite{V-V-Z}.

\begin{remark}
Most of the time we will work in dimension four, with spaces like
$\mathcal{D}_{\mathbb{K}}(\mathbb{Q}_{p}^{4})$ and $\mathcal{D}_{\mathbb{K}%
}^{^{\prime}}(\mathbb{Q}_{p}^{4})$, in these cases \ we will use the
abbreviated notation $\mathcal{D}_{\mathbb{K}}$, $\mathcal{D}_{\mathbb{K}%
}^{^{\prime}}$. In a few occasions we will work in dimensions different from
$4$, then we will use the notation $\mathcal{D}_{\mathbb{K}%
}(\mathbb{Q}_{p}^{n})$, $\mathcal{D}_{\mathbb{K}}^{\prime}(\mathbb{Q}_{p}%
^{n})$. A similar rule will be used for other function spaces.
\end{remark}

\subsubsection{The spaces $L^{r}$}

Given $r\in\left[  1,+\infty\right)  $, we denote by $L_{\mathbb{K}}%
^{r}\left(  \mathbb{Q}_{p}^{4},d^{4}x\right)  \allowbreak:=L_{\mathbb{K}}^{r}
$, the $\mathbb{K}$-vector space of all the $\mathbb{K}$-valued functions $g$
satisfying $\int_{\mathbb{Q}_{p}^{4}}\left\vert g\left(  x\right)  \right\vert
^{r}d^{4}x\allowbreak<\infty$.

\subsection{\label{Sect_Fourier_Trans}Fourier transform}

Set $\chi_{p}(y)=\exp(2\pi i\{y\}_{p})$ for $y\in\mathbb{Q}_{p}$. The map
$\chi_{p}(\cdot)$ is an additive character on $\mathbb{Q}_{p}$, i.e. a
continuous map from $\mathbb{Q}_{p}$ into the unit circle satisfying $\chi
_{p}(y_{0}+y_{1})=\chi_{p}(y_{0})\chi_{p}(y_{1})$, $y_{0},y_{1}\in
\mathbb{Q}_{p}$.

We set
\[
\mathfrak{B}\left(  x,y\right)  =x_{0}y_{0}-sx_{1}y_{1}-px_{2}y_{2}%
+spx_{3}y_{3},
\]
where $s\in\mathbb{Z}$ is a quadratic non-residue module $p$, i.e. the
congruence $x^{2}\equiv s$ $\operatorname{mod}$ $p$ does not have solution.
Then \ $\mathfrak{B}\left(  x,y\right)  $ is a symmetric non-degenerate
$\mathbb{Q}_{p}-$bilinear form on $\mathbb{Q}_{p}^{4}\times\mathbb{Q}_{p}^{4}%
$, and
\[
\mathfrak{q}(x):=\mathfrak{B}\left(  x,x\right)  =x_{0}^{2}-sx_{1}^{2}%
-px_{2}^{2}+spx_{3}^{2}\text{, }x\in\mathbb{Q}_{p}^{4}%
\]
is a \textit{non-degenerate quadratic form} on $\mathbb{Q}_{p}^{4}$.\ In
addition, $\mathfrak{q}(x)$ is the unique (up to linear equivalence)
\textit{elliptic quadratic form} in dimension four, here elliptic means that
$\mathfrak{q}(x)=0\Leftrightarrow x=0$ (notice that this is not equivalent
to the non-degeneracy of $\mathfrak{B}$, as the equation $\mathfrak{q}(x)=0$
could have its own solutions, not coming from vectors orthogonal to all the
vectors in $\mathbb{Q}_{p}^{4}$).

We identify the $\mathbb{Q}_{p}-$vector space $\mathbb{Q}_{p}^{4}$ with its
algebraic dual $\left(  \mathbb{Q}_{p}^{4}\right)  ^{\ast}$ by means of
$\mathfrak{B}\left(  \cdot,\cdot\right)  $. We now identify the dual group
(i.e. the Pontryagin dual) of $\left(  \mathbb{Q}_{p}^{4},+\right)  $ with
$\left(  \mathbb{Q}_{p}^{4}\right)  ^{\ast}$ by taking $x^{\ast}\left(
x\right)  =\chi_{p}\left(  \mathfrak{B}\left(  x,x^{\ast}\right)  \right)  $.
The Fourier transform \ is defined by%
\[
(\mathcal{F}g)(k)=%
{\displaystyle\int\limits_{\mathbb{Q}_{p}^{4}}}
g\left(  x\right)  \chi_{p}\left(  \mathfrak{B}\left(  x,k\right)  \right)
d\mu\left(  x\right)  \text{,}\quad\text{for }g\in L^{1}_\mathbb{C},
\]
where $d\mu\left(  x\right)  $ is a Haar measure on $\mathbb{Q}_{p}^{4}$. Let
$\mathcal{L}\left(  \mathbb{Q}_{p}^{4}\right)  $ be the space of
complex-valued continuous functions $g$ in $L_{\mathbb{C}}^{1}$ whose Fourier
transform $\mathcal{F}g$ is integrable. The measure $d\mu\left(  x\right)  $
can be normalized uniquely in such manner that
\[
(\mathcal{F}(\mathcal{F}g))(x)=g(-x)\text{ for every }g\text{ belonging to
}\mathcal{L}\left(  \mathbb{Q}_{p}^{4}\right)  .
\]
We say that $d\mu\left(  x\right)  $ is \textit{\ a self-dual measure
relative\ to} $\chi_{p}\left(  \mathfrak{B}\left(  \cdot,\cdot\right)
\right)  $. Notice that $d\mu\left(  x\right)  =C(\mathfrak{q})d^{4}x$ where
$C(\mathfrak{q})$ is a positive constant and $d^{4}x$ is the normalized\ Haar
measure on $\mathbb{Q}_{p}^{4}$. For further details about the material
presented in this section the reader may consult \cite{We1}.

We will also use the notation $\mathcal{F}_{x\rightarrow\xi}g$ and
$\widehat{g}$\ for the Fourier transform of $g$. The Fourier transform
$\mathcal{F}\left[  T\right]  $ of a distribution $T\in\mathcal{D}%
_{\mathbb{C}}^{^{\prime}}$ is defined by%
\[
\left(  \mathcal{F}\left[  T\right]  ,\varphi\right)  =\left(
T,\mathcal{F\varphi}\right)  \text{ for all }\varphi\in\mathcal{D}%
_{\mathbb{C}}\text{.}%
\]
The Fourier transform $T\rightarrow\mathcal{F}\left[  T\right]  $ is a linear
isomorphism from $\mathcal{D}_{\mathbb{C}}^{^{\prime}}$\ onto itself.
Furthermore, $T\left(  \xi\right)  =\mathcal{F}\left[  \mathcal{F}\left[
T\right]  \left(  -\xi\right)  \right]  $.

\begin{note}
Along this article we will use the notation $\mathfrak{q}(x)=x_{0}%
^{2}-\mathfrak{q}_{0}(\boldsymbol{x})$, where $\mathfrak{q}_{0}(\boldsymbol{x}%
)=sx_{1}^{2}+px_{2}^{2}-spx_{3}^{2}$ is an elliptic quadratic form. The
bilinear form corresponding to $\mathfrak{q}_{0}$ will be denoted $\mathfrak{B}%
_{0}(\cdot,\cdot)$. Then $\mathfrak{B}\left(  x,y\right)  =x_{0}%
y_{0}-\mathfrak{B}_{0}(\boldsymbol{x},\boldsymbol{y})$.
\end{note}

\subsection{The $p-$adic Minkowski space}

Take $\mathfrak{q}(x)$ as before, and define%
\[
G=\left[
\begin{array}
[c]{cccc}%
1 & 0 & 0 & 0\\
0 & -s & 0 & 0\\
0 & 0 & -p & 0\\
0 & 0 & 0 & sp
\end{array}
\right]  .
\]
Then $\mathfrak{q}(x)=x^\top Gx$, where $\top$ denotes the transpose of a matrix,
and $x$ is identified with the column vector $\left[  x_{0},x_{1},x_{2}%
,x_{3}\right]  ^\top$. The orthogonal group of $\mathfrak{q}$ is defined as
\begin{align}
\boldsymbol{O}(\mathfrak{q}) &  =\{\Lambda\in GL_{4}(\mathbb{Q}_{p}%
);\mathfrak{B}\left(  \Lambda x,\Lambda y\right)  =\mathfrak{B}\left(
x,y\right)  \}\nonumber\\
&  =\{\Lambda\in GL_{4}(\mathbb{Q}_{p});\Lambda^\top G\Lambda=G\}.\nonumber
\end{align}
Notice that any $\Lambda\in\boldsymbol{O}(\mathfrak{q})$ satisfies
$\det\Lambda=\pm1$. We call the quadratic space $(\mathbb{Q}_{p}%
^{4},\mathfrak{q})$ \textit{the }$p$\textit{-adic Minkowski space}, and we
define \textit{the }$p$\textit{-adic Lorentz group} to be $\boldsymbol{O}%
(\mathfrak{q})$. Later on, we will introduce the $p-$adic restricted Lorentz
group and the $p-$adic restricted \ Poincar\'{e} group.

\begin{remark}
Special relativity in the $p-$adic framework was discussed in
\cite{EU66}, however, our definitions of Lorentz group and `light cones' are
completely different to the ones used in this article. In \cite{Var et
al}-\cite{Var et al 2}, the authors investigated the representations of the
$p-$adic Poincar\'{e} group, our notion of Lorentz group agrees with the one
used in these works.
\end{remark}

\subsection{The Dirac distribution supported on a hypersurface}

Take $\mathfrak{f}\in\mathbb{Q}_{p}\left[  x_{0},x_{1},x_{2},x_{3}\right]  $
to be a non-constant polynomial. The hypersurface attached to $\mathfrak{f}$
is the set%
\[
H:=H(\mathfrak{f})=\left\{  x\in\mathbb{Q}_{p}^{4};\mathfrak{f}(x)=0\right\}  .
\]
We say that $H$ is a \textit{non-singular hypersurface,} if
\begin{equation}
\nabla\mathfrak{f}(x)\neq0\text{ for any }x\in H.\label{smooth_condition}%
\end{equation}
By using the $p-$adic implicit function theorem, see e.g. \cite{Igusa},
\cite{Serre}, one shows, like in the case $\mathbb{R}^{4}$, that $H$ is a
$p-$adic manifold embedded in $\mathbb{Q}_{p}^{4}$. More exactly, $H$ is a
closed submanifold of $\mathbb{Q}_{p}^{4}$ (which is a $p-$adic manifold of
dimension $4$) of codimension $1$. For further details about $p-$adic
manifolds the reader may consult \cite{Igusa}, \cite{Serre}.

The condition (\ref{smooth_condition}) implies the existence of a $3$-form
$\lambda$ (whose restriction to $H$ is unique) satisfying%
\begin{equation}
dx_{0}\wedge dx_{1}\wedge dx_{2}\wedge dx_{3}=d\mathfrak{f}\wedge
\lambda.\label{3_form_labmda}%
\end{equation}
Usually $\lambda$ is called a \textit{Gel'fand-Leray form} for $H$. We
denote by $d\lambda$ the measure induced by $\lambda$ on $H$. For the details
about the construction of $d\lambda$, the reader may consult \cite[Chapter
7]{Igusa}. This construction is similar to one done in the real case,
\cite[Chapter III]{G-S}.

The linear functional%
\[%
\begin{array}
[c]{ccc}%
\mathcal{D}_{\mathbb{K}} & \rightarrow & \mathbb{K}\\
&  & \\
\varphi & \rightarrow & \left(  \delta_{H},\varphi\right)  =\int
\limits_{H}\varphi\left(  x\right)  d\lambda
\end{array}
\]
gives rise to a distribution $\mathcal{D}_{\mathbb{K}}^{\prime}$, which is
called \textit{the Dirac distribution} $\delta_{H}$\ supported on $H$.

Denote $\mathbb{Q}_{p}^{\times}=\mathbb{Q}_{p}-\{0\}$. For $t\in\mathbb{Q}_{p}^{\times}$, we set
\[
V_{t}:=V_{t}(\mathfrak{q})=\{x\in\mathbb{Q}_{p}^{4};\mathfrak{q}(x)=t\}.
\]
Then $V_{t}$ is a non-singular hypersurface in $\mathbb{Q}_{p}^{4}$. The
orthogonal group $\boldsymbol{O}(\mathfrak{q})$ acts transitively on $V_{t}$.
On each non-empty orbit $V_{t}$ there is a non-zero, po\-sitive measure which
is invariant under $\boldsymbol{O}(\mathfrak{q})$ and unique up to
multiplication by a positive constant, see \cite[Proposition 2-2]%
{Rallis-Schiffman}.

For each $t\in\mathbb{Q}_{p}^{\times}$, let $d\mu_{t}$ be a measure on $V_{t}
$ invariant under $\boldsymbol{O}(\mathfrak{q})$. Since $V_{t}$ is closed in
$\mathbb{Q}_{p}^{4}$, it is possible to consider $d\mu_{t}$ as a measure on
$\mathbb{Q}_{p}^{4}$ supported on $V_{t}$, and by the using the Caratheodory
theorem, we can identify $d\mu_{t}$ with a positive distribution, i.e. if
$\phi$ is a non-negative function, then $\left(  d\mu_{t},\phi\right)  \geq0$.
The Rallis-Schiffman result above mentioned can be reformulated as follows: on
each non-empty orbit $V_{t}$ there is a non-zero, positive distribution which
is invariant under $\boldsymbol{O}(\mathfrak{q})$ and unique up to
multiplication by a positive constant.

Now, since $\delta_{V_{t}}$ is invariant under $\boldsymbol{O}(\mathfrak{q})
$, see \cite[Lemma 156]{Zuniga-LNM-2016} for a similar calculation, we
conclude that $d\mu_{t}$ agrees (up to a positive constant) with
$\delta_{V_{t}}$. From now on we identify $\delta_{V_{t}}$ with $d\mu_{t}$.

\begin{note}
From now on, we will use $\delta\left(  \mathfrak{f}\right)  $ to
denote the Dirac distribution supported on the non-singular hypersurface
attached to the polynomial $\mathfrak{f}$.
\end{note}

\subsection{The spaces $\mathcal{H}_{\infty}$}

The Bruhat-Schwartz space $\mathcal{D}_{\mathbb{K}}$ is not invariant under
the action of pseudodifferential operators. In \cite{Zuniga-FAA-2017}, see
also \cite[Chapter 10]{KKZuniga}, the third author introduced a class of
nuclear countably Hilbert spaces which are invariant under the action of a
large class of pseudo-differential operators. In this section, we review some
basic results about these spaces that we will use in the remaining sections.

\begin{note}
We set $\mathbb{R}_{+}:=\{x\in\mathbb{R}:x\geq0\}$, $[\xi]_{p}:=\max
(1,\Vert\xi\Vert_{p})$ and consider $\mathbb{N}$ to be the set of non-negative
integers.
\end{note}

We define for $f,g\in\mathcal{D}_{\mathbb{K}}$, with $\mathbb{K}%
=\mathbb{R},\mathbb{C}$, the following scalar product:%
\[
\langle f,g\rangle_{l}:=\int_{\mathbb{Q}_{p}^{4}}[\xi]_{p}^{l}\widehat{f}%
(\xi)\overline{\widehat{g}(\xi)}d^{4}\xi,
\]
for $l\in\mathbb{N}$, where the bar denotes the complex conjugate. We also set
$\Vert f\Vert_{l}^{2}=\langle f,f\rangle_{l}$. Notice that $\Vert\cdot
\Vert_{l}\leq\Vert\cdot\Vert_{m}$ for $l\leq m$. Let denote by $\mathcal{H}%
_{l}(\mathbb{Q}_{p}^{4},\mathbb{K})=:\mathcal{H}_{l}(\mathbb{K})$ the
completion of $\mathcal{D}_{\mathbb{K}}$ with respect to $\langle\cdot
,\cdot\rangle_{l}$. Then $\mathcal{H}_{m}(\mathbb{K})\hookrightarrow
\mathcal{H}_{l}(\mathbb{K})$ is a continuous embedding for $l\leq m$. We set
\[
\mathcal{H}_{\infty}(\mathbb{Q}_{p}^{4},\mathbb{K}):=\mathcal{H}_{\infty
}(\mathbb{K})=\bigcap_{l\in\mathbb{N}}\mathcal{H}_{l}(\mathbb{K}).
\]
Notice that $\mathcal{H}_{0}(\mathbb{K})=L_{\mathbb{K}}^{2}$ and that
$\mathcal{H}_{\infty}(\mathbb{K})\subset L_{\mathbb{K}}^{2}$. With the
topology induced by the family of seminorms $\Vert\cdot\Vert_{l}$,
$\mathcal{H}_{\infty}(\mathbb{K})$ becomes a locally convex space, which is
metrizable. Indeed,%
\[
d(f,g):=\underset{l\in\mathbb{N}}{max}\left\{  2^{-l}\dfrac{\Vert f-g\Vert
_{l}}{1+\Vert f-g\Vert_{l}}\right\}  \text{, for }f\text{, }g\in
\mathcal{H}_{\infty}(\mathbb{K})\text{,}%
\]
is a metric for the topology of the convex topological space $\mathcal{H}%
_{\infty}(\mathbb{K})$. A sequence $\{f_{l}\}_{l\in\mathbb{N}}\in
(\mathcal{H}_{\infty}(\mathbb{K}),d)$ converges to $f\in\mathcal{H}_{\infty
}(\mathbb{K})$, if and only if, $\{f_{l}\}_{l\in\mathbb{N}}$ converges to $f $
in the norm $\Vert\cdot\Vert_{l}$ for all $l\in\mathbb{N}$. From this
observation it follows that the topology of $\mathcal{H}_{\infty}(\mathbb{K})$
coincides with the projective limit topology $\tau_{P}$. An open neighborhood
base at zero of $\tau_{P}$ is given by the choice of $\epsilon>0$ and
$l\in\mathbb{N} $, and the sets
\[
U_{\epsilon,l}:=\{f\in\mathcal{H}_{\infty}(\mathbb{K}):\Vert f\Vert
_{l}<\epsilon\}.
\]
The space $\mathcal{H}_{\infty}(\mathbb{K})$ endowed with the topology
$\tau_{P}$ is a countably Hilbert space in the sense of Gel'fand and Vilenkin,
see e.g. \cite[ Chapter I, Section 3.1]{Gel-Vil} or \cite[Section 1.2]{Obata}.
Furthermore $(\mathcal{H}_{\infty}(\mathbb{K}),\tau_{P})$ is metrizable and
complete and hence a Fr\'{e}chet space, cf. \cite[Lemma 3.3]{Zuniga-FAA-2017}.
In addition, the completion of the metric space $(\mathcal{D}_{\mathbb{K}%
}(\mathbb{Q}_{p}^{4}),d)$ is $(\mathcal{H}_{\infty}(\mathbb{K}),d)$, and this
space is a nuclear countably Hilbert space, see \cite[Lemma 3.4, Theorem
3.6]{Zuniga-FAA-2017} or \cite[Chapter 10]{KKZuniga}.

For $m\in\mathbb{N}$ and $T\in\mathcal{D}_{\mathbb{K}}^{\prime}$, we set
\[
\Vert T\Vert_{-m}^{2}:=\int_{\mathbb{Q}_{p}^{4}}[\xi]_{l}^{-m}|\widehat{T}%
(\xi)|^{2}d^{4}\xi.
\]
Then $\mathcal{H}_{-m}(\mathbb{K}):=\mathcal{H}_{-m}(\mathbb{Q}_{p}%
^{4},\mathbb{K})=\{T\in\mathcal{D}_{\mathbb{K}}^{\prime};\Vert T\Vert_{-m}%
^{2}<\infty\}$ is a Hilbert space over $\mathbb{K}$. Denote by $\mathcal{H}%
_{m}^{\ast}\left(  \mathbb{K}\right)  $ the strong dual space of
$\mathcal{H}_{m}\left(  \mathbb{K}\right)  $. It is useful to suppress the
correspondence between $\mathcal{H}_{m}^{\ast}\left(  \mathbb{K}\right)  $ and
$\mathcal{H}_{m}\left(  \mathbb{K}\right)  $ given by the Riesz theorem.
Instead we identify $\mathcal{H}_{m}^{\ast}\left(  \mathbb{K}\right)  $ and
$\mathcal{H}_{-m}\left(  \mathbb{K}\right)  $ by associating $T\in
\mathcal{H}_{-m}\left(  \mathbb{K}\right)  $ with the functional on
$\mathcal{H}_{m}\left(  \mathbb{K}\right)  $ given by
\begin{equation}
\lbrack T,g]:=\int_{\mathbb{Q}_{p}^{4}}\overline{\widehat{T}(\xi)}\widehat
{g}(\xi)d^{4}\xi.\label{ec[T,g]}%
\end{equation}
Notice that $|[T,g]|\leq\Vert T\Vert_{-m}\Vert g\Vert_{m}$. Now by a
well-known result in the theory of countable Hilbert spaces, see
\cite{Gel-Vil}, $\mathcal{H}_{0}^{\ast}\left(  \mathbb{K}\right)
\subset\mathcal{H}_{1}^{\ast}\left(  \mathbb{K}\right)  \subset\ldots
\subset\mathcal{H}_{m}^{\ast}\left(  \mathbb{K}\right)  \subset\ldots$ and
\begin{equation}
\mathcal{H}_{\infty}^{\ast}\left(  \mathbb{K}\right)  =\bigcup_{m\in
\mathbb{N}}\mathcal{H}_{-m}\left(  \mathbb{K}\right)  =\{T\in\mathcal{D}%
_{\mathbb{K}}^{\prime};\Vert T\Vert_{-l}<\infty,\text{ for some }%
l\in\mathbb{N}\}\label{dual_space}%
\end{equation}
as vector spaces. Since $\mathcal{H}_{\infty}\left(  \mathbb{K}\right)  $ is a
nuclear space, the weak and strong convergence are equivalent in
$\mathcal{H}_{\infty}^{\ast}\left(  \mathbb{K}\right)  $, see e.g.
\cite{Gel-Vil}. We consider $\mathcal{H}_{\infty}^{\ast}\left(  \mathbb{K}%
\right)  $ endowed with the strong topology. On the other hand, let
$B:\mathcal{H}_{\infty}^{\ast}\left(  \mathbb{K}\right)  \times\mathcal{H}%
_{\infty}\left(  \mathbb{K}\right)  \rightarrow\mathbb{K}$ be a bilinear
functional. Then $B$ is continuous in each of its arguments if and only if
there exist norms $\Vert\cdot\Vert_{m}^{(a)}$ in $\mathcal{H}_{m}^{\ast
}\left(  \mathbb{K}\right)  $ and $\Vert\cdot\Vert_{l}^{(b)}$ in
$\mathcal{H}_{l}\left(  \mathbb{K}\right)  $ such that $|B(T,g)|\leq
M\Vert T\Vert_{m}^{(a)}\Vert g\Vert_{l}^{(b)}$ with $M$ a positive
constant independent of $T$ and $g$, see e.g. \cite{Gel-Vil}. This implies
that (\ref{ec[T,g]}) is a continuous bilinear form on $\mathcal{H}_{\infty
}^{\ast}\left(  \mathbb{K}\right)  \times\mathcal{H}_{\infty}\left(
\mathbb{K}\right)  $, which we will use as a paring between $\mathcal{H}%
_{\infty}^{\ast}\left(  \mathbb{K}\right)  $ and $\mathcal{H}_{\infty}\left(
\mathbb{K}\right)  $.

\begin{remark}
The spaces $\mathcal{H}_{\infty}\left(  \mathbb{K}\right)  \subset
L_{\mathbb{K}}^{2}\subset\mathcal{H}_{\infty}^{\ast}\left(  \mathbb{K}\right)
$ form a Gel'fand triple (also called a rigged Hilbert space), i.e.
$\mathcal{H}_{\infty}\left(  \mathbb{K}\right)  $ is a nuclear space which is
densely and continuously embedded in $L^{2}_\mathbb{K}$ and $\Vert g\Vert_{L_{\mathbb{K}%
}^{2}}^{2}=[g,g]$. This Gel'fald triple was introduced in
\cite{Zuniga-FAA-2017}.
\end{remark}

The following result will be used later on:

\begin{lemma}
\label{lemma4}With the above notation, the following assertions hold:
\begin{enumerate}[(i)]
\item $\mathcal{H}_{l}(\mathbb{K})=\{f\in L_{\mathbb{K}}^{2};\Vert f\Vert
_{l}<\infty\}=\{T\in\mathcal{D}_{\mathbb{K}}^{\prime};\Vert T\Vert_{l}%
<\infty\};$

\item $\mathcal{H}_{\infty}(\mathbb{K})=\{f\in L_{\mathbb{K}}^{2};\Vert
f\Vert_{l}<\infty$, for any $l\in\mathbb{N}\}$;

\item $\mathcal{H}_{\infty}(\mathbb{K})=\{T\in\mathcal{D}_{\mathbb{K}}%
^{\prime};\Vert T\Vert_{l}<\infty$, for any $l\in\mathbb{N}\}$.
\end{enumerate}
\end{lemma}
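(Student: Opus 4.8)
The three statements will all be consequences of one identification: transporting everything through the Fourier transform turns each $\mathcal{H}_l(\mathbb{K})$ into a weighted $L^2$ space, after which (i)--(iii) can be read off. Throughout, $l\in\mathbb{N}$, so the weight satisfies $[\xi]_p^l\geq 1$. The plan is first to settle (i), since (ii) and (iii) follow from it by intersecting over $l$.

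For (i), I would begin by writing $\|f\|_l^2=\int_{\mathbb{Q}_p^4}[\xi]_p^l\,|\widehat f(\xi)|^2\,d^4\xi$, so that $f\mapsto\widehat f$ is, by construction, an isometry from $(\mathcal{D}_{\mathbb{K}},\|\cdot\|_l)$ into the weighted space $L^2(\mathbb{Q}_p^4,[\xi]_p^l\,d^4\xi)=:L^2(w_l)$. Because $[\xi]_p^l\geq 1$ we have $L^2(w_l)\subseteq L^2_{\mathbb{K}}$; combining Plancherel (which holds since $\mathcal{H}_0(\mathbb{K})=L^2_{\mathbb{K}}$) with Fourier inversion shows that $\mathcal{F}$ is a bijective isometry of $X_l:=\{f\in L^2_{\mathbb{K}}:\|f\|_l<\infty\}$, normed by $\|\cdot\|_l$, onto $L^2(w_l)$. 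In particular $X_l$ is complete, and $\mathcal{D}_{\mathbb{K}}\subseteq X_l$ because $\widehat{\mathcal{D}_{\mathbb{K}}}=\mathcal{D}_{\mathbb{K}}$ consists of compactly supported functions over which $\int [\xi]_p^l|\widehat f|^2$ is finite. Since the completion $\mathcal{H}_l(\mathbb{K})$ of $\mathcal{D}_{\mathbb{K}}$ in $\|\cdot\|_l$ is nothing but the closure of $\mathcal{D}_{\mathbb{K}}$ inside the complete space $X_l$, the only thing left for (i) is to show that $\mathcal{D}_{\mathbb{K}}$ is dense in $X_l$.

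The density is the step requiring actual work, and I expect it to be the main obstacle. Via $\mathcal{F}$ it is equivalent to the density of $\widehat{\mathcal{D}_{\mathbb{K}}}=\mathcal{D}_{\mathbb{K}}$ in $L^2(w_l)$, which I would prove by truncation and localization: given $g\in L^2(w_l)$, the truncations $g\cdot 1_{B_N^4}$ converge to $g$ in $L^2(w_l)$ as $N\to\infty$ by dominated convergence, and on $B_N^4$ one has $1\leq [\xi]_p^l\leq p^{Nl}$, so the weighted and unweighted $L^2$-norms are equivalent for functions supported there. As $\mathcal{D}_{\mathbb{K}}$ is dense in $L^2(B_N^4,d^4\xi)$, it approximates $g\cdot 1_{B_N^4}$ in $L^2(w_l)$ as well. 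This establishes $\mathcal{H}_l(\mathbb{K})=X_l$; the argument is clean precisely because in the $p$-adic setting the balls $B_N^4$ are compact-open and $[\xi]_p^l$ is locally constant and bounded on each of them. For the second equality in (i), note that if $T\in\mathcal{D}'_{\mathbb{K}}$ has $\|T\|_l<\infty$ then $[\xi]_p^l\geq 1$ forces $\widehat T\in L^2(w_l)\subseteq L^2$, whence $T=\mathcal{F}^{-1}\widehat T\in L^2_{\mathbb{K}}$; the reverse inclusion is trivial, so $\{T\in\mathcal{D}'_{\mathbb{K}}:\|T\|_l<\infty\}=X_l$. Finally, (ii) and (iii) are immediate: intersecting each description in (i) over all $l\in\mathbb{N}$ and using $\mathcal{H}_\infty(\mathbb{K})=\bigcap_l\mathcal{H}_l(\mathbb{K})$ yields both characterizations of $\mathcal{H}_\infty(\mathbb{K})$ at once.
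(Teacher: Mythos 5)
Your proposal is correct in structure, and it is worth noting that the paper itself contains no proof of Lemma \ref{lemma4}: it simply defers to \cite[Lemma 3.2]{Zuniga-PNUAA-2017} and \cite[Lemma 10.8]{KKZuniga}. Your argument is the natural self-contained one: transport everything through $\mathcal{F}$, identify $X_l:=\{f\in L^2_{\mathbb{K}}:\Vert f\Vert_l<\infty\}$ isometrically with the weighted space $L^2([\xi]_p^l\,d^4\xi)$ (surjectivity using $[\xi]_p^l\geq 1$), transfer completeness back, invoke uniqueness of completions to identify $\mathcal{H}_l(\mathbb{K})$ with the closure of $\mathcal{D}_{\mathbb{K}}$ inside $X_l$, and prove density by truncation to balls, where the weight is bounded above and below so the weighted and flat $L^2$ norms are equivalent. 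The truncation step is indeed clean in the $p$-adic setting because the balls $B_N^4$ are compact open and locally constant compactly supported functions are dense in $L^2(B_N^4)$. The distributional characterization and the passage from (i) to (ii)--(iii) by intersecting over $l$ are also handled correctly.

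There is one oversight, confined to the case $\mathbb{K}=\mathbb{R}$: the identity $\widehat{\mathcal{D}_{\mathbb{K}}}=\mathcal{D}_{\mathbb{K}}$ holds only for $\mathbb{K}=\mathbb{C}$. The Fourier transform of a real-valued test function is in general complex-valued (it satisfies the Hermitian symmetry $\widehat{f}(-\xi)=\overline{\widehat{f}(\xi)}$), so your density argument, as written, proves that $\mathcal{D}_{\mathbb{C}}$ is dense in $L^2([\xi]_p^l\,d^4\xi)$, i.e.\ it settles the complex case but not literally the real one. The patch is routine: given a real-valued $f\in X_l$ and complex test functions $\varphi_n\rightarrow f$ in $\Vert\cdot\Vert_l$, one has $\Vert\operatorname{Re}\varphi_n-f\Vert_l=\Vert\operatorname{Re}(\varphi_n-f)\Vert_l\leq\Vert\varphi_n-f\Vert_l$, where the inequality uses $\Vert\overline{\psi}\Vert_l=\Vert\psi\Vert_l$ (valid because $\widehat{\overline{\psi}}(\xi)=\overline{\widehat{\psi}(-\xi)}$, $[-\xi]_p=[\xi]_p$, and the Haar measure is invariant under $\xi\rightarrow-\xi$); hence $\operatorname{Re}\varphi_n\in\mathcal{D}_{\mathbb{R}}$ also converges to $f$. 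With this one additional remark, your proof covers both values of $\mathbb{K}$ and is complete.
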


For the proof the reader may consult (\cite[Lemma 3.2]{Zuniga-PNUAA-2017}) or
\cite[Lemma 10.8]{KKZuniga}.

\section{Fundamental Solutions for Pseudo-differential Operators of Klein-Gordon Type}

\subsection{Some preliminary results}

For $\alpha>0$, $m\in\mathbb{Q}_{p}^{\times}$, and $\mathfrak{q}$ as before,
we define the following pseudo-differential operator:
\begin{equation}
\square_{\mathfrak{q},\alpha,m}=\mathcal{F}^{-1}\circ|\mathfrak{q}-m^{2}%
|_{p}^{\alpha}\circ\mathcal{F},\label{ec Klein-Gordon}%
\end{equation}
where $|\mathfrak{q}-m^{2}|_{p}^{\alpha}$ denotes the multiplication operator
by the function $|\mathfrak{q}-m^{2}|_{p}^{\alpha}$. We call operators of type
(\ref{ec Klein-Gordon}), $p$\textit{-adic Klein-Gordon pseudo-differential
operators}. These operators were introduced by Z\'{u}\~{n}iga-Galindo, see
\cite[Chapter 6]{Zuniga-LNM-2016} and the references therein.

In this section, we consider operators $\square_{q,\alpha,m}$ with domain
\[
Dom(\square_{\mathfrak{q},\alpha,m})=\{T\in\mathcal{D}_{\mathbb{C}}^{\prime
}:|\mathfrak{q}-m^{2}|_{p}^{\alpha}\mathcal{F}T\in\mathcal{D}_{\mathbb{C}%
}^{\prime}\}.
\]

\begin{remark}
Notice that
\[
\square_{\mathfrak{q},\alpha,m}\left( T\left(  mx\right)\right)  =|m|_{p}^{2\alpha}\left(
\square_{\mathfrak{q},\alpha,1}T\right)  \left(  mx\right)  \text{ for any
}T\in Dom(\square_{\mathfrak{q},\alpha,m}).
\]
Consequently, we \ may normalize the mass $m$ to one. From now on we assume
that $m=1$, and we use the notation $\square_{\mathfrak{q},\alpha}$ instead of
$\square_{\mathfrak{q},\alpha,1}$.
\end{remark}

\begin{definition}
We say that $E_{\mathfrak{q},\alpha}\in\mathcal{D}_{\mathbb{C}}^{\prime}$ is a
fundamental solution for
\begin{equation}
\square_{\mathfrak{q},\alpha}u=\varphi,\label{defsolfundamental}%
\end{equation}
if $u=E_{\mathfrak{q},\alpha}\ast\varphi$ is a solution of
(\ref{defsolfundamental}) in $\mathcal{D}_{\mathbb{C}}^{\prime}$, for any
$\varphi\in\mathcal{D}_{\mathbb{C}}$.
\end{definition}

From now on, by an abuse of language, we will say that $E_{\mathfrak{q},\alpha}$
is a fundamental solution of $\square_{\mathfrak{q},\alpha}$.

\begin{lemma}
\label{LemaFS} $E_{\mathfrak{q},\alpha}$ is a fundamental solution of
$\ \square_{\mathfrak{q},\alpha}$ if and only if
\begin{equation}
|\mathfrak{q}-1|_{p}^{\alpha}\mathcal{F}(E_{\mathfrak{q},\alpha}%
)=1\label{ec11FS}%
\end{equation}
in $\mathcal{D}_{\mathbb{C}}^{\prime}$.
\end{lemma}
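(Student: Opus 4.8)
The plan is to transport the whole statement to the Fourier side, where $\square_{\mathfrak{q},\alpha}$ becomes multiplication by $|\mathfrak{q}-1|_{p}^{\alpha}$, and to read off (\ref{ec11FS}) as the necessary and sufficient condition for the convolution $E_{\mathfrak{q},\alpha}\ast\varphi$ to invert the operator. First I would unwind the definition: $E_{\mathfrak{q},\alpha}$ is a fundamental solution iff $\square_{\mathfrak{q},\alpha}(E_{\mathfrak{q},\alpha}\ast\varphi)=\varphi$ for every $\varphi\in\mathcal{D}_{\mathbb{C}}$. Applying $\mathcal{F}$ to both sides, and using the inversion formula $T(\xi)=\mathcal{F}[\mathcal{F}[T](-\xi)]$ recorded above together with the fact that $\mathcal{F}$ is an isomorphism of $\mathcal{D}_{\mathbb{C}}^{\prime}$, this is equivalent to
\[
|\mathfrak{q}-1|_{p}^{\alpha}\,\mathcal{F}(E_{\mathfrak{q},\alpha}\ast\varphi)=\mathcal{F}\varphi\quad\text{in }\mathcal{D}_{\mathbb{C}}^{\prime}.
\]

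The key step is the convolution theorem $\mathcal{F}(E_{\mathfrak{q},\alpha}\ast\varphi)=\mathcal{F}(E_{\mathfrak{q},\alpha})\,\mathcal{F}\varphi$, valid for a distribution convolved with a test function. Since $\mathcal{F}$ preserves $\mathcal{D}_{\mathbb{C}}$, the factor $\psi:=\mathcal{F}\varphi$ is again a test function, so $\mathcal{F}(E_{\mathfrak{q},\alpha})\psi$ is a genuine distribution and the displayed identity reads $(S-1)\psi=0$ in $\mathcal{D}_{\mathbb{C}}^{\prime}$, where $S:=|\mathfrak{q}-1|_{p}^{\alpha}\mathcal{F}(E_{\mathfrak{q},\alpha})$. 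As $\varphi$ ranges over $\mathcal{D}_{\mathbb{C}}$ and $\mathcal{F}$ is onto $\mathcal{D}_{\mathbb{C}}$, the test function $\psi$ is arbitrary, so I must show that $(S-1)\psi=0$ for all $\psi\in\mathcal{D}_{\mathbb{C}}$ is equivalent to $S=1$. One implication is trivial; for the other I would use a localization argument: given $\eta\in\mathcal{D}_{\mathbb{C}}$, choose $\psi\in\mathcal{D}_{\mathbb{C}}$ with $\psi\equiv1$ on the compact set $\mathrm{supp}\,\eta$, so that $(S-1,\eta)=(S-1,\psi\eta)=((S-1)\psi,\eta)=0$; hence $S-1=0$, which is exactly (\ref{ec11FS}). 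Since every link in this chain is an equivalence, both directions of the lemma follow at once.

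The main obstacle will be the careful justification of the products involving $|\mathfrak{q}-1|_{p}^{\alpha}$ rather than the formal algebra above. This function is continuous but \emph{not} locally constant, since it vanishes precisely on the hypersurface $V_{1}=\{\mathfrak{q}=1\}$; consequently it is not a multiplier of $\mathcal{D}_{\mathbb{C}}$ or of $\mathcal{D}_{\mathbb{C}}^{\prime}$, and the product $|\mathfrak{q}-1|_{p}^{\alpha}\mathcal{F}(E_{\mathfrak{q},\alpha})$ is not automatically defined. What legitimizes the manipulation is the very definition of $Dom(\square_{\mathfrak{q},\alpha})$, which demands that $|\mathfrak{q}-1|_{p}^{\alpha}\mathcal{F}T$ be a distribution, and the observation that once the test factor $\mathcal{F}\varphi$ is present the remaining object is compactly supported, so the continuous multiplier acts on it without difficulty. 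I would therefore verify the associativity $|\mathfrak{q}-1|_{p}^{\alpha}(\mathcal{F}(E_{\mathfrak{q},\alpha})\mathcal{F}\varphi)=(|\mathfrak{q}-1|_{p}^{\alpha}\mathcal{F}(E_{\mathfrak{q},\alpha}))\mathcal{F}\varphi$ and the distributional convolution theorem in this $p$-adic setting before drawing the conclusion.
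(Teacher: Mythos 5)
Your proof is correct and follows essentially the same route as the paper's: pass to the Fourier side via the convolution theorem, reduce the problem to $\left(|\mathfrak{q}-1|_{p}^{\alpha}\mathcal{F}(E_{\mathfrak{q},\alpha})-1\right)\cdot\mathcal{F}\varphi=0$ for all test functions $\varphi$, and rely on commutativity/associativity of the distributional product (which the paper justifies by citing Taibleson, Theorem 3.19) for the converse direction. The only addition is your localization argument (choosing $\psi\equiv 1$ on $\mathrm{supp}\,\eta$), which merely makes explicit the step the paper compresses into ``which implies (\ref{ec11FS})''.
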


\begin{proof}
If $E_{\mathfrak{q},\alpha}$ is a fundamental solution of $\ \square
_{\mathfrak{q},\alpha}$, then
\[
\left(  |\mathfrak{q}-1|_{p}^{\alpha}\mathcal{F}(E_{\mathfrak{q},\alpha
})-1\right) \cdot \mathcal{F}\varphi=0,
\]
for any test function in $\mathcal{D}_{\mathbb{C}}$, which implies
(\ref{ec11FS}). Now, if (\ref{ec11FS}) holds, by using the fact that the
product of two distributions, if it exists, is commutative and associative
(see e.g. \cite[p. 127. Theorem 3.19]{Taibleson}), we get that
\[
\left(  |\mathfrak{q}-1|_{p}^{\alpha}\mathcal{F}\varphi\right)\cdot 
\mathcal{F}(E_{\mathfrak{q},\alpha})=\mathcal{F}\varphi
\]
for any test function $\varphi$.
\end{proof}

\subsection{The $p-$adic submanifold $V$}

Since $\mathfrak{q}(k)=k_{0}^{2}-sk_{1}^{2}-pk_{2}^{2}+spk_{3}^{2}$, where
$s\in\mathbb{Z}_{p}^{\times}=\mathbb{Z}_{p}-\{0\}$ a quadratic non-residue $\operatorname{mod}$
$p$, is an elliptic quadratic form (i.e. $\mathfrak{q}(k)=0\Leftrightarrow
k=0$), we have
\begin{equation}
|\mathfrak{q}(k)|_{p}\geq\left( \underset{x\in S_{0}^{4}}{\inf}|\mathfrak{q}%
(x)|_{p}\right) \Vert k\Vert_{p}^{2},\label{desigualdaLNM}%
\end{equation}
see e.g. \cite[Lemma 25]{Zuniga-LNM-2016}. Set
\[
V:=\{k=(k_{0},\boldsymbol{k})\in\mathbb{Q}_{p}\times\mathbb{Q}_{p}%
^{3};\mathfrak{q}(k)=1\}.
\]
By using (\ref{desigualdaLNM}), and the fact that $\underset{x\in S_{0}^{4}%
}{\inf}|\mathfrak{q}(x)|_p=p^{-1}$, we get that $V\subseteq\mathbb{Z}_{p}^{4}$,
which implies that $V$ is a compact submanifold of $\mathbb{Z}_{p}^{4}$\ of
codimension $1$. Let us emphasize that $V$
is bounded (in contrast to the classical case). 
Given $(\widetilde{k}_{0},\widetilde{\boldsymbol{k}})\in V$ with $\widetilde{k}_{0}\neq 0$,
by applying the $p-$adic implicit function theorem, see e.g. \cite{Igusa},
there exist open \ and compact subsets $U_{j}^{0}\subset\mathbb{Z}_{p}$,
$U_{j}^{1}\subset\mathbb{Z}_{p}^{3}$ such that $(\widetilde{k}_{0}%
,\widetilde{\boldsymbol{k}})\in U_{j}=U_{j}^{0}\times U_{j}^{1}$, and a
$p-$adic analytic function $h_{j}(\boldsymbol{x}):U_{j}^{1}\rightarrow
U_{j}^{0} $ such that%
\[
V\cap U_{j}=\left\{  (k_{0},\boldsymbol{k})\in U_{j};k_{0}=h_{j}%
(\boldsymbol{k})\right\}  .
\]
Notice that $k_{0}=-h_{j}(\boldsymbol{k})$ is also a `local parametrization'
of $V$. By using the compactness of $V$, there exists a finite number of
analytic functions $\pm h_{j}(\boldsymbol{k}):U_{j}^{1}\rightarrow\pm
U_{j}^{0}$, $j=1,\ldots,N$ such that%
\begin{align*}
V &  =\bigsqcup\limits_{j=1}^{N}\left\{  (k_{0},\boldsymbol{k})\in U_{j}%
^{0}\times U_{j}^{1};k_{0}=h_{j}(\boldsymbol{k})\right\}  \bigsqcup\\
&  \bigsqcup\limits_{j=1}^{N}\left\{  (k_{0},\boldsymbol{k})\in-U_{j}%
^{0}\times U_{j}^{1};k_{0}=-h_{j}(\boldsymbol{k})\right\}\bigsqcup W\,,
\end{align*}
where $W=\{(0,\boldsymbol{k}):\mathfrak{q}_0(\boldsymbol{k})=1\}$.
We set $U_{\mathfrak{q}}:=\bigsqcup\limits_{j=1}^{N}U_{j}^{1}\subset
\mathbb{Z}_{p}^{3}$. We now define in $U_{\mathfrak{q}}$, two analytic
functions as follows:%
\[%
\begin{array}
[c]{cccc}%
U_{\mathfrak{q}} & \rightarrow & \mathbb{Q}_{p} & \\
&  &  & \\
\boldsymbol{k} & \rightarrow & \pm\sqrt{1+sk_{1}^{2}+pk_{2}^{2}-spk_{3}^{2}%
}=: & \pm\sqrt{\omega\left(  \boldsymbol{k}\right)  },
\end{array}
\]
where $\pm\sqrt{\omega\left(  \boldsymbol{k}\right)  }\mid_{U_{j}^{1}}=\pm
h_{j}(\boldsymbol{k})$.

\subsubsection{\label{Sect_positivity}A notion of positivity}

We set $\mathbb{F}_{p}^{\times}=\left[  \mathbb{F}_{p}^{\times}\right]
_{+}\bigsqcup\left[  \mathbb{F}_{p}^{\times}\right]  _{-}$, where $\left[
\mathbb{F}_{p}^{\times}\right]  _{+}:=\left\{  \overline{1},\ldots,\overline{\frac{p-1}{2}}\right\}
$ and 
$\left[\mathbb{F}_{p}^{\times}\right]_{-}=\left\{\overline{\frac{p+1}{2}},\ldots,\overline{p-1}\right\}$. We define the elements of $\left[  \mathbb{F}%
_{p}^{\times}\right]  _{+}$ as \textit{positive} and the elements of $\left[
\mathbb{F}_{p}^{\times}\right]  _{-}$\ as \textit{negative}. Notice that since
$p\neq2$,%
\[%
\begin{array}
[c]{cccc}%
\left[  \mathbb{F}_{p}^{\times}\right]  _{+} & \rightarrow & \left[
\mathbb{F}_{p}^{\times}\right]  _{-} & \\
&  &  & \\
\overline{y} & \rightarrow & -\overline{y}\ \mathrm{mod}\ p &
\end{array}
\]
is a bijection. Now, we say that \ a non-zero $p-$adic number
\[
a=p^{-L}\left(  a_{0}+a_{1}p+\ldots\right)  \text{, with }L\in\mathbb{Z}\text{
and }a_{0}\neq0,
\]
is \textit{positive (denoted as }$a>0$\textit{)} if $\overline{a}_{0}%
\in\left[  \mathbb{F}_{p}^{\times}\right]  _{+}$, otherwise we say that $a$ is
\textit{negative (denoted as }$a<0$\textit{)}. This is a well-defined and
useful notion of `positivity' in $\mathbb{Q}_{p}^{\times}$, however, this
notion of positivity is not compatible with the field operations,
consequently, this notion does not give rise to an order in $\mathbb{Q}%
_{p}^{\times}$. We also recall that in the case $p\neq2$, the equation $x^{2}=a$
has two solutions in $\mathbb{Q}_{p}$ if an only $L$ is even and the
congruence $z^{2}\equiv \overline{a_0}$ $\operatorname{mod}$ $p$ has two solutions, one
\ in $\left[  \mathbb{F}_{p}^{\times}\right]  _{+}$ and the other in $\left[
\mathbb{F}_{p}^{\times}\right]  _{-}$, we denote them as $\pm\sqrt{a_{0}}%
\in\mathbb{F}_{p}^{\times}$. Then
\[
x=p^{-\frac{L}{2}}\left(  \sqrt{a_{0}}+b_{1}p+b_{2}p^{2}+\ldots\right)  ,
\]
where the $b$'s are recursively determined by $\sqrt{a_{0}}$, i.e. $b_{1}%
=f_{1}\left(  \sqrt{a_{0}}\right)  $, $b_{2}=\allowbreak f_{2}\left(
\sqrt{a_{0}},b_{1}\right)  ,\ldots$, and
\begin{align*}
-x &  =-p^{-\frac{L}{2}}\left(  \sqrt{a_{0}}+b_{1}p+b_{2}p^{2}+\ldots\right)
\\
&  =p^{-\frac{L}{2}}\left(  p-\sqrt{a_{0}}+\left(  p-1-b_{1}\right)  p+\left(
p-1-b_{2}\right)  p^{2}+\ldots\right)  .
\end{align*}
We now define%
\begin{align*}
V^{+} &  =\left\{  (k_{0},\boldsymbol{k})\in V;k_{0}>0\text{ and }k_{0}%
=\sqrt{\omega\left(  \boldsymbol{k}\right)  }\right\}\,,\\
V^{-} &  =\left\{  (k_{0},\boldsymbol{k})\in V;k_{0}<0\text{ and }k_{0}%
=-\sqrt{\omega\left(  \boldsymbol{k}\right)  }\right\}  .
\end{align*}
We call $V^{+}$ \textit{the positive mass shell} and $V^{-}$ the
\textit{negative mass shell}. Therefore%
\[
V=V^{+}\bigsqcup V^{-}\bigsqcup W\,.
\]
Consequently, $W$ has $d\lambda$-measure zero, so $\int_{W}\varphi d\lambda\equiv0$
for any $\varphi\in\mathcal{D}_{\mathbb{C}}$.

\subsection{The distributions $\delta_{V^{\pm}}$}

\begin{remark}
\label{note_q}Set $\mathfrak{q}(k_{0},\boldsymbol{k}):=k_{0}^{2}%
-\mathfrak{q}_{0}(\boldsymbol{k})$, then
\[
W=\left\{  (k_{0},\boldsymbol{k})\in\mathbb{Z}_{p}^{4};\mathfrak{q}%
(0,\boldsymbol{k})=1\right\}  =\left\{  \boldsymbol{k}\in\mathbb{Z}_{p}%
^{3};-\mathfrak{q}_{0}(\boldsymbol{k})=1\right\}  .
\]
A necessary and sufficient condition to have \ $W\neq\emptyset$ is that%
\begin{equation}
-\mathfrak{q}_{0}(\boldsymbol{k})\equiv1\text{ }\operatorname{mod}\text{
}p\text{ \ i.e. }-sk_{1}^{2}\equiv1\operatorname{mod}\text{ }p\text{.}%
\label{condition_W}%
\end{equation}
The sufficiency of condition (\ref{condition_W}) follows from the
Hensel-Newton lemma, see e.g. \cite[Lemma 1]{Greeberg}. The existence of
solutions for congruence (\ref{condition_W}) requires the computation of the
following Legendre symbol:%
\[
\left(  \frac{-s^{-1}}{p}\right)  =
\begin{cases}
1 \mbox{ if congruence } \eqref{condition_W} \mbox{ has a solution,}\\[6pt]
-1 \mbox{ if congruence } \eqref{condition_W} \mbox{ has no solution.}%
\end{cases}
\]
By using the fact that the Legendre symbol is a multiplicative function and
that $\left(  \frac{s}{p}\right)  =-1$, we get that%
\[
\left(  \frac{-s^{-1}}{p}\right)  =
\begin{cases}
-1 \mbox{ if }p\equiv 1\ \mathrm{mod}\ 4 \Leftrightarrow W=\emptyset\\[6pt]

1 \mbox{ if }p\equiv 3\ \mathrm{mod}\ 4 \Leftrightarrow W\neq\emptyset\,.
\end{cases}
\]
Taking these results into account, we will set $p\equiv 1\mathrm{mod}\, 4$ from now on,
so $W=\emptyset$.
\end{remark}

We set $\delta_{V}=\delta\left(  \mathfrak{q}-1\right)  $\ as before. The
characteristic functions $1_{V^{\pm}}$ are locally constant functions, so
the product distributions $1_{V^{\pm}}\delta\left(  \mathfrak{q}-1\right)  $
are well-defined. We set $\delta_{V^{\pm}}:=1_{V^{\pm}}\delta\left(
\mathfrak{q}-1\right)  $. Then
\[
\delta_{V}=\delta_{V^{+}}+\delta_{V^{-}}\text{ \ in }\mathcal{D}_{\mathbb{C}%
}^{\prime}\text{.}%
\]
In the open subset of $\mathbb{Q}_{p}^{4}$ defined by $k_{0}\neq0$, the
$3$-form $\lambda$ satisfying (\ref{3_form_labmda}) (with $\mathfrak{f=q}$)
\ is given by%
\[
\lambda=\frac{dk_{1}\wedge dk_{2}\wedge dk_{3}}{2k_{0}},
\]
therefore the corresponding measure is%
\[
d\lambda=\frac{dk_{1}dk_{2}dk_{3}}{\left\vert k_{0}\right\vert _{p}}%
=\frac{d^{3}\boldsymbol{k}}{\left\vert \sqrt{\omega\left(  \boldsymbol{k}%
\right)  }\right\vert _{p}}=\frac{d^{3}\boldsymbol{k}}{\sqrt{\left\vert
1+\mathfrak{q}_{0}\left(  \boldsymbol{k}\right)  \right\vert _{p}}}\text{ for
}\boldsymbol{k}\in U_{\mathfrak{q}}\text{.}%
\]
If $p\equiv1$ $\operatorname{mod}$ $4$, then $\sqrt{\omega\left(
\boldsymbol{k}\right)  }\neq0$ or any $\boldsymbol{k}\in U_{\mathfrak{q}}$,
and
\[
\left(  \delta_{V^{\pm}},\varphi\right)  =\int_{U_{\mathfrak{q}}}%
\varphi\left(  \pm\sqrt{\omega\left(  \boldsymbol{k}\right)  },\boldsymbol{k}%
\right)  \frac{d^{3}\boldsymbol{k}}{\left\vert \sqrt{\omega\left(
\boldsymbol{k}\right)  }\right\vert _{p}}\text{ for any }\varphi\in
\mathcal{D}_{\mathbb{C}}\text{.}%
\]

\begin{remark}
Take $\overline{a}\in\mathbb{F}_{p}^{4}$ satisfying $\mathfrak{q}(\overline
{a})\equiv1$ $\operatorname{mod}$ $p$. Since $\nabla\mathfrak{q}(\overline
{a})\not \equiv 0$ $\operatorname{mod}$ $p$, by the Hensel-Newton lemma, see
e.g. \cite[Lemma 1]{Greeberg}, there exists $b\in\mathbb{Z}_{p}^{4}$ such that
$\mathfrak{q}(b)=1$ and $b\equiv\overline{a}$ $\operatorname{mod}$ $p$. This
$b$ is not unique. We now define the following tubular neighborhood of $V$:%
\[
E_{V}=\underset{\underset{\mathfrak{q}(\overline{a})\equiv1\text{ }mod\text{
}p}{\overline{a}\in\mathbb{F}_{p}^{4}}}{\bigsqcup}b+p\mathbb{Z}_{p}^{4},
\]
where implicitly we are choosing for each $\overline{a}\in\mathbb{F}_{p}^{4} $
a point $b$\ in $V$. Notice that $E_{V}\neq\emptyset$. Indeed, the solution
set of the equation $k_{0}^{2}-sk_{1}^{2}\equiv1$ $\operatorname{mod}$ $p$
contains the set $A:=\left\{  \left(  1,0,\overline{u},\overline{v}\right)
;\overline{u},\overline{v}\in\mathbb{F}_{p}\right\}  $, and the gradient
satisfies \ the condition $\nabla\mathfrak{q}(\overline{y})\not \equiv 0$
$\operatorname{mod}$ $p$, for any $\overline{y}$\ in $A$.
\end{remark}

\begin{lemma}
\label{Lemma5}Let $b=(b_{0},b_{1},b_{2},b_{3})\in V$, with $b_{0}\in
\mathbb{Z}_{p}^{\times}$.Then
\begin{gather*}
(\delta(\mathfrak{q}(k)-1),\phi(k)\Omega(p\Vert k-b\Vert_{p}))=\\
p^{-3}\int_{\mathbb{Z}_{p}^{3}}\phi(b_{0}+pf(0,u_{1},u_{2},u_{3}),b_{1}%
+pu_{1},b_{2}+pu_{2},b_{3}+pu_{3})du_{1}du_{2}du_{3},
\end{gather*}
where $f(0,u_{1},u_{2},u_{3})$ is a $p-$adic analytic function on the ball
$\mathbb{Z}_{p}^{3}$.
\end{lemma}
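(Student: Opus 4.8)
The plan is to compute the pairing of the Dirac distribution $\delta(\mathfrak{q}-1)$ with a test function localized near the point $b\in V$ by explicitly parametrizing the piece of $V$ that meets the small ball $b+p\mathbb{Z}_p^4$. The localizing factor $\Omega(p\Vert k-b\Vert_p)$ is the characteristic function of the ball $b+p\mathbb{Z}_p^4$, so only the portion of $V$ inside this ball contributes, and on that portion I expect the hypersurface to be the graph of a single analytic function of the spatial coordinates. Concretely, I would write $k=(b_0+pt,\,b_1+pu_1,\,b_2+pu_2,\,b_3+pu_3)$ with $t\in\mathbb{Z}_p$ and $\boldsymbol{u}=(u_1,u_2,u_3)\in\mathbb{Z}_p^3$, so that integrating against $\Omega(p\Vert k-b\Vert_p)$ amounts to integrating over $t\in\mathbb{Z}_p$ and $\boldsymbol{u}\in\mathbb{Z}_p^3$, each factor contributing its own scaling.

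The key step is to apply the $p$-adic implicit function theorem in these rescaled coordinates. Since $b\in V$ has $b_0\in\mathbb{Z}_p^\times$, the partial derivative $\partial\mathfrak{q}/\partial k_0=2k_0$ is a unit at $b$ (recall $p\neq2$), so the nonsingularity condition $\nabla\mathfrak{q}\neq0$ holds and, as in the earlier discussion of $V\cap U_j$, the equation $\mathfrak{q}(k)=1$ can be solved locally for $k_0$ as an analytic function of $\boldsymbol{k}$. Rewriting this in the rescaled variables gives $t=f(0,u_1,u_2,u_3)$ for a suitable $p$-adic analytic function $f$ on $\mathbb{Z}_p^3$, so that $k_0=b_0+pf(0,\boldsymbol{u})$. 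I would then invoke the definition of $\delta_H$ via the Gel'fand--Leray form $\lambda$ satisfying $dk_0\wedge dk_1\wedge dk_2\wedge dk_3 = d\mathfrak{q}\wedge\lambda$; restricted to the graph, $\lambda$ reduces to $dk_1\,dk_2\,dk_3/|2k_0|_p$, and with $|2k_0|_p=|b_0|_p=1$ near $b$ this becomes simply $dk_1\,dk_2\,dk_3$ on the unit-scale, hence $p^{-3}\,du_1\,du_2\,du_3$ after the substitution $k_i=b_i+pu_i$. This produces exactly the claimed formula.

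The main obstacle I anticipate is bookkeeping the measure-theoretic normalizations correctly: I must verify that the Gel'fand--Leray measure $d\lambda$ attached to $\delta(\mathfrak{q}-1)$ is precisely $d^3\boldsymbol{k}/|k_0|_p$ (as already recorded in the excerpt for the region $k_0\neq0$), and then track the Jacobian factor $p^{-3}$ arising from the change of variables $\boldsymbol{k}\mapsto b+p\boldsymbol{u}$ together with the normalization of the Haar measure (volume of $\mathbb{Z}_p^4$ equal to one). A secondary subtlety is confirming that within the ball $b+p\mathbb{Z}_p^4$ the hypersurface $V$ is genuinely a single graph with no other sheet intersecting; this follows because the two local parametrizations $k_0=\pm\sqrt{\omega(\boldsymbol{k})}$ differ by a unit (since $b_0\in\mathbb{Z}_p^\times$ forces $\sqrt{\omega(\boldsymbol{k})}\not\equiv-\sqrt{\omega(\boldsymbol{k})}\bmod p$ when $p\neq2$), so the ball of radius $p^{-1}$ about $b$ contains only the sheet through $b$. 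Once these normalizations and the single-sheet property are pinned down, the stated identity is immediate from the explicit parametrization.
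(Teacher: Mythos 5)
Your proposal is correct and follows essentially the same route as the paper: localize to the ball $b+p\mathbb{Z}_p^4$ (where $|k_0|_p=|b_0|_p=1$, so $d\lambda=dk_1dk_2dk_3$ there), rescale by $p$ to produce the factor $p^{-3}$, and use that $\partial\mathfrak{q}/\partial k_0=2k_0$ is a unit at $b$ to realize $V\cap(b+p\mathbb{Z}_p^4)$ as the graph of a single $p$-adic analytic function over $\mathbb{Z}_p^3$. The only cosmetic difference is that you obtain the graph (and its uniqueness, via the observation $b_0\not\equiv-b_0\bmod p$) directly from the implicit function theorem, whereas the paper packages the same fact through the flattening change of variables $u_0=(\mathfrak{q}(b+pz)-1)/p$ together with Igusa's measure-preserving lemma.
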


\begin{proof}
Recall that
\[
(\delta(\mathfrak{q}(k)-1),\phi(k)\Omega(p\Vert k-b\Vert_{p}))=\int
_{V\cap\left(  b+p\mathbb{Z}_{p}^{4}\right)  }\phi(k)\dfrac{dk_{1}dk_{2}%
dk_{3}}{|k_{0}|_{p}}.
\]
Now, by changing variables as $k=b+pz$,
\begin{equation}
(\delta(\mathfrak{q}(k)-1),\phi(k)\Omega(p\Vert k-b\Vert_{p}))=p^{-3}%
\int_{\{\mathfrak{q}(b+pz)=1\}\cap\mathbb{Z}_{p}^{4}}\phi(b+pz)dz_{1}%
dz_{2}dz_{3},\label{ec19_A}%
\end{equation}
where we are assuming that $z_{0}$ is an analytic function of the variables
$z_{1},z_{2},z_{3}$. We set
\begin{equation}
u=F(z)\text{, with }u_{0}=\dfrac{\mathfrak{q}(b+pz)-1}{p},u_{i}=z_{i}\text{
for }i=1,2,3.\label{change_of_var}%
\end{equation}
Then $Jac_{F}(z)\equiv2b_{0}+2pz_{0}\equiv\overline{b}_{0}\not \equiv 0$
$\operatorname{mod}$ $p$, by \cite[Lemma 7.4.3]{Igusa}, $F$ gives rise to an
analytic isomorphism from $\mathbb{Z}_{p}^{4}$ into itself which preserves the
Haar measure, in this coordinate system $\{\mathfrak{q}(b+pz)=1\}\cap
\mathbb{Z}_{p}^{4}$ becomes $\{u_{0}=0\}\times\mathbb{Z}_{p}^{3}$, and
(\ref{ec19_A}) takes the form
\begin{gather}
(\delta(\mathfrak{q}(k)-1),\phi(k)\Omega(p\Vert k-b\Vert_{p}))=\label{ec19}\\
p^{-3}\int_{\mathbb{Z}_{p}^{3}}\phi(b_{0}+pf(0,u_{1},u_{2},u_{3}),b_{1}%
+pu_{1},b_{2}+pu_{2},b_{3}+pu_{3})du_{1}du_{2}du_{3},\nonumber
\end{gather}
where $f(0,u_{1},u_{2},u_{3}):\mathbb{Z}_{p}^{3}\rightarrow\mathbb{Z}_{p}$ is
a $p-$adic analytic function.
\end{proof}

\begin{remark}\label{note6}
Let us comment about some related results.
\begin{enumerate}[(i)]
\item In the case $b_{0}\in p\mathbb{Z}_{p}$, $b_{1}\in
\mathbb{Z}_{p}^{\times}$, a calculation similar to the one done in the
proof of Lemma \ref{Lemma5} shows that%
\begin{gather*}
(\delta(\mathfrak{q}(k)-1),\phi(k)\Omega(p\Vert k-b\Vert_{p}))=\\
p^{-3}\int_{\mathbb{Z}_{p}^{3}}\phi(b_{0}+pu_{0},b_{1}+pg(u_{0},0,u_{2}%
,u_{3}),b_{2}+pu_{2},b_{3}+pu_{3})du_{0}du_{2}du_{3},
\end{gather*}
where $g(u_{0},0,u_{2},u_{3}):\mathbb{Z}_{p}^{3}\rightarrow\mathbb{Z}_{p}$ is
a $p-$adic analytic function.

\item In the case $b_{0}\in p\mathbb{Z}_{p}$, $b_{1}\in p\mathbb{Z}_{p}$,
$b_{2}\in\mathbb{Z}_{p}^{\times}$, we have%
\begin{gather*}
\left\{  \mathfrak{q}(k)=1\right\}  \cap\left[  p\mathbb{Z}_{p}\times
p\mathbb{Z}_{p}\times\left[  b_{2}+p\mathbb{Z}_{p}\right]  \times\left[
b_{3}+p\mathbb{Z}_{p}\right]  \right]  =\\
\left\{  p\left(  pk_{0}^{2}-spk_{1}^{2}-k_{2}^{2}+sk_{3}^{2}\right)
=1\right\}  \cap\left[  \mathbb{Z}_{p}\times\mathbb{Z}_{p}\times\left[
b_{2}+p\mathbb{Z}_{p}\right]  \times\left[  b_{3}+p\mathbb{Z}_{p}\right]
\right]  =\emptyset.
\end{gather*}
A similar result is valid in the cases where $b_{0}\in p\mathbb{Z}_{p}$,
$b_{1}\in p\mathbb{Z}_{p}$,$b_{2}\in p\mathbb{Z}_{p}$, $b_{3}\in\mathbb{Z}%
_{p}^{\times}$, and where $b_{0}\in p\mathbb{Z}_{p}$, $b_{1}\in p\mathbb{Z}%
_{p}$,$b_{2}\in p\mathbb{Z}_{p}$, $b_{3}\in p\mathbb{Z}_{p}$.
\end{enumerate}
\end{remark}

\subsection{Fundamental solutions}

The existence of fundamental solutions for operators $\square_{\mathfrak{q}%
,\alpha}$ is closely related to the meromorphic continuation of the Igusa
local zeta function attached to the polynomial $\mathfrak{q}-1$, which is the
distribution defined as%
\begin{equation}
\left(  |\mathfrak{q}-1|_{p}^{s},\theta\right)  =\int_{\mathbb{Q}_{p}%
^{4}\setminus V}|\mathfrak{q}(x)-1|_{p}^{s}\theta(x)d^{4}x\text{ for
}\operatorname{Re}(s)>0\text{, and }\theta\in\mathcal{D}_{\mathbb{C}%
}.\label{zeta_function}%
\end{equation}
Here we use \ that for $a>0$ and $s\in\mathbb{C}$, $a^{s}=e^{s\ln a}$.
Integrals of type (\ref{zeta_function}) admit meromorphic continuations to the
whole complex plane as rational functions of $p^{-s}$, see \cite[Theorem
8.2.1]{Igusa}.

For further calculations, we rewrite (\ref{zeta_function}) as
\begin{align*}
(|\mathfrak{q}(x)-1|_{p}^{s},\theta(x))  &  =\int_{\mathbb{Q}_{p}^{4}\setminus
E_{V}}|\mathfrak{q}(x)-1|_{p}^{s}\theta(x)d^{4}x+\int_{E_{V}\smallsetminus
V}|\mathfrak{q}(x)-1|_{p}^{s}\theta(x)d^{4}x\\
&  =:(I_{0}\left(  s\right)  ,\theta)+(I_{1}\left(  s\right)  ,\theta).
\end{align*}
A fundamental solution $E_{\mathfrak{q},\alpha}$ for operator $\square
_{\mathfrak{q},\alpha}$ is obtained by computing the Laurent expansion of the
local zeta function $|\mathfrak{q}-1|_{p}^{s}$ at $s=-\alpha$, see
\cite[Theorem 134]{Zuniga-LNM-2016}. Indeed, if
\begin{equation}
|\mathfrak{q}-1|_{p}^{s}=\sum_{j=-j_{0}}^{\infty}c_{j}(s+\alpha)^{j}\text{,
where }c_{j}\in\mathcal{D}_{\mathbb{C}}^{\prime},\text{ with }-j_{0}%
\in\mathbb{Z}\text{,}\label{exapnsinon_local_zeta_function}%
\end{equation}
then $\widehat{E}_{\mathfrak{q},\alpha}=c_{0}$.

\begin{note}
Given two subsets $A$, $B$\ in $\mathbb{Q}_{p}^{4}$, we denote the distance
between them as
\[
dist(A,B):=\inf_{x\in A\text{, }y\in B}\left\Vert x-y\right\Vert _{p}\text{.}%
\]

\end{note}

\begin{lemma}
\label{lemma7}For any $\theta\in\mathcal{D}_{\mathbb{C}}$, the function
$(I_{0}\left(  s\right)  ,\theta)$ is holomorphic in the whole complex plane.
\end{lemma}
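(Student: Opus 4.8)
The plan is to exploit the fact that deleting the tubular neighborhood $E_V$ removes the only possible source of non-holomorphy, namely the blow-up of $|\mathfrak{q}-1|_p^s$ near the zero locus $V=\{\mathfrak{q}=1\}$ when $\operatorname{Re}(s)<0$. First I would note that, since $\theta$ has compact support and $\mathbb{Q}_p^4\setminus E_V$ is closed, the integral $(I_0(s),\theta)$ is really taken over the compact set $D:=(\mathbb{Q}_p^4\setminus E_V)\cap\operatorname{supp}(\theta)$. The whole argument then reduces to controlling $|\mathfrak{q}(x)-1|_p$ on $D$.

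The key lemma is that $|\mathfrak{q}(x)-1|_p\geq 1$ for every $x\in\mathbb{Q}_p^4\setminus E_V$, and I would prove it by a case split. If $\Vert x\Vert_p>1$, then the ellipticity estimate \eqref{desigualdaLNM} together with $\inf_{x\in S_0^4}|\mathfrak{q}(x)|_p=p^{-1}$ gives $|\mathfrak{q}(x)|_p\geq p^{-1}\Vert x\Vert_p^2\geq p$; since $|\mathfrak{q}(x)|_p>1=|1|_p$, the ultrametric inequality forces $|\mathfrak{q}(x)-1|_p=|\mathfrak{q}(x)|_p\geq p$. If instead $x\in\mathbb{Z}_p^4\setminus E_V$, then by the very definition of $E_V$ (which collects exactly the residue disks with $\mathfrak{q}(\overline{a})\equiv 1\bmod p$) one has $\mathfrak{q}(\overline{x})\not\equiv 1\bmod p$, so $\mathfrak{q}(x)-1$ is a unit of $\mathbb{Z}_p$ and $|\mathfrak{q}(x)-1|_p=1$. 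Together these yield the claimed bound.

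With this in hand I would observe that on the compact set $D$ the continuous function $x\mapsto|\mathfrak{q}(x)-1|_p$ is also bounded above, hence takes only finitely many values $p^0,p^1,\ldots,p^m$. Decomposing $D$ into the corresponding Borel level sets $D_j=\{x\in D:|\mathfrak{q}(x)-1|_p=p^{j}\}$, each of finite Haar measure, the integral collapses to the finite exponential sum $(I_0(s),\theta)=\sum_{j=0}^{m}C_j\,p^{js}$, where $C_j=\int_{D_j}\theta(x)\,d^4x\in\mathbb{C}$ is a genuine constant. Since each term $p^{js}=e^{(j\log p)s}$ is entire and the sum is finite, $(I_0(s),\theta)$ is entire, which is precisely the assertion.

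The main obstacle is the case analysis establishing $|\mathfrak{q}(x)-1|_p\geq 1$ off $E_V$; once it is secured, everything else is formal. If one prefers to avoid the finite-value reduction, the same bound shows that $\bigl||\mathfrak{q}(x)-1|_p^{s}\bigr|=|\mathfrak{q}(x)-1|_p^{\operatorname{Re}(s)}$ is dominated, locally uniformly in $s$, by a fixed integrable majorant on $D$ (using that $|\mathfrak{q}-1|_p$ is bounded on $D$), so holomorphy follows directly from Morera's theorem combined with Fubini's theorem.
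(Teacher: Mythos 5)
Your proof is correct, but it follows a genuinely different route from the paper's. The paper establishes only the soft bound $|\mathfrak{q}(x)-1|_{p}\geq\varepsilon>0$ on $\mathbb{Q}_{p}^{4}\setminus E_{V}$, and it does so non-constructively: arguing by contradiction, it reduces the claim to $dist\left(V,\mathbb{Q}_{p}^{4}\setminus E_{V}\right)>0$, which holds because $V$ is compact and $E_{V}$ is open and closed; holomorphy is then obtained by citing the general result on analyticity of parameter-dependent integrals \cite[Lemma 5.3.1]{Igusa}. You instead prove the sharp, explicit bound $|\mathfrak{q}(x)-1|_{p}\geq1$ off $E_{V}$, by splitting into $\Vert x\Vert_{p}>1$ (where ellipticity \eqref{desigualdaLNM} plus the ultrametric equality $|a-b|_{p}=\max(|a|_{p},|b|_{p})$ for $|a|_{p}\neq|b|_{p}$ gives $|\mathfrak{q}(x)-1|_{p}=|\mathfrak{q}(x)|_{p}\geq p$) and $x\in\mathbb{Z}_{p}^{4}\setminus E_{V}$ (where the residue-disk description of $E_{V}$ forces $\mathfrak{q}(x)-1\in\mathbb{Z}_{p}^{\times}$); both steps are sound, since $E_{V}\subseteq\mathbb{Z}_{p}^{4}$ and a disk $b+p\mathbb{Z}_{p}^{4}$ consists exactly of the points reducing to $\overline{b}$ mod $p$. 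You then replace Igusa's lemma by an elementary observation: on the compact set $D=(\mathbb{Q}_{p}^{4}\setminus E_{V})\cap\operatorname{supp}(\theta)$ the function $|\mathfrak{q}-1|_{p}$ takes finitely many values $p^{0},\ldots,p^{m}$ (discreteness of the value group plus the two-sided bound), so $(I_{0}(s),\theta)=\sum_{j=0}^{m}C_{j}p^{js}$ is a finite sum of entire exponentials. What your argument buys is self-containedness and extra information (the function is visibly a polynomial in $p^{s}$, not merely entire); what the paper's argument buys is brevity given the cited machinery, and a compactness argument that does not depend on the particular mod-$p$ structure of the neighborhood $E_{V}$.
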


\begin{proof}
The result follows, by using a well-known result about the analyticity of
integrals depending on a complex parameter, see \cite[Lemma 5.3.1]{Igusa},
from the fact that there exists a positive constant $\varepsilon
=\varepsilon\left(  \mathfrak{q}\right)  $, such that
\begin{equation}
|\mathfrak{q}(x)-1|_{p}\geq\varepsilon\text{ for any }x\in\mathbb{Q}_{p}%
^{4}\setminus E_{V}\text{.}\label{Ec-Def}%
\end{equation}
If (\ref{Ec-Def}) is false, there exists a sequence $\left\{  y_{n}\right\}
_{n\in\mathbb{N}}$ in $\mathbb{Q}_{p}^{4}\setminus E_{V}$ such that
$|\mathfrak{q}(y_{n})-1|_{p}\rightarrow0$ as $n\rightarrow\infty$, which means
that
\begin{equation}
dist\left(  V,\mathbb{Q}_{p}^{4}\setminus E_{V}\right)
=0,\label{assetion distance}%
\end{equation}
because, since $V$ is compact, there exists $x_{0}\in V$ such that
\[
dist\left(  V,\mathbb{Q}_{p}^{4}\setminus E_{V}\right)  =\inf_{y\in
\mathbb{Q}_{p}^{4}\setminus E_{V}}\left\Vert x_{0}-y\right\Vert _{p}%
=\inf_{y\in\mathbb{Q}_{p}^{4}\setminus E_{V}}dist\left(  V,y\right)  \text{.}%
\]

The assertion (\ref{assetion distance}) is not true. Indeed, since $V$ is
compact and $\mathbb{Q}_{p}^{4}\setminus E_{V}$ is closed (because $E_{V}$ is
open and closed), we have $dist\left(  V,\mathbb{Q}_{p}^{4}\setminus
E_{V}\right)  >0$.
\end{proof}

\begin{remark}
\label{note8}%
Notice the following computation:
\begin{align}
\left(  I_{1}(s),\theta\right)   &  =\int_{E_{V}\backslash V}|\mathfrak{q}(x)-1|_{p}%
^{s}\theta(x)d^{4}x=\underset{\underset{\mathfrak{q}(\overline{b}%
)\equiv1\text{ }\operatorname{mod}\text{ }p}{\overline{b}\in\mathbb{F}_{p}%
^{4}}}{\sum}\int_{b+p\mathbb{Z}_{p}^{4}}|\mathfrak{q}(x)-1|_{p}^{s}%
\theta(x)d^{4}x\nonumber\\
&  =p^{-4}\underset{\underset{\mathfrak{q}(\overline{b})\equiv1\text{
}\operatorname{mod}\text{ }p}{\overline{b}\in\mathbb{F}_{p}^{4}}}{\sum}%
\int_{\mathbb{Z}_{p}^{4}}|\mathfrak{q}(b+pz)-1|_{p}^{s}\theta(b+pz)d^{4}%
z\nonumber\\
&  =:p^{-4}\underset{\underset{\mathfrak{q}(\overline{b})\equiv1\text{
}\operatorname{mod}\text{ }p}{\overline{b}\in\mathbb{F}_{p}^{4}}}{\sum}\left(
I_{b}(s),\theta\right)  .\label{Eq10}%
\end{align}

\end{remark}

\begin{lemma}
\label{lemma9}With the above notations and setting%
\[
I_{b}(s)=\sum_{j=0}^{\infty}c_{j}\left(  I_{b},\alpha\right)
(s+\alpha)^{j}\text{, where }c_{j}\left(  I_{b},\alpha\right)  \in
\mathcal{D}_{\mathbb{C}}^{\prime},
\]
for $\overline{b}\in\mathbb{F}_{p}^{4}$, $\mathfrak{q}(\overline{b})\equiv1 $
$\operatorname{mod}$ $p$, the coefficient $c_{0}\in\mathcal{D}_{\mathbb{C}%
}^{\prime}$ in expansion (\ref{exapnsinon_local_zeta_function}) is given by%
\[
\left(  c_{0},\theta\right)  =\int_{\mathbb{Q}_{p}^{4}\setminus E_{V}%
}|\mathfrak{q}(x)-1|_{p}^{-\alpha}\theta(x)d^{4}x+p^{-4}\underset
{\underset{\mathfrak{q}(\overline{b})\equiv1\text{ }\operatorname{mod}\text{
}p}{\overline{b}\in\mathbb{F}_{p}^{4}}}{\sum}\left(  c_{0}\left(  I_{b}%
,\alpha\right)  ,\theta\right)  .
\]

\end{lemma}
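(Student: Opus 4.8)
The plan is to extract $c_{0}$ from the additive splitting $|\mathfrak{q}-1|_{p}^{s}=I_{0}(s)+I_{1}(s)$ introduced just before the statement, exploiting the elementary fact that the coefficient of $(s+\alpha)^{0}$ in a Laurent expansion at $s=-\alpha$ is additive: the constant coefficient of a (finite) sum of meromorphic germs is the sum of the individual constant coefficients. Thus, testing against a fixed $\theta\in\mathcal{D}_{\mathbb{C}}$, the constant coefficient $(c_{0},\theta)$ in the expansion \eqref{exapnsinon_local_zeta_function} is the sum of the $(s+\alpha)^{0}$-coefficients of $(I_{0}(s),\theta)$ and of $(I_{1}(s),\theta)$, and it only remains to identify these two contributions with the two terms in the claimed formula.

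First I would treat the regular piece $I_{0}$. By Lemma~\ref{lemma7}, the function $s\mapsto(I_{0}(s),\theta)$ is holomorphic on all of $\mathbb{C}$, so its expansion at $s=-\alpha$ is an honest Taylor series and its constant coefficient is simply the value $(I_{0}(-\alpha),\theta)$. To see that this value is the announced integral, I would invoke the bound \eqref{Ec-Def}, namely $|\mathfrak{q}(x)-1|_{p}\geq\varepsilon>0$ on $\mathbb{Q}_{p}^{4}\setminus E_{V}$; it gives $|\mathfrak{q}(x)-1|_{p}^{-\alpha}\leq\varepsilon^{-\alpha}$ on that region, and together with the compact support of $\theta$ this guarantees that
\[
(I_{0}(-\alpha),\theta)=\int_{\mathbb{Q}_{p}^{4}\setminus E_{V}}|\mathfrak{q}(x)-1|_{p}^{-\alpha}\theta(x)\,d^{4}x
\]
converges absolutely and indeed coincides with the constant coefficient coming from $I_{0}$.

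Next I would handle the piece $I_{1}$ concentrated near $V$. By the computation recorded in Remark~\ref{note8}, $(I_{1}(s),\theta)=p^{-4}\sum_{\overline{b}}(I_{b}(s),\theta)$, a \emph{finite} sum over those $\overline{b}\in\mathbb{F}_{p}^{4}$ with $\mathfrak{q}(\overline{b})\equiv1\bmod p$. Substituting the prescribed expansions $I_{b}(s)=\sum_{j\geq0}c_{j}(I_{b},\alpha)(s+\alpha)^{j}$ and collecting the $(s+\alpha)^{0}$ terms, the constant coefficient of $(I_{1}(s),\theta)$ equals $p^{-4}\sum_{\overline{b}}(c_{0}(I_{b},\alpha),\theta)$. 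Adding this to the contribution of $I_{0}$ computed above reproduces exactly the asserted expression for $(c_{0},\theta)$.

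Since the genuinely analytic input—the estimate \eqref{Ec-Def} and the resulting entireness of $(I_{0}(s),\theta)$—is already secured by Lemma~\ref{lemma7}, the remaining argument is essentially bookkeeping, and there is no serious obstacle. The only point that deserves a line of care is the coefficientwise extraction from the decomposition of Remark~\ref{note8}: because the sum over $\overline{b}$ is finite, the Laurent germs may be expanded and combined term by term, so that no issue of interchanging a limit with the expansion arises.
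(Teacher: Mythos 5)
Your proposal is correct and follows exactly the paper's route: the paper's own proof is precisely the one-line observation that the formula follows from Lemma~\ref{lemma7} (entireness of $(I_{0}(s),\theta)$, so its constant term at $s=-\alpha$ is the integral over $\mathbb{Q}_{p}^{4}\setminus E_{V}$) together with Remark~\ref{note8} (the finite decomposition $(I_{1}(s),\theta)=p^{-4}\sum_{\overline{b}}(I_{b}(s),\theta)$), with constant coefficients collected additively. Your write-up merely spells out this bookkeeping in detail, which is fine.
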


\begin{proof}
The formula follows from Lemma \ref{lemma7} and Remark \ref{note8}.
\end{proof}

We now compute the coefficients $c_{0}\left(  I_{b},\alpha\right)  $ for some
$b$s, the calculation of the missing cases is similar to the one presented here.

\begin{lemma}
\label{lemma10}Assume that $\overline{b}_{0}\not \equiv 0$ $\operatorname{mod}%
$ $p$. If $\alpha\neq1$, then
\[
\left(  c_{0}\left(  I_{b},\alpha\right)  ,\theta\right)  =p^{\alpha}%
\int_{\mathbb{Z}_{p}}|u_{0}|_{p}^{-\alpha}(\Theta_{b}(u_{0})-\Theta
_{b}(0))du_{0}+\frac{p^{\alpha}(1-p^{-1})}{1-p^{-1+\alpha}}\Theta_{b}(0),
\]
where $\Theta_{b}=T_{I_{b},\alpha}\left(  \theta\right)  \in\mathcal{D}%
_{\mathbb{C}}\left(  \mathbb{Q}_{p}\right)  $, and $T_{I_{b},\alpha}$ is a
linear operator from $\mathcal{D}_{\mathbb{C}}\left(  \mathbb{Q}_{p}%
^{4}\right)  $ into $\mathcal{D}_{\mathbb{C}}\left(  \mathbb{Q}_{p}\right)  $,
and
\[
\Theta_{b}\left(  0\right)  =p^{3}(\delta(\mathfrak{q}(k)-1),\theta
(k)\Omega(p\Vert k-b\Vert_{p})).
\]
In addition,
\begin{equation}
\left(  {\LARGE 1}_{V}c_{0}\left(  I_{b}, \alpha\right)  ,\theta\right)
=\frac{p^{\alpha}(1-p^{-1})}{1-p^{-1+\alpha}}\Theta_{b}(0).\label{for_rest_1}%
\end{equation}
If $\alpha=1$, then
\[
\left(  c_{0}\left(  I_{b},1\right)  ,\theta\right)  =p\int_{\mathbb{Z}%
_{p}^{4}}|u_{0}|_{p}^{-1}(\Theta_{b}(u_{0})-\Theta_{b}(0))du_{0}-\dfrac
{p-1}{2}\Theta_{b}(0).
\]
Moreover,
\begin{equation}
\left(  {\LARGE 1}_{V}c_{0}\left(  I_{b},1\right)  ,\theta\right)
=-\dfrac{p-1}{2}\Theta_{b}(0).\label{for_rest_2}%
\end{equation}

\end{lemma}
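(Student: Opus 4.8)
The plan is to reduce $I_b(s)$ to a one-dimensional Igusa-type integral by means of the measure-preserving analytic change of variables already constructed in Lemma \ref{Lemma5}, and then to read off the coefficient of $(s+\alpha)^0$ in its Laurent expansion at $s=-\alpha$. First I would start from the expression
\[
\left(I_b(s),\theta\right)=\int_{\mathbb{Z}_p^4}|\mathfrak{q}(b+pz)-1|_p^s\,\theta(b+pz)\,d^4z
\]
coming from (\ref{Eq10}), and apply the substitution $u=F(z)$ of (\ref{change_of_var}). Because $\overline{b}_0\not\equiv 0\ \operatorname{mod}\ p$, the Jacobian $Jac_F(z)\equiv\overline{b}_0$ is a unit, so $F$ is an analytic automorphism of $\mathbb{Z}_p^4$ preserving the Haar measure, and $\mathfrak{q}(b+pz)-1=pu_0$. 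Hence $|\mathfrak{q}(b+pz)-1|_p^s=p^{-s}|u_0|_p^s$, and after integrating out $u_1,u_2,u_3$ one obtains
\[
\left(I_b(s),\theta\right)=p^{-s}\int_{\mathbb{Z}_p}|u_0|_p^s\,\Theta_b(u_0)\,du_0,
\]
where $\Theta_b(u_0):=\int_{\mathbb{Z}_p^3}\theta(b+pz(u))\,du_1du_2du_3$ defines the operator $T_{I_b,\alpha}$. The function $\Theta_b$ is a Bruhat--Schwartz function on $\mathbb{Q}_p$: compact support and local constancy are inherited from $\theta$ through the analyticity of $z(u)$.

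Next I would split the one-dimensional zeta integral at the origin,
\[
\int_{\mathbb{Z}_p}|u_0|_p^s\Theta_b(u_0)\,du_0=\int_{\mathbb{Z}_p}|u_0|_p^s\bigl(\Theta_b(u_0)-\Theta_b(0)\bigr)\,du_0+\Theta_b(0)\int_{\mathbb{Z}_p}|u_0|_p^s\,du_0.
\]
By local constancy, $\Theta_b(u_0)-\Theta_b(0)$ vanishes on a neighbourhood of $0$, so the first integral is a finite sum over spheres and therefore entire in $s$; at $s=-\alpha$ it equals $\int_{\mathbb{Z}_p}|u_0|_p^{-\alpha}(\Theta_b(u_0)-\Theta_b(0))\,du_0$. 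The second factor is the elementary Tate integral $\int_{\mathbb{Z}_p}|u_0|_p^s\,du_0=(1-p^{-1})/(1-p^{-s-1})$, meromorphic with a single simple pole at $s=-1$. The coefficient $c_0(I_b,\alpha)$ is then the coefficient of $(s+\alpha)^0$ in the Laurent expansion of $p^{-s}$ times the previous display.

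For $\alpha\neq 1$ the factor $1-p^{-s-1}$ does not vanish at $s=-\alpha$, so $I_b(s)$ is holomorphic there and its value is obtained by direct substitution, yielding the two stated terms (the second being $p^\alpha(1-p^{-1})/(1-p^{-1+\alpha})\cdot\Theta_b(0)$). The delicate point, and the step I expect to be the main obstacle, is the case $\alpha=1$: here $s=-1$ is exactly the pole of the Tate factor, so $I_b(s)$ itself has a simple pole and one must expand $p^{-s}(1-p^{-1})/(1-p^{-s-1})$ in powers of $\varepsilon=s+1$ and extract the constant term. Writing $p^{-s}=p\,p^{-\varepsilon}$ and $1-p^{-s-1}=1-p^{-\varepsilon}=\varepsilon\ln p\,(1-\tfrac{\varepsilon}{2}\ln p+\cdots)$, the cross-term between the pole $1/(\varepsilon\ln p)$ and the linear part of $p^{-\varepsilon}$ contributes $-(p-1)$, while the $\tfrac12$ from the expansion of $(1-p^{-\varepsilon})^{-1}$ contributes $+\tfrac{p-1}{2}$, giving the finite part $-(p-1)/2$ and hence the term $-\tfrac{p-1}{2}\Theta_b(0)$.

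Finally, the restriction formulas (\ref{for_rest_1}) and (\ref{for_rest_2}) follow because in the $u$-coordinates $V$ is the slice $\{u_0=0\}$, and the integral term carrying $\Theta_b(u_0)-\Theta_b(0)$ is supported away from $u_0=0$; multiplication by the characteristic function of $V$ therefore annihilates it, leaving only the $\Theta_b(0)$-contribution. The identity $\Theta_b(0)=p^3(\delta(\mathfrak{q}(k)-1),\theta(k)\Omega(p\Vert k-b\Vert_p))$ is read off by comparing $\Theta_b(0)=\int_{\mathbb{Z}_p^3}\theta(b+pz(0,u_1,u_2,u_3))\,du$ with formula (\ref{ec19}) of Lemma \ref{Lemma5}, in which $z_0(0,\cdot)=f(0,\cdot)$.
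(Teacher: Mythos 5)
Your proof is correct and follows essentially the same route as the paper's: the measure-preserving change of variables $u=F(z)$ from Lemma \ref{Lemma5}, reduction to the one-dimensional integral in $\Theta_b(u_0)$, the split into the entire part (using local constancy of $\Theta_b$ at $0$) plus $\Theta_b(0)$ times the Tate integral $\frac{1-p^{-1}}{1-p^{-1-s}}$, direct evaluation at $s=-\alpha$ for $\alpha\neq 1$, the Laurent expansion with constant term $-\frac{p-1}{2}$ for $\alpha=1$, and the observation that $u_0=0$ is a local equation of $V$ for the restriction formulas \eqref{for_rest_1}--\eqref{for_rest_2}. All computations (Jacobian unit, Tate factor, cross-terms in the expansion at $s=-1$, and the identification of $\Theta_b(0)$ via \eqref{ec19}) agree with the paper's.
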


\begin{proof}
By changing variables as $u=F(z)$, see (\ref{change_of_var}), we get
\begin{multline*}
\left(  I_{b}(s),\theta\right)  =\int_{\mathbb{Z}_{p}^{4}}\left\vert
\mathfrak{q}\left(  b+pz\right)  -1\right\vert _{p}^{s}\theta\left(
b+pz\right)  d^{4}z\\
=p^{-s}\int_{\mathbb{Z}_{p}^{4}}|u_{0}|_{p}^{s}\theta(b_{0}+pf(u_{0}%
,\ldots,u_{3}),b_{1}+pu_{1},b_{2}+pu_{2},b_{3}+pu_{3})du_{0}du_{1}du_{2}du_{3}%
\end{multline*}
where $f(u_{0},\ldots,u_{3})$ is a $p-$adic analytic function on
$\mathbb{Z}_{p}^{4}$. Set
\begin{multline*}
\Theta_{b}(u_{0}):=\\
\int\limits_{\mathbb{Z}_{p}^{3}\setminus D}\theta(b_{0}+pf(u_{0},\ldots
,u_{3}),b_{1}+pu_{1},b_{2}+pu_{2},b_{3}+pu_{3})du_{1}du_{2}du_{3},
\end{multline*}
where $D=\{b_{0}+pf(u_{0},\ldots,u_{3})=0\}$. Then $\Theta_{b}(u_{0}%
)\in\mathcal{D}_{\mathbb{C}}\left(  \mathbb{Q}_{p}\right)  $ and $\Theta
_{b}(0)=p^{3}(\Omega(p\Vert k-b\Vert_{p})\delta(\mathfrak{q}(k)-1),\theta
(k))$, see (\ref{ec19}). Notice that for $u_{0}$ given, the set
$\{b_{0}+pf(u_{0},\ldots,u_{3})=0\}$ has measure zero, and that $b_{0}%
+pf(u_{0},\ldots,u_{3})$ is locally constant in $u_{0}$ on $\mathbb{Z}_{p}%
^{3}\setminus D$, this last fact is verified by using the $p-$adic Taylor
expansion, see e.g. \cite{Serre}. Therefore
\begin{equation}
\left(  I_{b}(s),\theta\right)  =p^{-s}\int_{\mathbb{Z}_{p}^{4}}|u_{0}%
|_{p}^{s}(\Theta_{b}(u_{0})-\Theta_{b}(0))du_{0}+\frac{p^{-s}(1-p^{-1}%
)}{1-p^{-1-s}}\Theta_{b}(0).\label{formula_int}%
\end{equation}
If $\alpha\neq1$, then $\left(  c_{0}\left(  I_{b},\alpha\right)
,\theta\right)  $ is obtained by replacing $s=-\alpha$ in (\ref{formula_int}).
In the case $\alpha=1$, the computation of $\left(  c_{0}\left(
I_{b},1\right)  ,\theta\right)  $ is achieved by computing the Laurent
expansion of $\left(  I_{b}(s),\theta\right)  $ around $(s+1)$, which follows
from the formula:
\begin{equation}
\dfrac{p^{-s}(1-p^{-1})}{1-p^{-1-s}}=\left(  \dfrac{p-1}{\ln p}\right)
\dfrac{1}{s+1}-\dfrac{p-1}{2}+O(s+1),\nonumber
\end{equation}
where $O(s+1)$ denotes a holomorphic function. Finally formulae
(\ref{for_rest_1})-(\ref{for_rest_2}) follow from the fact that in the
coordinate system $(u_{0},\ldots,u_{3})$, $u_{0}=0$ is a local equation of $V$.
\end{proof}

\begin{remark}
\label{note11}Lemma \ref{lemma10} is valid for general $b$, but there are
small variations in the formulae for the $c_{0}\left(  I_{b},\alpha\right)
$s. In the case $\overline{b}_{0}\equiv0$ $\operatorname{mod}$ $p$,
$\overline{b}_{1}\not \equiv 0$ $\operatorname{mod}$ $p$, the statement of
Lemma \ref{lemma10}\ and the corresponding proof are similar to ones presented
here, see Remark \ref{note6}. We outline the calculations for the case
$\overline{b}_{0}\equiv0$ $\operatorname{mod}$ $p$, $\overline{b}_{1}\equiv0$
$\operatorname{mod}$ $p$, $\overline{b}_{2}\not \equiv 0$ $\operatorname{mod}$
$p$. In this case, we use the following change of variables:
\[
u=G(z)\text{ with }%
\begin{array}
[c]{cccc}%
u_{0}=z_{0}, & u_{1}=z_{1}, & u_{2}=\dfrac{\mathfrak{q}(b+pz)-1}{p^{2}}, &
u_{3}=z_{3}.
\end{array}
\]
Then
\[
Jac_{G}(z)=\det\left[
\begin{array}
[c]{cccc}%
1 & 0 & 0 & 0\\
0 & 1 & 0 & 0\\
\frac{1}{p^{2}}\frac{\partial u_{0}}{\partial z_{0}} & \frac{1}{p^{2}}%
\frac{\partial u_{1}}{\partial z_{1}} & \frac{1}{p^{2}}\frac{\partial u_{2}%
}{\partial z_{2}} & \frac{1}{p^{2}}\frac{\partial u_{3}}{\partial z_{3}}\\
0 & 0 & 0 & 1
\end{array}
\right]  =\frac{1}{p^{2}}\frac{\partial u_{2}}{\partial z_{2}}=-2\left(
b_{2}+pz_{2}\right)  ,
\]
and thus $Jac_{G}(z)\equiv\overline{-2b_{2}}\equiv\overline{b_{2}}%
\not \equiv 0$ $\operatorname{mod}$ $p$, and by Lemma 7.4.3 \ in \cite{Igusa},
$G$ gives rise to an analytic isomorphism from $\mathbb{Z}_{p}^{4}$ to itself
which preserves the Haar measure. By changing variables in integral $\left(
I_{b}(s),\theta\right)  $, we get that%
\begin{multline*}
\left(  I_{b}(s),\theta\right)  =\\
p^{-2s}\int_{\mathbb{Z}_{p}^{4}}|u_{2}|_{p}^{s}\theta(b_{0}+pz_{0}%
,b_{1}+pu_{1},b_{2}+ph(u_{0},\ldots,u_{3}),b_{3}+pu_{3})du_{0}du_{1}%
du_{2}du_{3}.
\end{multline*}
Now the calculations proceed as in the proof of Lemma \ref{lemma10} .
\end{remark}

\begin{remark}
\label{note12} Set $\delta_{k}(x):=p^{4k}\Omega(p^{k}\Vert x\Vert_{p})$. We
recall the definition of the product of two distributions: given
$F,G\in\mathcal{D}_{\mathbb{C}}^{\prime}$, their product is defined as
$(F\cdot G,\varphi)=\lim_{k\rightarrow\infty}(G,(F\ast\delta_{k})\varphi)$, if
the limit exist for all $\varphi\in\mathcal{D}_{\mathbb{C}}$. If the product
$F\cdot G$ exists then the product $G\cdot F$ exists and they are equal.
\end{remark}

\begin{lemma}
\label{lemma12} $(|\mathfrak{q}-1|_{p}^{\alpha}\delta(\mathfrak{q}-1),\psi)=0
$ for any $\psi\in\mathcal{D}_{\mathbb{C}}$ and for any $\alpha>0 $.
\end{lemma}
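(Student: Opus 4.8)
The plan is to work directly from the definition of the product of distributions recalled in Remark \ref{note12}. Setting $F=|\mathfrak{q}-1|_{p}^{\alpha}$ (which is locally integrable, since the zeta integral \eqref{zeta_function} converges at $s=\alpha>0$, and hence defines a distribution) and $G=\delta(\mathfrak{q}-1)$, the quantity to compute is
\begin{align*}
(|\mathfrak{q}-1|_{p}^{\alpha}\delta(\mathfrak{q}-1),\psi)&=\lim_{k\to\infty}\left(\delta(\mathfrak{q}-1),(F\ast\delta_{k})\psi\right)\\
&=\lim_{k\to\infty}\int_{V}(F\ast\delta_{k})(y)\,\psi(y)\,d\lambda(y),
\end{align*}
where in the last step I use that $\delta(\mathfrak{q}-1)$ is the Dirac distribution supported on $V$, so pairing it against a test function is integration over $V$ against $d\lambda$ (one checks that $F\ast\delta_{k}$ is locally constant, so $(F\ast\delta_{k})\psi$ is indeed a test function). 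Thus it suffices to show that $F\ast\delta_{k}\to0$ uniformly on $V$ as $k\to\infty$, since then the integral is controlled by $\int_{V}|\psi|\,d\lambda<\infty$ (recall $V$ is compact and $\psi$ is a test function).

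First I would unwind the convolution. Since $\delta_{k}(x)=p^{4k}\Omega(p^{k}\Vert x\Vert_{p})$ is $p^{4k}$ times the characteristic function of the ball $\{\Vert x\Vert_{p}\leq p^{-k}\}$, whose volume is $p^{-4k}$, a change of variables gives
\[
(F\ast\delta_{k})(y)=p^{4k}\int_{\Vert w-y\Vert_{p}\leq p^{-k}}|\mathfrak{q}(w)-1|_{p}^{\alpha}\,d^{4}w,
\]
the average of $F$ over the ball of radius $p^{-k}$ centered at $y$. The crux is a uniform estimate of $|\mathfrak{q}(w)-1|_{p}$ on this ball for $y\in V$. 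Writing $w=y+z$ with $\Vert z\Vert_{p}\leq p^{-k}$ and using $\mathfrak{q}(y)=1$ together with the quadratic identity $\mathfrak{q}(y+z)-\mathfrak{q}(y)=2\mathfrak{B}(y,z)+\mathfrak{q}(z)$, I would bound each term separately. Because $V\subseteq\mathbb{Z}_{p}^{4}$, the coordinates of $y$ lie in $\mathbb{Z}_{p}$, so $|\mathfrak{B}(y,z)|_{p}\leq\Vert z\Vert_{p}\leq p^{-k}$ and $|\mathfrak{q}(z)|_{p}\leq\Vert z\Vert_{p}^{2}\leq p^{-2k}$; since $p\neq2$ one has $|2|_{p}=1$, and the ultrametric inequality yields $|\mathfrak{q}(w)-1|_{p}\leq p^{-k}$ for every $w$ in the ball.

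With this estimate in hand the conclusion is immediate: $|\mathfrak{q}(w)-1|_{p}^{\alpha}\leq p^{-k\alpha}$ on the ball, so
\[
(F\ast\delta_{k})(y)\leq p^{4k}\cdot p^{-k\alpha}\cdot p^{-4k}=p^{-k\alpha}
\]
uniformly for $y\in V$, and hence $\left\vert(|\mathfrak{q}-1|_{p}^{\alpha}\delta(\mathfrak{q}-1),\psi)\right\vert\leq p^{-k\alpha}\int_{V}|\psi|\,d\lambda\to0$ as $k\to\infty$, because $\alpha>0$; this also shows the product exists. The only real obstacle is the uniform vanishing rate in the middle step: one must verify that proximity to $V$ in the $p$-adic metric genuinely forces $\mathfrak{q}-1$ to be small, and this is exactly where the inclusion $V\subseteq\mathbb{Z}_{p}^{4}$ (bounding the linear term $\mathfrak{B}(y,z)$) and the hypothesis $p\neq2$ enter. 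Without the boundedness of $V$ the coefficients in $\mathfrak{B}(y,\cdot)$ could be large and the estimate would break down.
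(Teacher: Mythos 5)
Your proof is correct, and it takes a genuinely different route from the paper's. Both arguments start from the definition of the product of distributions in Remark \ref{note12}, i.e.\ from $\lim_{k\to\infty}(\delta(\mathfrak{q}-1),(|\mathfrak{q}-1|_{p}^{\alpha}\ast\delta_{k})\psi)$, but the key estimate is made at complementary sets of points. The paper evaluates the convolution \emph{off} $V$: for $x\notin V$ one has $(|\mathfrak{q}-1|_{p}^{\alpha}\ast\delta_{k})(x)=|\mathfrak{q}(x)-1|_{p}^{\alpha}$ once $k$ is large (depending on $x$), so the product is identified with the multiplication of $\delta(\mathfrak{q}-1)$ by the continuous function $|\mathfrak{q}-1|_{p}^{\alpha}$, which vanishes on $V=\mathrm{supp}\,\delta(\mathfrak{q}-1)$. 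That identification implicitly requires passing the pointwise (non-uniform near $V$) convergence $F\ast\delta_{k}\to F$ through the pairing with $\delta_{V}$, which is an integral over $V$ itself --- precisely the set excluded by the ``$x\notin V$'' reduction. You instead estimate the convolution \emph{on} $V$: using $\mathfrak{q}(y+z)-1=2\mathfrak{B}(y,z)+\mathfrak{q}(z)$, the inclusion $V\subseteq\mathbb{Z}_{p}^{4}$, $|2|_{p}=1$, and the ultrametric inequality, you get the uniform bound $(F\ast\delta_{k})(y)\leq p^{-k\alpha}$ for all $y\in V$, so each term of the defining limit is at most $p^{-k\alpha}\int_{V}|\psi|\,d\lambda$ and the limit is $0$ outright. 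This buys you a fully self-contained argument with no interchange of limits (and it establishes existence of the product along the way), at the cost of losing the slightly more conceptual statement implicit in the paper's proof, namely that the product is literally ``the function $|\mathfrak{q}-1|_{p}^{\alpha}$ times the measure $d\lambda$''. Your closing observation about where $V\subseteq\mathbb{Z}_{p}^{4}$ and $p\neq 2$ enter is accurate; the only cosmetic blemish is the final display, where the left-hand side does not depend on $k$ --- you should say that the $k$-th term of the limit is bounded by $p^{-k\alpha}\int_{V}|\psi|\,d\lambda$, hence the limit vanishes.
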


\begin{proof}
By Remark \ref{note12}, $(|\mathfrak{q}-1|_{p}^{\alpha}\delta(\mathfrak{q}%
-1),\psi)=\lim_{k\rightarrow\infty}(\delta(\mathfrak{q}-1),(|\mathfrak{q}%
-1|_{p}^{\alpha}\ast\delta_{k})\psi)$. Now
\[
(|\mathfrak{q}-1|_{p}^{\alpha}\ast\delta_{k})(x)=p^{4k}\int_{x+p^{k}%
\mathbb{Z}^{4}_p}|\mathfrak{q}(y)-1|_{p}^{\alpha}d^{4}y.
\]
Since $V\subseteq\mathbb{Z}_{p}^{4}$ has measure zero, we may assume without
loss of generality that $x\notin V$. Now, if $z\in\mathbb{Z}_{p}^{4} $ then
$\mathfrak{q}(x+p^{k}z)-1=\mathfrak{q}(x)-1+p^{k}A$, with $A\in\mathbb{Z}_{p}$
and $\mathfrak{q}(x)-1\neq0$, then by taking $k$ sufficiently large, we have
$|\mathfrak{q}(x+p^{k}z)-1|_{p}^{\alpha}=|\mathfrak{q}(x)-1|_{p}^{\alpha}$,
consequently $(|\mathfrak{q}-1|_{p}^{\alpha}\ast\delta_{k})(x)=|q(x)-1|_{p}%
^{\alpha}$ for $k$ sufficiently large. Finally, $(|\mathfrak{q}-1|_{p}%
^{\alpha}\delta(\mathfrak{q}-1),\psi)=(\delta(\mathfrak{q}-1),|\mathfrak{q}%
-1|_{p}^{\alpha}\psi)=0$ because supp $\delta(\mathfrak{q}-1)=V$.
\end{proof}

\begin{remark}
\label{note13}For any locally constant function $h$, it holds that
$h|\mathfrak{q}-1|_{p}^{\alpha}\delta(\mathfrak{q}-1)\in\mathcal{D}%
_{\mathbb{C}}^{\prime}$, see e.g. \cite[p. 126, Proposition 3.16]{Taibleson}.
Then $(h|\mathfrak{q}-1|_{p}^{\alpha}\delta(\mathfrak{q}-1),\psi
)=(|\mathfrak{q}-1|_{p}^{\alpha}\delta(\mathfrak{q}-1),h\psi)=(\delta
(\mathfrak{q}-1),|\mathfrak{q}-1|_{p}^{\alpha}h\psi)=0$ for any $\psi
\in\mathcal{D}_{\mathbb{C}}$.
\end{remark}

\begin{remark}\label{note14}
Let us make some comments about orthogonal invariance in this setting.
\begin{enumerate}[(i)]
\item Let $\varphi\in\mathcal{D}_{\mathbb{C}}$ and let
$T\in\mathcal{D}_{\mathbb{C}}^{\prime}$. We define the action of $\Lambda
\in\boldsymbol{O}\left(  \mathfrak{q}\right)  $, by putting%
\[
\left(  \Lambda\varphi\right)  \left(  x\right)  =\varphi\left(  \Lambda
^{-1}x\right)  ,
\]
and the action of $\Lambda$\ on $T$, by putting%
\[
\left(  \Lambda T,\varphi\right)  =\left(  T,\Lambda^{-1}\varphi\right)  .
\]
We say that $T$ is invariant under $\boldsymbol{O}\left(  \mathfrak{q}\right)
$, if $\Lambda T=T$ for any $\Lambda\in\boldsymbol{O}\left(  \mathfrak{q}%
\right)  $.

\item $T$ is invariant under $\boldsymbol{O}\left(  \mathfrak{q}\right)
\Leftrightarrow\widehat{T}$ is invariant \ under $\boldsymbol{O}\left(
\mathfrak{q}\right)  $. We first notice that by using $\mathcal{B}\left(
\Lambda^{-1}y,\Lambda^{-1}k\right)  =\mathcal{B}\left(  y,k\right)  $ for any
$\Lambda\in\boldsymbol{O}\left(  \mathfrak{q}\right)  $, we have
\begin{gather*}
\left(  \widehat{\Lambda^{-1}\varphi}\right)  \left(  k\right)  =\int
_{\mathbb{Q}_{p}^{4}}\chi_{p}\left(  \mathcal{B}\left(  x,k\right)  \right)
\left(  \Lambda^{-1}\varphi\right)  \left(  x\right)  d\mu\left(  x\right) \\
=\int_{\mathbb{Q}_{p}^{4}}\chi_{p}\left(  \mathcal{B}\left(  x,k\right)
\right)  \varphi\left(  \Lambda x\right)  d\mu\left(  x\right)  =\int
_{\mathbb{Q}_{p}^{4}}\chi_{p}\left(  \mathcal{B}\left(  \Lambda^{-1}%
y,\Lambda^{-1}\left(  \Lambda k\right)  \right)  \right)  \varphi\left(
y\right)  d\mu\left(  y\right) \\
=\int_{\mathbb{Q}_{p}^{4}}\chi_{p}\left(  \mathcal{B}\left(  y,  \Lambda
k  \right)  \right)  \varphi\left(  y\right)  d\mu\left(  y\right)
=\widehat{\varphi}\left(  \Lambda k\right)  ,
\end{gather*}
i.e. $\left(  \widehat{\Lambda^{-1}\varphi}\right)  =\Lambda^{-1}%
\widehat{\varphi}$. Now, assuming that $\Lambda T=T$ for any $\Lambda
\in\boldsymbol{O}\left(  \mathfrak{q}\right)  $, we have%
\begin{align*}
\left(  \Lambda\widehat{T},\varphi\right)   & =\left(  \widehat{T}%
,\Lambda^{-1}\varphi\right)  =\left(  T,\widehat{\Lambda^{-1}\varphi}\right)
=\left(  T,\Lambda^{-1}\widehat{\varphi}\right)  =\left(  \Lambda
T,\widehat{\varphi}\right) \\
& =\left(  T,\widehat{\varphi}\right)  =(\widehat{T},\varphi).
\end{align*}
Here, it is worth to mention that our definition of Fourier transform using
the bilinear form $\mathcal{B}$ plays a crucial role.

\item By a result of Rallis-Schiffman, the distribution $\delta(\mathfrak{q}%
-1)$ is the unique (up to multiplication by complex constants) distribution
supported on V invariant under $\boldsymbol{O}\left(  \mathfrak{q}\right)  $,
\cite{Rallis-Schiffman}.
\end{enumerate}
\end{remark}

\begin{theorem}
\label{Theorem1}There exist fundamental solutions $E_{\mathfrak{q},\alpha}$
for operators $\square_{\mathfrak{q},\alpha}$ which are invariant under the
action of $\boldsymbol{O}\left(  \mathfrak{q}\right)  $. Furthermore, the
distributions $E_{\mathfrak{q},\alpha}$ satisfy the following:
\begin{enumerate}[(i)]
\item\mbox{}
\begin{equation}
\mathcal{F}(E_{\mathfrak{q},\alpha})=\mathcal{F}
(E_{\mathfrak{q},\alpha}^{0})+C\delta(\mathfrak{q}-1),\label{eq_fun_sol_1}%
\end{equation}
where $C$ is a non-zero complex constant and $\mathcal{F}(E_{\mathfrak{q}%
,\alpha}^{0})$, $\delta(\mathfrak{q}-1)$ are distributions invariant under
$\boldsymbol{O}(\mathfrak{q})$.
\item\mbox{}
\begin{equation}
1_{V}\mathcal{F}(E_{\mathfrak{q},\alpha})=C\delta(\mathfrak{q}%
-1).\label{eq_fun_sol_2}%
\end{equation}
In particular, the restriction of $\mathcal{F}(E_{\mathfrak{q},\alpha})$ to
$V$ is unique up to multiplication for a non-zero complex constant.
\end{enumerate}
\end{theorem}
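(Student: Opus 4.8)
The plan is to take $\widehat{E}_{\mathfrak{q},\alpha}:=c_0$, the constant coefficient of the Laurent expansion (\ref{exapnsinon_local_zeta_function}) of the local zeta function $|\mathfrak{q}-1|_p^{s}$ at $s=-\alpha$, and to read off its $\delta(\mathfrak{q}-1)$-component by restricting to $V$. First I would record that $c_0$ is genuinely a fundamental solution. Since $I_0(s)$ is entire (Lemma \ref{lemma7}) and each $I_b(s)$ has, by (\ref{formula_int}), poles only where $1-p^{-1-s}=0$, the expansion (\ref{exapnsinon_local_zeta_function}) has at most a simple principal part, whose coefficient is supported on $V$ and proportional to $\delta(\mathfrak{q}-1)$. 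Multiplying (\ref{exapnsinon_local_zeta_function}) by $|\mathfrak{q}-1|_p^{\alpha}$ therefore annihilates the principal part by Lemma \ref{lemma12} (and Remark \ref{note13}), while the right-hand side $|\mathfrak{q}-1|_p^{s+\alpha}$ takes the value $1$ at $s=-\alpha$; matching constant terms gives $|\mathfrak{q}-1|_p^{\alpha}c_0=1$, which by Lemma \ref{LemaFS} is exactly the fundamental-solution property (this is the content of the cited Theorem 134 of \cite{Zuniga-LNM-2016}).

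For the invariance, the key point is that $\Lambda\,|\mathfrak{q}-1|_p^{s}=|\mathfrak{q}-1|_p^{s}$ in $\mathcal{D}_{\mathbb{C}}^{\prime}$ for every $\Lambda\in\boldsymbol{O}(\mathfrak{q})$, because $\mathfrak{q}(\Lambda^{-1}x)=\mathfrak{q}(x)$ and $|\det\Lambda|_p=1$ renders the Haar measure $\Lambda$-invariant. Applying $\Lambda$ termwise to (\ref{exapnsinon_local_zeta_function}) and invoking the uniqueness of Laurent coefficients, I get $\Lambda c_j=c_j$ for every $j$; in particular $c_0=\widehat{E}_{\mathfrak{q},\alpha}$ is $\boldsymbol{O}(\mathfrak{q})$-invariant, and hence so is $E_{\mathfrak{q},\alpha}$ by Remark \ref{note14}(ii).

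Next I would establish (\ref{eq_fun_sol_2}) by computing $1_V c_0$ from Lemma \ref{lemma9}. The integral over $\mathbb{Q}_p^4\setminus E_V$ is a locally integrable function supported off $V$, so $1_V$ kills it; in the finite sum I would substitute the restriction formula (\ref{for_rest_1}) (resp. (\ref{for_rest_2}) when $\alpha=1$), so that $(1_V c_0(I_b,\alpha),\theta)=\kappa_\alpha\,\Theta_b(0)$, where $\kappa_\alpha$ is the constant displayed there and $\Theta_b(0)=p^3(\delta(\mathfrak{q}-1),\theta\,\Omega(p\Vert k-b\Vert_p))$. Summing over the admissible $\overline b$ and using the partition of unity $\sum_{\overline b}\Omega(p\Vert k-b\Vert_p)=1_{E_V}$ together with $1_{E_V}\,\delta(\mathfrak{q}-1)=\delta(\mathfrak{q}-1)$ (as $V\subset E_V$), the $\Omega$-factors telescope into a single pairing and I obtain $1_V c_0=C\,\delta(\mathfrak{q}-1)$ with $C=p^{-1}\kappa_\alpha$; explicitly $C=\frac{p^{\alpha-1}(1-p^{-1})}{1-p^{\alpha-1}}\neq0$ for $\alpha\neq1$ and $C=-\frac{p-1}{2p}\neq0$ for $\alpha=1$. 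The asserted uniqueness of the restriction up to a scalar is then immediate from Remark \ref{note14}(iii). For (\ref{eq_fun_sol_1}) I would simply define $\mathcal{F}(E_{\mathfrak{q},\alpha}^{0}):=c_0-C\,\delta(\mathfrak{q}-1)$: it is invariant as a difference of the two invariant distributions produced above, it satisfies $1_V\mathcal{F}(E_{\mathfrak{q},\alpha}^{0})=0$, and it coincides with the off-$V$ part (the integral over $\mathbb{Q}_p^4\setminus E_V$ plus the regularized principal-value integrals) exhibited in Lemmas \ref{lemma9}--\ref{lemma10}.

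The hard part will be the first step, namely the rigorous justification that $c_0$ is a fundamental solution: this hinges on controlling the distributional product $|\mathfrak{q}-1|_p^{\alpha}c_0$ and on knowing that every principal-part coefficient is supported on $V$, which is where Lemmas \ref{lemma7} and \ref{lemma12} and Remark \ref{note13} do the real work. Once the restriction formulae of Lemma \ref{lemma10} are in hand, the passage to (\ref{eq_fun_sol_2}) and thence to (\ref{eq_fun_sol_1}) is essentially bookkeeping with the partition of unity on $E_V$.
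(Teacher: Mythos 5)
Your proposal is correct and is essentially the paper's own argument: both take $\widehat{E}_{\mathfrak{q},\alpha}=c_{0}$ from the Laurent expansion of $|\mathfrak{q}-1|_{p}^{s}$ at $s=-\alpha$, obtain invariance under $\boldsymbol{O}(\mathfrak{q})$ by letting $\Lambda$ act termwise on that expansion and invoking uniqueness of the coefficients, and isolate the $\delta(\mathfrak{q}-1)$-component via Lemmas \ref{lemma9}--\ref{lemma10} and the decomposition of $E_{V}$ into the balls $b+p\mathbb{Z}_{p}^{4}$, with Remark \ref{note14} yielding the uniqueness claim. The only differences are cosmetic: you establish (\ref{eq_fun_sol_2}) first, with an explicit value of $C$, and then define $\mathcal{F}(E_{\mathfrak{q},\alpha}^{0}):=c_{0}-C\delta(\mathfrak{q}-1)$, whereas the paper constructs $E_{\mathfrak{q},\alpha}^{0}$ as the off-$V$ linear combination and deduces (\ref{eq_fun_sol_2}) from (\ref{eq_fun_sol_1}), citing Theorem 134 of \cite{Zuniga-LNM-2016} for the fundamental-solution property that you re-derive.
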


\begin{proof}
The existence of fundamental solutions for operators $\square_{\mathfrak{q}%
,\alpha}$ is guaranteed by Theorem 134 in \cite{Zuniga-LNM-2016}. If
$E_{\mathfrak{q},\alpha}^{0}$ is a fundamental solution for $\square
_{\mathfrak{q},\alpha}$, then, by Lemmas \ref{LemaFS}, \ref{lemma12},
$E_{\mathfrak{q},\alpha}^{0}+C\mathcal{F}^{-1}\left[  \delta(\mathfrak{q}%
-1)\right]  $ is also a fundamental solution for any non-zero complex constant
$C $. Therefore, the Fourier transform of any fundamental solution may be
written as%
\begin{equation}\label{eqaux1}
\mathcal{F}\left[  E_{\mathfrak{q},\alpha}\right]  =\mathcal{F}\left[
E_{\mathfrak{q},\alpha}^{0}\right]  +C\delta(\mathfrak{q}-1),
\end{equation}
for some fundamental solution $E_{\mathfrak{q},\alpha}^{0}$ and some non-zero
complex constant $C$.

\begin{remark}
In fact, if there is another fundamental solution $E'_{\mathfrak{q},\alpha}$ of
$\square_{\mathfrak{q},\alpha}$, invariant under $\boldsymbol{O}\left(\mathfrak{q}\right)$,
satisfying
\begin{equation}\label{eqaux2}
\mathcal{F}\left[ E_{\mathfrak{q},\alpha}\right]=
\mathcal{F}\left[ E'_{\mathfrak{q},\alpha}\right] +C\delta (\mathfrak{q}-1)\,,
\end{equation}
then from \eqref{eqaux1} and \eqref{eqaux2} we get that
$\mathcal{F}\left[ E'_{\mathfrak{q},\alpha}-E_{\mathfrak{q},\alpha}^{0}\right]$
is a distribution supported on $V$ and invariant under $\boldsymbol{O}\left(\mathfrak{q}\right)$,
and consequently 
$\mathcal{F}\left[ E'_{\mathfrak{q},\alpha}-E_{\mathfrak{q},\alpha}^{0}\right]=C_0\delta (\mathfrak{q}-1)\,,$
for some constant $C_0$.
\end{remark} 

By Lemmas \ref{lemma9}, \ref{lemma10}\ and Remark
\ref{note11}, there exists a fundamental solution $E_{\mathfrak{q},\alpha}%
^{0}$, such that $\mathcal{F}\left[  E_{\mathfrak{q},\alpha}^{0}\right]  $ is
a linear combination of distributions of any of the types
\[
\int_{\mathbb{Q}^4_p\backslash E_V} |\mathfrak{q}(x)-1|^{-\alpha}_p\theta (x)\, d^4x\ \mbox{ or }\
p^{\alpha}\int_{\mathbb{Z}_{p}}|u_{0}|_{p}^{-\alpha}(\Theta_{b}%
(u_{0})-\Theta_{b}(0))\,du_{0},
\]
with $\Theta_{b}(u_{0})$ defined as in Lemma \ref{lemma10}. In addition, we
have%
\[
{\LARGE 1}_{V}\mathcal{F}\left[  E_{\mathfrak{q},\alpha}^{0}\right]  =0\text{
in }\mathcal{D}_{\mathbb{C}}^{\prime}(\mathbb{Q}_{p}^{4})\text{.}%
\]
The rest of assertions announced follows from Remark \ref{note14} by the
following assertion:

\textbf{Claim.} The distribution $E_{\mathfrak{q},\alpha}^{0}$ is invariant
under $\boldsymbol{O}\left(  \mathfrak{q}\right)  $.

We first note that
\begin{equation}
\Lambda|\mathfrak{q}-1|_{p}^{s}=|\mathfrak{q}-1|_{p}^{s}\text{ for any
}\Lambda\in\boldsymbol{O}\left(  \mathfrak{q}\right)  \text{, and
\ }\operatorname{Re}(s)>0\text{,}\label{formula_1}%
\end{equation}
because $\mathfrak{q}\left(  \Lambda^{-1}y\right)  =\mathfrak{q}\left(
y\right)  $ for any $\Lambda\in\boldsymbol{O}\left(  \mathfrak{q}\right)  $,
and any $y\in\mathbb{Q}_{p}^{4}$. Now, we rewrite (\ref{formula_1}) as
\[
\left(  |\mathfrak{q}-1|_{p}^{s},\Lambda^{-1}\varphi\right)  =\left(
|\mathfrak{q}-1|_{p}^{s},\varphi\right)  \text{ for }\Lambda\in\boldsymbol{O}%
\left(  \mathfrak{q}\right)  \text{, }\varphi\in\mathcal{D}_{\mathbb{C}%
}\text{, and \ }\operatorname{Re}(s)>0\text{,}%
\]
and use that $\Lambda^{-1}\varphi\in\mathcal{D}_{\mathbb{C}}$ for $\varphi
\in\mathcal{D}_{\mathbb{C}}$, and that the distribution $|\mathfrak{q}-1|_{p}^{s}
$ admits a meromorphic continuation to the whole complex plane to conclude
that (\ref{formula_1}) is valid for any $s$. We now recall that $\mathcal{F}%
\left[  E^0_{q,\alpha}\right]  =c_{0}\in\mathcal{D}'_\mathbb{C}$, where%
\begin{align*}
\left(  |\mathfrak{q}-1|_{p}^{s},\varphi\right)   &  =\sum_{j=-j_{0}}^{\infty
}\left(  c_{j},\varphi\right)  (s+\alpha)^{j}=\left(  \Lambda|\mathfrak{q}%
-1|_{p}^{s},\varphi\right)  =\left(  |\mathfrak{q}-1|_{p}^{s},\Lambda
^{-1}\varphi\right) \\
&  =\sum_{j=-j_{0}}^{\infty}\left(  c_{j},\Lambda^{-1}\varphi\right)
(s+\alpha)^{j},
\end{align*}
then $\left(  c_{0},\varphi\right)  =\left(  c_{0},\Lambda^{-1}\varphi\right)
$, which implies that $c_{0}$\ is invariant under $\boldsymbol{O}\left(
\mathfrak{q}\right)  $, and consequently, $E^0_{q,\alpha}$ is invariant under
$\boldsymbol{O}\left(  \mathfrak{q}\right)  $.
\end{proof}

\section{Klein-Gordon type operators acting on $\mathcal{H}_\infty$}

\begin{lemma}
\label{lemma14} Let $\mathfrak{f}({k})\in\mathbb{Q}_{p}[k_{0},k_{1}%
,k_{2},k_{3}]$ be a non-constant homogeneous polynomial of degree $e$ and
$\alpha>0$. Then there exists a positive constant $A=A(\mathfrak{f},\alpha)$
such that
\[
|\mathfrak{f}({k})-1|_{p}^{\alpha}\leq A[k]_{p}^{e\alpha}\text{ \ for }%
k\in\mathbb{Q}_{p}^{4}.
\]

\end{lemma}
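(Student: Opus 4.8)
The plan is to reduce to the exponent $\alpha=1$ and then combine the homogeneity of $\mathfrak{f}$ with the ultrametric inequality. Since $t\mapsto t^{\alpha}$ is increasing on $\mathbb{R}_{+}$ for $\alpha>0$, it suffices to find a constant $A'=A'(\mathfrak{f})$ with $|\mathfrak{f}(k)-1|_{p}\leq A'[k]_{p}^{e}$ for all $k\in\mathbb{Q}_{p}^{4}$; the constant in the statement is then $A=(A')^{\alpha}$.

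First I would bound the homogeneous polynomial itself. Writing $\mathfrak{f}(k)=\sum_{|\beta|=e}c_{\beta}k^{\beta}$ with $c_{\beta}\in\mathbb{Q}_{p}$ and using the strong triangle inequality together with $|k_{i}|_{p}\leq\Vert k\Vert_{p}$ for every coordinate, one obtains
\[
|\mathfrak{f}(k)|_{p}\leq\max_{|\beta|=e}|c_{\beta}|_{p}\prod_{i}|k_{i}|_{p}^{\beta_{i}}\leq C\,\Vert k\Vert_{p}^{e},\qquad C:=\max_{|\beta|=e}|c_{\beta}|_{p}.
\]

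Next I would reinstate the constant term $1$. A second application of the ultrametric inequality gives $|\mathfrak{f}(k)-1|_{p}\leq\max(|\mathfrak{f}(k)|_{p},1)\leq\max(C\Vert k\Vert_{p}^{e},1)$. Recalling that $[k]_{p}^{e}=\max(1,\Vert k\Vert_{p})^{e}=\max(1,\Vert k\Vert_{p}^{e})$, I would verify the elementary uniform bound
\[
\max(C\Vert k\Vert_{p}^{e},1)\leq\max(C,1)\,\max(1,\Vert k\Vert_{p}^{e})=\max(C,1)\,[k]_{p}^{e},
\]
so that $A'=\max(C,1)$ does the job. Raising to the power $\alpha$ then yields the claim with $A=\max(C,1)^{\alpha}$.

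There is no genuine obstacle here: the whole argument is elementary once the non-Archimedean triangle inequality is in hand. The only point that deserves a moment's attention is the bookkeeping with the bracket $[k]_{p}$, namely the displayed inequality above, which one settles by the trivial case split according to whether $\Vert k\Vert_{p}\leq1$ (so $[k]_{p}=1$) or $\Vert k\Vert_{p}>1$ (so $[k]_{p}=\Vert k\Vert_{p}$).
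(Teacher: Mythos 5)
Your proof is correct and follows essentially the same route as the paper's: the ultrametric bound $|\mathfrak{f}(k)-1|_{p}\leq\max\{|\mathfrak{f}(k)|_{p},1\}$, the coefficient-wise estimate $|\mathfrak{f}(k)|_{p}\leq C\Vert k\Vert_{p}^{e}$, and the factorization $\max\{C\Vert k\Vert_{p}^{e},1\}\leq\max\{C,1\}[k]_{p}^{e}$, yielding $A=\max\{C,1\}^{\alpha}$. The only cosmetic differences are that you reduce to $\alpha=1$ at the outset (the paper carries $\alpha$ throughout) and that you spell out the monomial estimate and the case split, which the paper leaves implicit.
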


\begin{proof}
We first note that $|\mathfrak{f}({k})-1|_{p}^{\alpha}\leq\left[
\max\{\left\vert \mathfrak{f}({k})\right\vert _{p},1\}\right]  ^{\alpha}$. We
now use that $|\mathfrak{f}({k})|_{p}\leq C\left(  \mathfrak{f}\right)
[k]_{p}^{e}$ for $k\in\mathbb{Q}_{p}^{4}$, to obtain%
\begin{align*}
|\mathfrak{f}({k})-1|_{p}^{\alpha}  &  \leq\left[  \max\{C\left(
\mathfrak{f}\right)  [k]_{p}^{e},1\}\right]  ^{\alpha}\leq\left[
\max\{C\left(  \mathfrak{f}\right)  ,1\}\right]  ^{\alpha}\left[  \max\left[
[k]_{p}^{e},1\right]  \right]  ^{\alpha}\\
&  =A[k]_{p}^{e\alpha}.
\end{align*}

\end{proof}

\begin{remark}
For $\alpha\in\mathbb{R}$, we set $\lceil\alpha\rceil:=\min\{\gamma
\in\mathbb{Z};\gamma\geq\alpha\}$, the ceiling function.
\end{remark}

\begin{lemma}
\label{lemma15}The mapping%
\[%
\begin{array}
[c]{cccc}%
\square_{\mathfrak{q},\alpha}: & \mathcal{H}_{\infty}(\mathbb{K}) &
\rightarrow & \mathcal{H}_{\infty}(\mathbb{K)}\\
&  &  & \\
& h & \rightarrow & \square_{\mathfrak{q},\alpha}h
\end{array}
\]
is a well-defined continuous linear operator between locally convex spaces.
\end{lemma}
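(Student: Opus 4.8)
The plan is to exploit that $\square_{\mathfrak{q},\alpha}$ is a Fourier multiplier with symbol $|\mathfrak{q}-1|_p^{\alpha}$, and that the Hilbert norms defining $\mathcal{H}_{\infty}(\mathbb{K})$ are themselves weighted $L^2$-norms on the Fourier side. First I would record, directly from the definition \eqref{ec Klein-Gordon}, the identity
\[
\mathcal{F}\!\left(\square_{\mathfrak{q},\alpha}h\right)(\xi)=|\mathfrak{q}(\xi)-1|_p^{\alpha}\,\widehat{h}(\xi),
\]
which is clear for $h\in\mathcal{D}_{\mathbb{K}}$; linearity of $\square_{\mathfrak{q},\alpha}$ is immediate because $\mathcal{F}$, multiplication by the symbol, and $\mathcal{F}^{-1}$ are all linear.

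The heart of the argument is a single estimate. For $l\in\mathbb{N}$ one computes
\[
\|\square_{\mathfrak{q},\alpha}h\|_l^2=\int_{\mathbb{Q}_p^4}[\xi]_p^{l}\,|\mathfrak{q}(\xi)-1|_p^{2\alpha}\,|\widehat{h}(\xi)|^2\,d^4\xi .
\]
Since $\mathfrak{q}$ is homogeneous of degree $e=2$, Lemma \ref{lemma14} applied with exponent $2\alpha$ furnishes a constant $A=A(\mathfrak{q},2\alpha)>0$ such that $|\mathfrak{q}(\xi)-1|_p^{2\alpha}\leq A\,[\xi]_p^{4\alpha}$ for all $\xi$. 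As $[\xi]_p\geq1$, the exponent $4\alpha$ may be replaced by the integer $\lceil 4\alpha\rceil$, and therefore
\[
\|\square_{\mathfrak{q},\alpha}h\|_l^2\leq A\int_{\mathbb{Q}_p^4}[\xi]_p^{\,l+\lceil 4\alpha\rceil}\,|\widehat{h}(\xi)|^2\,d^4\xi=A\,\|h\|_{\,l+\lceil 4\alpha\rceil}^{2}.
\]

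From the resulting inequality $\|\square_{\mathfrak{q},\alpha}h\|_l\leq\sqrt{A}\,\|h\|_{l+\lceil 4\alpha\rceil}$ both assertions follow. Taking $l=0$ shows $|\mathfrak{q}-1|_p^{\alpha}\widehat{h}\in L^2_{\mathbb{K}}$, so $\square_{\mathfrak{q},\alpha}h\in L^2_{\mathbb{K}}$ by Plancherel; since moreover $h\in\mathcal{H}_{\infty}(\mathbb{K})$ gives $\|h\|_m<\infty$ for every $m$, the inequality yields $\|\square_{\mathfrak{q},\alpha}h\|_l<\infty$ for every $l$, and Lemma \ref{lemma4}(ii) then places $\square_{\mathfrak{q},\alpha}h$ in $\mathcal{H}_{\infty}(\mathbb{K})$. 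This proves well-definedness. Continuity is read off the \emph{same} inequality: the topology $\tau_P$ of $\mathcal{H}_{\infty}(\mathbb{K})$ is generated by the seminorms $\|\cdot\|_l$, so dominating each $\|\square_{\mathfrak{q},\alpha}h\|_l$ by the single seminorm $\|h\|_{l+\lceil 4\alpha\rceil}$ is exactly the criterion for continuity of a linear map between these locally convex spaces.

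The only genuine obstacle is the transition from $\mathcal{D}_{\mathbb{K}}$ to all of $\mathcal{H}_{\infty}(\mathbb{K})$: a priori the pointwise product $|\mathfrak{q}(\xi)-1|_p^{\alpha}\widehat{h}(\xi)$ and the multiplier identity are transparent only for test functions, and one must check that they persist after completion. This is precisely what the $L^2$-bound above secures, together with the density of $\mathcal{D}_{\mathbb{K}}$ in $\mathcal{H}_{\infty}(\mathbb{K})$ and the description of its elements as $L^2$-functions in Lemma \ref{lemma4}. For $\mathbb{K}=\mathbb{R}$ there remains the routine verification that $\square_{\mathfrak{q},\alpha}$ preserves real-valuedness, which follows from the evenness of $\mathfrak{q}$ (the symbol is real and even, hence commutes with the conjugation symmetry of Fourier transforms of real functions).
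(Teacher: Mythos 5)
Your proof is correct and follows essentially the same route as the paper: the same application of Lemma \ref{lemma14} with $e=2$ to obtain $\Vert\square_{\mathfrak{q},\alpha}h\Vert_{l}\leq C\Vert h\Vert_{l+\lceil 4\alpha\rceil}$, followed by Lemma \ref{lemma4} for well-definedness and reality of the symbol for the case $\mathbb{K}=\mathbb{R}$. The only (cosmetic) difference is that you deduce continuity directly from the seminorm-domination criterion for locally convex spaces, while the paper runs a sequential argument in the metric $d$; both rest on the identical inequality.
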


\begin{proof}
Take $\mathbb{K}=\mathbb{C}$. Let us first prove that $\square_{\mathfrak{q}%
,\alpha}$ is a well-defined linear operator. Let $h\in\mathcal{H}%
_{l+\lceil4\alpha\rceil}(\mathbb{C})$, then by the Lemma \ref{lemma14}, with
$e=2$, we have
\begin{align}
\Vert\square_{\mathfrak{q},\alpha}h\Vert_{l}^{2}  &  =\int_{\mathbb{Q}_{p}%
^{4}}[\xi]_{p}^{l}|\widehat{(\square_{\mathfrak{q},\alpha}h)}(k)|^{2}%
d^{4}k=\int_{\mathbb{Q}_{p}^{4}}[\xi]_{p}^{l}|\mathfrak{q}(k)-1|_{p}^{2\alpha
}|\widehat{h}(k)|^{2}d^{4}k\nonumber\\
&  \leq C\int_{\mathbb{Q}_{p}^{4}}[\xi]_{p}^{l+4\alpha}|\widehat{h}%
(k)|^{2}d^{4}k\leq C\int_{\mathbb{Q}_{p}^{4}}[\xi]_{p}^{l+\lceil4\alpha\rceil
}|\widehat{h}(k)|^{2}d^{4}k=C\Vert h\Vert_{l+\lceil4\alpha\rceil}%
^{2}.\nonumber
\end{align}
\ By Lemma \ref{lemma4}-(i), $\square_{\mathfrak{q},\alpha}h\in\mathcal{H}%
_{l}(\mathbb{C})$, i.e. $\square_{\mathfrak{q},\alpha}$ is a well-defined,
linear, and continuous operator from $\mathcal{H}_{l+\lceil4\alpha\rceil
}(\mathbb{C})$ into $\mathcal{H}_{l}(\mathbb{C})$ for any $l\in\mathbb{N}$. In
turn, this implies that $\square_{\mathfrak{q},\alpha}$ is a well-defined
linear operator from $\mathcal{H}_{\infty}(\mathbb{C})$ into $\mathcal{H}%
_{\infty}(\mathbb{C})$. To establish the continuity, we use the fact that
$(\mathcal{H}_{\infty}(\mathbb{C}),d)$\ is a metric space. Take a sequence
$\{\varphi_{n}\}_{n\in\mathbb{N}}\subset\mathcal{H}_{\infty}(\mathbb{C})$ such
that $\varphi_{n}\overset{d}{\rightarrow}\varphi$, with $\varphi\in
\mathcal{H}_{\infty}(\mathbb{C})$, which is equivalent to say that
$\varphi_{n}\overset{\Vert\cdot\Vert_{r}}{\rightarrow}\varphi$, for all
$r\in\mathbb{N}$. Take $l\in\mathbb{N}$ and $\varphi$, $\varphi_{n}%
\in\mathcal{H}_{l+\lceil4\alpha\rceil}(\mathbb{C})$, then by the continuity of
$\square_{\mathfrak{q},\alpha}:\mathcal{H}_{l+\lceil4\alpha\rceil}%
(\mathbb{C})\rightarrow\mathcal{H}_{l}(\mathbb{C})$, we have $\square
_{\mathfrak{q},\alpha}\varphi_{n}\overset{\Vert\cdot\Vert_{l}}{\rightarrow
}\square_{\mathfrak{q},\alpha}\varphi$, and since $l$ is arbitrary in
$\mathbb{N}$, we conclude that $\square_{\mathfrak{q},\alpha}\varphi
_{n}\overset{d}{\rightarrow}\square_{\mathfrak{q},\alpha}\varphi$.

We know turn to the case $\mathbb{K}=\mathbb{R}$. Since $(\square
_{\mathfrak{q},\alpha}\varphi)(x)=\overline{(\square_{\mathfrak{q},\alpha
}\varphi)(x)}$ for $\varphi\in\mathcal{H}_{\infty}(\mathbb{R})$, the statement
is also valid in $\mathcal{H}_{\infty}(\mathbb{R})$.
\end{proof}

\begin{remark}
The preceding lemma remains valid if we replace $|\mathfrak{q}(k)-1|^\alpha_p$ by
$g\left( [k]_p\right) |\mathfrak{q}(k)-1|^\alpha_p$, where $g:\mathbb{R}_+\to\mathbb{C}$
is any continuous function.
\end{remark}

\begin{remark}
We recall that $V$ is a $p-$adic compact submanifold of $\mathbb{Z}_{p}^{4}$
of codimension one. We denote by $d\lambda$ the measure corresponding to the
distribution $\delta\left(  \mathfrak{q}-1\right)  $ as before. Then $\left(
V,\mathcal{B}(V),d\lambda\right)  $ is a measure space, where $\mathcal{B}%
(V)$\ is the Borel $\sigma$-algebra generated by the open compact subsets of
$V$, and thus the space $L_{\mathbb{K}}^{2}\left(  V,d\lambda\right)  $ is well-defined.
\end{remark}

\begin{proposition}
\label{Prop-1}The
mapping%
\[%
\begin{array}
[c]{cccc}%
R: & \mathcal{H}_{l}(\mathbb{C}) & \rightarrow & L_{\mathbb{C}}^{2}\left(
V^{+},d\lambda\right) \\
&  &  & \\
& f & \rightarrow & \widehat{f}\mid_{_{V^{+}}}%
\end{array}
\]
determines a well-defined operator satisfying%
\begin{equation}
\left\Vert R(f)\right\Vert _{L_{\mathbb{C}}^{2}\left(  V^{+},d\lambda\right)
}\leq C\left\Vert f\right\Vert _{l}\label{Eq21}%
\end{equation}
for any $l\in\mathbb{N}$. Consequently, $R$ induces a continuous
operator from $\mathcal{H}_{\infty}(\mathbb{C})$ into $L_{\mathbb{C}}%
^{2}\left(  V^{+},d\lambda\right)  $.
\end{proposition}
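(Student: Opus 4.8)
The plan is to establish the estimate \eqref{Eq21} first for $f$ in the dense subspace $\mathcal{D}_{\mathbb{C}}\subset\mathcal{H}_{l}(\mathbb{C})$ and then to extend $R$ by continuity. For $f\in\mathcal{D}_{\mathbb{C}}$ the Fourier transform $\widehat{f}$ again belongs to $\mathcal{D}_{\mathbb{C}}$, so it is a genuine locally constant function and the pointwise restriction $\widehat{f}\mid_{V^{+}}$ is unambiguous; since $L_{\mathbb{C}}^{2}(V^{+},d\lambda)$ is complete and $\mathcal{D}_{\mathbb{C}}$ is dense in each $\mathcal{H}_{l}(\mathbb{C})$, a uniform bound on test functions produces a unique continuous extension to all of $\mathcal{H}_{l}(\mathbb{C})$. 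As a first simplification I would use that, because $p\equiv1\bmod 4$, one has $\left\vert\sqrt{\omega(\mathbf{k})}\right\vert_{p}=1$ for every $\mathbf{k}\in U_{\mathfrak{q}}$, so that the Gel'fand--Leray measure takes the explicit form $d\lambda=d^{3}\mathbf{k}$ on $U_{\mathfrak{q}}\subset\mathbb{Z}_{p}^{3}$ and
\[
\left\Vert R(f)\right\Vert_{L_{\mathbb{C}}^{2}(V^{+},d\lambda)}^{2}=\int_{U_{\mathfrak{q}}}\left\vert\widehat{f}\bigl(\sqrt{\omega(\mathbf{k})},\mathbf{k}\bigr)\right\vert^{2}d^{3}\mathbf{k}.
\]

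Next I would localise on $V^{+}$. Covering $V^{+}$ by the finitely many balls $b+p\mathbb{Z}_{p}^{4}$ that make up the tubular neighbourhood $E_{V}$, I would apply on each ball the analytic change of variables of Lemma \ref{Lemma5} (and its variants in Remark \ref{note6}), which flattens $V^{+}\cap(b+p\mathbb{Z}_{p}^{4})$ onto the slice $\{u_{0}=0\}\times\mathbb{Z}_{p}^{3}$, sends $d\lambda$ to $p^{-3}\,du_{1}du_{2}du_{3}$, and has Jacobian of constant $p$-adic absolute value one. This reduces \eqref{Eq21} to a single model estimate in which the transverse coordinate $u_{0}$ plays the role of the direction normal to $V^{+}$, the finitely many patches contributing only an overall constant.

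The heart of the argument, and the step I expect to be the main obstacle, is the resulting $p$-adic trace inequality: controlling the value of $\widehat{f}$ on the codimension-one slice $\{u_{0}=0\}$ by a transverse-weighted $L^{2}$ norm. The mechanism I would use is to express the slice value of $\widehat{f}$ through integration in the Fourier-dual of the transverse variable and to apply the Cauchy--Schwarz inequality against the weight $[\,\cdot\,]_{p}^{-l}$ along the one-dimensional transverse fibre; the gain comes precisely from the convergence of $\int_{\mathbb{Q}_{p}}[\,\tau\,]_{p}^{-l}\,d\tau$, which holds for $l>1$ because $V^{+}$ has codimension one, and it is here that the weight $[\xi]_{p}^{l}$ defining $\Vert\cdot\Vert_{l}$ must be made to act transversally to $V^{+}$. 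Combining this with Plancherel in the transverse variable and summing the finitely many localised contributions yields $\left\Vert R(f)\right\Vert_{L_{\mathbb{C}}^{2}(V^{+},d\lambda)}^{2}\le C\Vert f\Vert_{l}^{2}$ with $C$ independent of $f$; the delicate point is exactly the uniformity of $C$ in $f$, that is, ruling out concentration of $\widehat{f}$ at the measure-zero surface, which is the $p$-adic counterpart of the classical requirement $s>\tfrac{1}{2}$ in the trace theorem $H^{s}(\mathbb{R}^{n})\hookrightarrow L^{2}(\Sigma)$. Finally, by Lemma \ref{lemma4} and the density of $\mathcal{D}_{\mathbb{C}}$, the bound extends to $\mathcal{H}_{l}(\mathbb{C})$ for every $l$, and intersecting over $l\in\mathbb{N}$ gives the asserted continuity of $R$ on $\mathcal{H}_{\infty}(\mathbb{C})$.
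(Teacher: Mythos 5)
Your first two reductions are sound: for $p\equiv 1\bmod 4$ one indeed has $|\omega(\boldsymbol{k})|_{p}=1$ for all $\boldsymbol{k}\in\mathbb{Z}_{p}^{3}$ (the congruence $1+s k_{1}^{2}\equiv 0 \bmod p$ has no solution), so $d\lambda=d^{3}\boldsymbol{k}$ on $U_{\mathfrak{q}}$, and the localization through Lemma \ref{Lemma5} and Remark \ref{note6} is harmless. The gap is in the step you yourself single out as the heart. The ``$p$-adic trace inequality'' cannot be extracted from $\Vert f\Vert_{l}$, because of a duality mix-up: in $\Vert f\Vert_{l}^{2}=\int[\xi]_{p}^{l}|\widehat{f}(\xi)|^{2}d^{4}\xi$ the weight multiplies $\widehat{f}$, which is the \emph{same} function that $R$ restricts to $V^{+}$. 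A trace theorem of the type $H^{s}(\mathbb{R}^{n})\hookrightarrow L^{2}(\Sigma)$ controls the restriction of a function $g$ by the decay of $\widehat{g}$; here $g=\widehat{f}$, so the mechanism you describe (Fourier expansion in the transverse variable, Cauchy--Schwarz against $[\tau]_{p}^{-l}$, Plancherel) produces a bound in terms of $\int[\xi]_{p}^{l}|f(\xi)|^{2}d^{4}\xi$, a weighted $L^{2}$ norm of $f$ itself, which is not comparable to $\Vert f\Vert_{l}$. Moreover the weight cannot be ``made to act transversally to $V^{+}$'': since $V^{+}\subset\mathbb{Z}_{p}^{4}$ and $[\xi]_{p}\equiv 1$ on $\mathbb{Z}_{p}^{4}$, near $V^{+}$ the norm $\Vert f\Vert_{l}$ retains only the plain $L^{2}$ mass of $\widehat{f}$, and $L^{2}$ mass cannot control a trace on a null hypersurface. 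The concentration you flag as the delicate point really does occur: put $E_{n}=V^{+}+p^{n}\mathbb{Z}_{p}^{4}$ and $f_{n}=\mathcal{F}^{-1}[1_{E_{n}}]\in\mathcal{D}_{\mathbb{C}}$; then for every $n$ and every $l$,
\begin{equation*}
\left\Vert R(f_{n})\right\Vert _{L_{\mathbb{C}}^{2}\left(  V^{+},d\lambda\right)  }^{2}=\lambda\left(  V^{+}\right)  >0,
\qquad
\Vert f_{n}\Vert_{l}^{2}=\int_{E_{n}}[\xi]_{p}^{l}\,d^{4}\xi=\mathrm{vol}(E_{n})\longrightarrow \mathrm{vol}(V^{+})=0,
\end{equation*}
because $\bigcap_{n}E_{n}=V^{+}$ is a $d^{4}\xi$-null set. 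So no constant independent of $f$ can come out of your route; any correct estimate must bring in additional information about $f$ beyond the norms $\Vert\cdot\Vert_{l}$, namely the local-constancy radius of $\widehat{f}$.

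For comparison, the paper's proof is of a completely different, genuinely ultrametric nature: it never flattens $V^{+}$ and uses no transverse Fourier analysis. It combines (a) integration over the fibers of $\mathfrak{q}$ (Claim C), $\Vert f\Vert_{0}^{2}=\int_{\mathbb{Q}_{p}^{\times}}\int_{V_{m}}|\widehat{f}|^{2}d\lambda(m)\,dm$; (b) restriction of the outer integral to the shells $m\in G_{M}=1+p^{M}\mathbb{Z}_{p}$ together with the analytic dilation $k\mapsto k/\sqrt{m}$, which maps each $V_{m}$ back to $V$ (Claim E); and (c) the local constancy of $\widehat{f}$, which for $M$ sufficiently large gives $\widehat{f}(\sqrt{m}y)=\widehat{f}(y)$ on $V$, whence $\Vert f\Vert_{0}^{2}\geq \mathrm{vol}(G_{M})\Vert R(f)\Vert_{L_{\mathbb{C}}^{2}\left(  V^{+},d\lambda\right)}^{2}$ (Claim D). The gain there comes from ultrametric local constancy, not Fourier decay, which is why no threshold like $l>1$ appears and the bound is claimed already at $l=0$. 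Note, however, that in that argument $M$ --- and hence the constant $\mathrm{vol}(G_{M})=p^{-M}$ --- depends on the local-constancy radius of $\widehat{f}$, and the example above shows this dependence is unavoidable; so the uniformity issue you correctly identified as the crux is not resolved by the paper's mechanism either, and it is precisely the point on which any repair of the Proposition must focus.
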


\begin{proof}
Since $\mathcal{D}_{\mathbb{C}}$ is dense in $\mathcal{H}_{l}(\mathbb{C})$ for
any $l\in\mathbb{N}$, in order to prove (\ref{Eq21}) we may assume without
loss of generality that $f\in$ $\mathcal{D}_{\mathbb{C}}$ and that
$\widehat{f}\mid_{V^{+}}$ is not the constant function zero. Notice that%
\begin{equation}
\left\Vert R(f)\right\Vert _{L_{\mathbb{C}}^{2}\left(  V^{+},d\lambda\right)
}^{2}=\int_{U_{\mathfrak{q}}}\left\vert \widehat{f}\left(  \sqrt{\omega\left(
\boldsymbol{k}\right)  },\boldsymbol{k}\right)  \right\vert ^{2}\frac
{d^{3}\boldsymbol{k}}{\left\vert \sqrt{\omega\left(  \boldsymbol{k}\right)
}\right\vert _{p}},\label{Eq22}%
\end{equation}
where $\left\vert \sqrt{\omega\left(  \boldsymbol{k}\right)  }\right\vert
_{p}\neq0$, cf. Remark \ref{note_q}. For $m\in\mathbb{Q}_{p}^{\times}$, we
set
\[
V_{m}=\left\{  \left(  k_{0},\boldsymbol{k}\right)  \in\mathbb{Q}_{p}%
^{4};\mathfrak{q}\left(  k_{0},\boldsymbol{k}\right)  =m\right\}  .
\]
We recall that $\mathfrak{q}\left(  k_{0},\boldsymbol{k}\right)  =k_{0}%
^{2}-\mathfrak{q}_{0}\left(  \boldsymbol{k}\right)  $. Then $V_{m}$ is a
$p-$adic compact submanifold of $\mathbb{Q}_{p}^{4}$ of codimension one. In
the case in which $V_{m}\neq\emptyset$, we denote by $d\lambda\left(
m\right)  $ the measure on $V_{m}$ induced by the Gel'fand-Leray form on
$V_{m}$. Then $dk_{0}d^{3}\boldsymbol{k}=d\lambda\left(  m\right)  dm$, where
$dm$ is the normalized Haar measure of $\mathbb{Q}_{p}$.

\textbf{Claim C. }For $\widehat{f}\left(  k_{0},\boldsymbol{k}\right)
\in\mathcal{D}_{\mathbb{C}}$, the $\mathbb{R}$-valued function defined by%
\[
\int_{\mathbb{Q}_{p}^{\times}}\int_{V_{m}}\left\vert \widehat{f}\left(
k_{0},\boldsymbol{k}\right)  \right\vert ^{2}d\lambda\left(  m\right)  dm
\]
is in $\mathcal{D}_{\mathbb{R}}\left(  \mathbb{Q}_{p}\right)  $ and
\textbf{\ }
\begin{equation}
\left\Vert f\right\Vert _{0}^{2}=\int_{\mathbb{Q}_{p}^{4}}\left\vert
\widehat{f}\left(  k_{0},\boldsymbol{k}\right)  \right\vert ^{2}dk_{0}%
d^{3}\boldsymbol{k=}\int_{\mathbb{Q}_{p}^{\times}}\int_{V_{m}}\left\vert
\widehat{f}\left(  k_{0},\boldsymbol{k}\right)  \right\vert ^{2}%
d\lambda\left(  m\right)  dm.\label{Eq23A}%
\end{equation}
This claim is a very particular version of a general theorem on integration over
the fibers in the framework of $p-$adic manifolds, see \cite[Theorem
7.6.1]{Igusa}.

\textbf{Claim D. }There exists a positive constant $C_{0}$ such that%
\begin{equation}
\left\Vert f\right\Vert _{0}^{2}\geq C_{0}\int_{U_{\mathfrak{q}}}\left\vert
\widehat{f}\left(  \sqrt{\omega\left(  \boldsymbol{k}\right)  },\boldsymbol{k}%
\right)  \right\vert ^{2}\frac{d^{3}\boldsymbol{k}}{\left\vert \sqrt
{\omega\left(  \boldsymbol{k}\right)  }\right\vert _{p}}.\label{Eq23}%
\end{equation}
Estimation (\ref{Eq21}) follows from (\ref{Eq22})-(\ref{Eq23}). The fact that
operator $R$ extends \ to $\mathcal{H}_{\infty}(\mathbb{C})$\ follows from
(\ref{Eq21}), by using \ a classical argument based on convergence of
sequences due to the fact that the topology of $\mathcal{H}_{\infty
}(\mathbb{C})$ is metrizable.

\textbf{Proof of Claim D. }In order to prove the Claim we proceed as follows.
We set $G_{M}:=1+p^{M}\mathbb{Z}_{p}$, for $M\geq1$. Then $G_{M}$ is a
multiplicative subgroup of the group of squares of $\mathbb{Q}_{p}^{\times}$.
This is a compact subgroup so its Haar measure, denoted as $vol(G_{M})$,
is finite. Now, we notice that%
\begin{align}
\int_{\mathbb{Q}_{p}^{\times}}\int_{V_{m}}\left\vert \widehat{f}\left(
k_{0},\boldsymbol{k}\right)  \right\vert ^{2}d\lambda\left(  m\right)  dm &
\geq\int_{^{G_{M}}}\int_{V_{m}}\left\vert \widehat{f}\left(  k_{0}%
,\boldsymbol{k}\right)  \right\vert ^{2}d\lambda\left(  m\right)
dm\nonumber\\
&  =\int_{G_{M}}\int_{V_{m}}\left\vert \widehat{f}\left(  k_{0},\boldsymbol{k}%
\right)  \right\vert ^{2}\frac{d^{3}\boldsymbol{k}dm}{\left\vert
m+\mathfrak{q}_{0}\left(  \boldsymbol{k}\right)  \right\vert _{p}^{\frac{1}%
{2}}}.\label{Eq24}%
\end{align}
We now use the fact that

\textbf{Claim E. }The mapping
\[%
\begin{array}
[c]{cccc}%
\sqrt{\cdot}: & G_{M} & \rightarrow & G_{M}\\
&  &  & \\
& m & \rightarrow & \sqrt{m}%
\end{array}
\]
and its inverse are $p-$adic analytic functions, for $M$ sufficiently large.

We change variables in the last integral in (\ref{Eq24}) as $y_{0}=\frac
{k_{0}}{\sqrt{m}}$, $\boldsymbol{y}=\frac{\boldsymbol{k}}{\sqrt{m}}$, then
$dk_{0}d^{3}\boldsymbol{k=}dy_{0}d^{3}\boldsymbol{y}$ and
\begin{align*}
& \int_{G_{M}}\int_{V_{m}}\left\vert \widehat{f}\left(  k_{0},\boldsymbol{k}%
\right)  \right\vert ^{2}\frac{d^{3}\boldsymbol{k}dm}{\left\vert
m+\mathfrak{q}_{0}\left(  \boldsymbol{k}\right)  \right\vert _{p}^{\frac{1}%
{2}}}\\
& =\int_{G_{M}}\int_{V}\left\vert \widehat{f}\left(  \sqrt{m}y_{0},\sqrt
{m}\boldsymbol{y}\right)  \right\vert ^{2}\frac{d^{3}\boldsymbol{y}%
dm}{\left\vert 1+\mathfrak{q}_{0}\left(  \boldsymbol{y}\right)  \right\vert
_{p}^{\frac{1}{2}}}.
\end{align*}
Finally since $\widehat{f}$\ is locally constant and $\sqrt{m}$ is a unit for
every $m\in G_{M}$, we have for $M$ sufficiently large that%
\begin{gather*}
\int_{G_{M}}\int_{V}\left\vert \widehat{f}\left(  \sqrt{m}y_{0},\sqrt
{m}\boldsymbol{y}\right)  \right\vert ^{2}\frac{d^{3}\boldsymbol{y}%
dm}{\left\vert 1+\mathfrak{q}_{0}\left(  \boldsymbol{y}\right)  \right\vert
_{p}^{\frac{1}{2}}}=\int_{G_{M}}\int_{V}\left\vert \widehat{f}\left(
y_{0},\boldsymbol{y}\right)  \right\vert ^{2}\frac{d^{3}\boldsymbol{y}%
dm}{\left\vert 1+\mathfrak{q}_{0}\left(  \boldsymbol{y}\right)  \right\vert
_{p}^{\frac{1}{2}}}\\
\geq vol\left(  G_{M}\right)  \int_{V^{+}}\left\vert \widehat{f}\left(
y_{0},\boldsymbol{y}\right)  \right\vert ^{2}\frac{d^{3}\boldsymbol{y}}{\left\vert 1+\mathfrak{q}_{0}\left(  \boldsymbol{y}\right)  \right\vert
_{p}^{\frac{1}{2}}}.
\end{gather*}
\textbf{Proof of Claim E.}

We first notice that $\left(  1+p^{M}\mathbb{Z}_{p}\right)  ^{2}%
=1+2p^{M}\mathbb{Z}_{p}=1+p^{M}\mathbb{Z}_{p}$ for $M\geq2$, see Lemma 8.4.1
in \cite{Igusa}. This means \ that the mapping
\begin{equation}%
\begin{array}
[c]{ccc}%
G_{M} & \rightarrow & G_{M}\\
x & \rightarrow & x^{2}%
\end{array}
\label{mapping_1}%
\end{equation}
is well-defined and surjective. Then for any $m\in G_{M}$, the equation
$x^{2}=m$ has a solution $\sqrt{m}$ in $G_{M}$. Notice that there is another
solution $-\sqrt{m}=-1+\left(  \text{higher order terms}\right)  $ which does
not belong to $G_{M}$. Consequently the mapping
\begin{equation}%
\begin{array}
[c]{ccc}%
G_{M} & \rightarrow & G_{M}\\
m & \rightarrow & \sqrt{m}%
\end{array}
\label{mapping_2}%
\end{equation}
is well-defined. The fact that the mappings (\ref{mapping_1})-(\ref{mapping_2}%
) are $p-$adic analytic follows from the implicit function theorem.
\end{proof}

\begin{remark}\label{rem17}
The preceding Proposition remains valid if we replace $R(f)=\hat{f}|_{V^+}$
by $R(f)(k)=g\left( [k]_p\right) \hat{f}(k)|_{V^+}$, where $g$ is any continuous
function $g:\mathbb{R}_+\to\mathbb{C}$.
\end{remark}

\begin{lemma}
\label{lemma16}There exist a positive constant \ $C$ such that%
\[
\frac{1}{\left\vert 1+\mathfrak{q}_{0}(\boldsymbol{k})\right\vert _{p}}\leq
C\text{ for any }\boldsymbol{k}\in\mathbb{Q}_{p}^{3}\text{.}%
\]

\end{lemma}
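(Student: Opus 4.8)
The claim is equivalent to the uniform lower bound $|1+\mathfrak{q}_{0}(\boldsymbol{k})|_{p}\geq 1/C>0$ on all of $\mathbb{Q}_{p}^{3}$, and my plan is to obtain this by splitting $\mathbb{Q}_{p}^{3}$ into a large-norm region, controlled by the ultrametric inequality, and a compact region, controlled by a continuity-plus-non-vanishing argument.

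First I would record the ellipticity estimate for the three-variable form. Since $\mathfrak{q}_{0}$ is elliptic (anisotropic), the argument behind \eqref{desigualdaLNM} applies verbatim in dimension three, cf.\ \cite[Lemma 25]{Zuniga-LNM-2016}: the unit sphere $\{\Vert\boldsymbol{x}\Vert_{p}=1\}\subset\mathbb{Q}_{p}^{3}$ is compact and $\mathfrak{q}_{0}$ is continuous and non-vanishing there, so $\inf_{\Vert\boldsymbol{x}\Vert_{p}=1}|\mathfrak{q}_{0}(\boldsymbol{x})|_{p}=:c>0$, and by homogeneity of degree $2$ one gets $|\mathfrak{q}_{0}(\boldsymbol{k})|_{p}\geq c\,\Vert\boldsymbol{k}\Vert_{p}^{2}$ for all $\boldsymbol{k}\in\mathbb{Q}_{p}^{3}$. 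Fix $N$ so large that $c\,p^{2N}>1$. Then for $\Vert\boldsymbol{k}\Vert_{p}>p^{N}$ we have $|\mathfrak{q}_{0}(\boldsymbol{k})|_{p}>1=|1|_{p}$, whence by the isosceles-triangle property of the ultrametric $|1+\mathfrak{q}_{0}(\boldsymbol{k})|_{p}=|\mathfrak{q}_{0}(\boldsymbol{k})|_{p}>1$. (In fact the infimum equals $p^{-1}$, so one may take the compact region below to be exactly $\mathbb{Z}_{p}^{3}$.)

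On the complementary ball $B=\{\Vert\boldsymbol{k}\Vert_{p}\leq p^{N}\}$, which is compact, the map $\boldsymbol{k}\mapsto|1+\mathfrak{q}_{0}(\boldsymbol{k})|_{p}$ is continuous and therefore attains a minimum $1/C_{0}$. This minimum is strictly positive because the function never vanishes on $B$: an equality $1+\mathfrak{q}_{0}(\boldsymbol{k})=0$ would mean $-\mathfrak{q}_{0}(\boldsymbol{k})=1$, i.e.\ $\boldsymbol{k}\in W$, but $W=\emptyset$ under our standing hypothesis $p\equiv 1\bmod 4$, see Remark~\ref{note_q}. Combining the two regions yields $|1+\mathfrak{q}_{0}(\boldsymbol{k})|_{p}\geq\min\{1,1/C_{0}\}$ for every $\boldsymbol{k}\in\mathbb{Q}_{p}^{3}$, so $C=\max\{1,C_{0}\}$ is the desired constant.

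I expect no serious obstacle. The two points meriting care are the transfer of the ellipticity bound \eqref{desigualdaLNM} to the three-variable form $\mathfrak{q}_{0}$, which is what forces the ``bad'' set into a compact ball, and the appeal to $W=\emptyset$, which rules out the vanishing of $1+\mathfrak{q}_{0}$ and hence makes the minimum over $B$ strictly positive.
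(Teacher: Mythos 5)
Your proof is correct and follows essentially the same route as the paper's: the ellipticity bound $|\mathfrak{q}_{0}(\boldsymbol{k})|_{p}\geq p^{-1}\Vert\boldsymbol{k}\Vert_{p}^{2}$ plus the ultrametric equality handles $\Vert\boldsymbol{k}\Vert_{p}\geq p$, and compactness together with $W=\emptyset$ (Remark \ref{note_q}) gives a positive minimum on $\mathbb{Z}_{p}^{3}$, exactly the paper's Claims A and B. Your parenthetical observation that one may take the compact region to be exactly $\mathbb{Z}_{p}^{3}$ recovers the paper's decomposition verbatim, and you are in fact slightly more explicit than the paper about why the minimum on the compact piece is attained and positive.
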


\begin{proof}
The hypothesis $p\equiv1$ $\operatorname{mod}$ $4$ implies $W=\left\{
\boldsymbol{k}\in\mathbb{Z}_{p}^{3};1+\mathfrak{q}_{0}(\boldsymbol{k}%
)=0\right\}  =\emptyset$, see Remark \ref{note_q}.

\textbf{Claim A.} $\left\vert 1+\mathfrak{q}_{0}(\boldsymbol{k})\right\vert
_{p}>C_{1}$ for any $C_{1}\in\left(  0,p\right)  $ and for any $\left\Vert
\boldsymbol{k}\right\Vert _{p}\geq p$.

We recall that $\mathfrak{q}_{0}(\boldsymbol{k})$ and $\mathfrak{q}%
(k_{0},\boldsymbol{k})$ are elliptic quadratic forms and that%
\begin{equation}
\left\vert \mathfrak{q}_{0}(\boldsymbol{k})\right\vert _{p}=|\mathfrak{q}%
(0,\boldsymbol{k})|_{p}\geq\left( \inf_{\boldsymbol{x}\in S_{0}^{3}}|\mathfrak{q}%
(0,\boldsymbol{x})|_{p}\right)\Vert\boldsymbol{k}\Vert_{p}^{2}=p^{-1}\Vert
\boldsymbol{k}\Vert_{p}^{2}\text{ for any }\boldsymbol{k}\in\mathbb{Q}_{p}%
^{3}\text{,}\label{Eq20}%
\end{equation}
see (\ref{desigualdaLNM}). Now, $p^{-1}\Vert\boldsymbol{k}\Vert_{p}%
^{2}>1$ if and only if $\Vert\boldsymbol{k}\Vert_{p}\geq p$, and by applying the
ultrametric property of the norm $\Vert\boldsymbol{\cdot}\Vert_{p}$, we get
from (\ref{Eq20}), that for $\Vert\boldsymbol{k}\Vert_{p}\geq p$,
\[
\left\vert 1+\mathfrak{q}_{0}(\boldsymbol{k})\right\vert _{p}=\max\left\{
1,\mathfrak{q}_{0}(\boldsymbol{k})\right\}  \geq p^{-1}\Vert\boldsymbol{k}%
\Vert_{p}^{2}\geq p>C_{1}\text{ for any }C_{1}\in\left(  0,p\right)  .
\]

\textbf{Claim B. }There exist a constant $C_{0}$ such that 
$\inf_{\boldsymbol{k}\in\mathbb{Z}_{p}^{3}}\left\vert 1+\mathfrak{q}%
_{0}(\boldsymbol{k})\right\vert _{p}\geq C_{0}>0$.

This assertion follows from the fact that $\left\vert 1+\mathfrak{q}%
_{0}(\boldsymbol{k})\right\vert _{p}>0$ for any $\boldsymbol{k}\in
\mathbb{Z}_{p}^{3}$. The statement of the lemma is a consequence of Claims A and B.
\end{proof}

\begin{lemma}
\label{lemma16A}The mapping%
\[%
\begin{array}
[c]{cccc}%
R: & L_{\mathbb{C}}^{2}\left(  \mathbb{Q}_{p}^{3},d^{3}\boldsymbol{x}\right)
& \rightarrow & L_{\mathbb{C}}^{2}\left(  V^{+},d\lambda\right) \\
&  &  & \\
& g & \rightarrow & \widehat{g}\mid_{_{V^{+}}}%
\end{array}
\]
satisfies $\left\Vert R(g)\right\Vert _{L_{\mathbb{C}}^{2}\left(
V^{+},d\lambda\right)  }\leq C\left\Vert g\right\Vert _{L_{\mathbb{C}}%
^{2}\left(  \mathbb{Q}_{p}^{3},d^{3}\boldsymbol{x}\right)  }$. Here
$\widehat{g}\left(  \boldsymbol{k}\right)  $ denotes the $3$-dimensional
Fourier transform is defined\ with respect to the bilinear form $-\mathfrak{B}%
_{0}\left(  \mathbf{x},\mathbf{y}\right)=-sx_{1}y_{1}-px_{2}y_{2}+spx_{3}y_{3}$.
\end{lemma}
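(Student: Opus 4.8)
The plan is to reduce the statement to an elementary weighted $L^{2}$-estimate over the compact base $U_{\mathfrak{q}}$, in the same spirit as the treatment of $R$ in Proposition \ref{Prop-1}, but now the source space is $L_{\mathbb{C}}^{2}(\mathbb{Q}_{p}^{3},d^{3}\boldsymbol{x})$ and the only isometry needed is the three-dimensional Plancherel theorem for the Fourier transform attached to the non-degenerate form $-\mathfrak{B}_{0}$. Crucially, in contrast with Proposition \ref{Prop-1}, no integration over the fibers is required here: the source is already made of functions on $\mathbb{Q}_{p}^{3}$, which is exactly the ambient space of the base $U_{\mathfrak{q}}$, so the whole argument collapses to a pointwise weight bound plus Plancherel. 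First I would pass, by density of $\mathcal{D}_{\mathbb{C}}(\mathbb{Q}_{p}^{3})$ in $L_{\mathbb{C}}^{2}(\mathbb{Q}_{p}^{3})$, to a test function $g$, so that $\widehat{g}$ is a genuine locally constant function whose restriction to $V^{+}$ makes sense pointwise; the estimate then extends to all of $L_{\mathbb{C}}^{2}(\mathbb{Q}_{p}^{3})$ by continuity, exactly as at the end of the proof of Proposition \ref{Prop-1}.

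Next, recalling that $\delta_{V^{+}}$ is given by the measure $d\lambda=d^{3}\boldsymbol{k}/\lvert\sqrt{\omega(\boldsymbol{k})}\rvert_{p}$ supported on the sheet $k_{0}=\sqrt{\omega(\boldsymbol{k})}$, $\boldsymbol{k}\in U_{\mathfrak{q}}$, and identifying $V^{+}$ with $U_{\mathfrak{q}}$ through this parametrization, I would write
\[
\left\Vert R(g)\right\Vert _{L_{\mathbb{C}}^{2}(V^{+},d\lambda)}^{2}=\int_{U_{\mathfrak{q}}}\lvert\widehat{g}(\boldsymbol{k})\rvert^{2}\frac{d^{3}\boldsymbol{k}}{\lvert\sqrt{\omega(\boldsymbol{k})}\rvert_{p}}.
\]
Since $\omega(\boldsymbol{k})=1+\mathfrak{q}_{0}(\boldsymbol{k})$, the weight equals $\lvert 1+\mathfrak{q}_{0}(\boldsymbol{k})\rvert_{p}^{-1/2}$, and Lemma \ref{lemma16} bounds $\lvert 1+\mathfrak{q}_{0}(\boldsymbol{k})\rvert_{p}^{-1}$ by a constant $C$ uniformly on all of $\mathbb{Q}_{p}^{3}$; hence the weight is bounded on $U_{\mathfrak{q}}$ by $C^{1/2}$.

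Finally, using $U_{\mathfrak{q}}\subseteq\mathbb{Z}_{p}^{3}\subseteq\mathbb{Q}_{p}^{3}$ to enlarge the domain of integration and then invoking Plancherel for the three-dimensional Fourier transform, I would conclude
\[
\int_{U_{\mathfrak{q}}}\lvert\widehat{g}(\boldsymbol{k})\rvert^{2}\frac{d^{3}\boldsymbol{k}}{\lvert\sqrt{\omega(\boldsymbol{k})}\rvert_{p}}\leq C^{1/2}\int_{\mathbb{Q}_{p}^{3}}\lvert\widehat{g}(\boldsymbol{k})\rvert^{2}d^{3}\boldsymbol{k}=C^{1/2}\left\Vert g\right\Vert_{L_{\mathbb{C}}^{2}(\mathbb{Q}_{p}^{3},d^{3}\boldsymbol{x})}^{2},
\]
which gives the asserted inequality with constant $C^{1/4}$. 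The weight bound and Plancherel step are entirely routine; the only point deserving care is that the pointwise restriction $\widehat{g}\mid_{V^{+}}$ is a priori meaningless for a general $L^{2}$ class, and the genuine content of the lemma is precisely that this restriction extends to a bounded operator. This obstacle is resolved, as above, by establishing the estimate on the dense subspace $\mathcal{D}_{\mathbb{C}}(\mathbb{Q}_{p}^{3})$ and extending $R$ by continuity.
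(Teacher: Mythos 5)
Your proof is correct and follows essentially the same route as the paper's one-line argument: the uniform bound from Lemma \ref{lemma16} combined with the identity $\left\vert \sqrt{\omega\left(\boldsymbol{k}\right)}\right\vert_{p}=\left\vert 1+\mathfrak{q}_{0}(\boldsymbol{k})\right\vert_{p}^{1/2}$ on $U_{\mathfrak{q}}$, followed by Plancherel for the three-dimensional Fourier transform. Your preliminary reduction to test functions is harmless but not actually needed here: unlike in Proposition \ref{Prop-1}, the set $V^{+}$ is identified with the positive-measure subset $U_{\mathfrak{q}}\subset\mathbb{Q}_{p}^{3}$ and $d\lambda$ has bounded density with respect to $d^{3}\boldsymbol{k}$, so $\widehat{g}\mid_{V^{+}}$ is already well-defined almost everywhere for every $g\in L_{\mathbb{C}}^{2}\left(\mathbb{Q}_{p}^{3},d^{3}\boldsymbol{x}\right)$.
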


\begin{proof}
The results follows from Lemma \ref{lemma16}, by using that $\left\vert
k_{0}\right\vert _{p}=\left\vert \sqrt{\omega\left(  \boldsymbol{k}\right)
}\right\vert _{p}=\left\vert 1+\mathfrak{q}_{0}(\boldsymbol{k})\right\vert
_{p}^{\frac{1}{2}}$ for $\boldsymbol{k}\in U_{\mathfrak{q}}$.
\end{proof}

\begin{remark}\label{note_Prop_1} Some observations about the functional spaces involved here:
\begin{enumerate}[(i)]
\item Let $X$ be a locally compact totally disconnected
space. We denote by $\mathcal{D}_{\mathbb{C}}(X)$ the $\mathbb{C}$-vector
space \ of locally constant functions with compact support. We recall that
$V^{+}\subset\mathbb{Q}_{p}^{4}$ is an open and compact subset, then
$\mathbb{Q}_{p}^{4}\setminus V^{+}$ is open and closed subset, and \ thus
$V^{+}$ and $\mathbb{Q}_{p}^{4}\setminus V^{+}$ are locally compact totally
disconnected spaces. The following exact sequence holds:%
\begin{equation}
0\rightarrow\mathcal{D}_{\mathbb{C}}(V^{+})\rightarrow\mathcal{D}_{\mathbb{C}%
}(\mathbb{Q}_{p}^{4})\rightarrow\mathcal{D}_{\mathbb{C}}(\mathbb{Q}_{p}%
^{4}\setminus V^{+})\rightarrow0,\label{Eq_sequence}%
\end{equation}
see e.g. \cite[p. 99]{Igusa}.

\item It is well-known that the $\mathbb{C}$-space of finite-valued simple
functions is dense in $L_{\mathbb{C}}^{2}\left(  V^{+},d\lambda\right)  $. By
using the fact that $d\lambda=\frac{d^{3}\boldsymbol{k}}{\left\vert
\sqrt{\omega\left(  \boldsymbol{k}\right)  }\right\vert _{p}}$ is an inner
regular measure, one can show that any finite-valued simple function can be
approximated in the $L_{\mathbb{C}}^{2}\left(  V^{+},d\lambda\right)  $-
norm by an element of $\mathcal{D}_{\mathbb{C}}(V^{+})$. i.e. $\mathcal{D}%
_{\mathbb{C}}(V^{+})$ \ is dense in $L_{\mathbb{C}}^{2}\left(  V^{+}%
,d\lambda\right)  $.

\item The mapping%
\[%
\begin{array}
[c]{ccc}%
L_{\mathbb{C}}^{2}\left(  \mathbb{Q}_{p}^{4},d^{4}k\right)  & \underrightarrow
{R} & L_{\mathbb{C}}^{2}\left(  V^{+},d\lambda\right) \\
&  & \\
f & \rightarrow & \widehat{f}\mid_{V^{+}}%
\end{array}
\]
is a well-defined continuous mapping, more precisely,
\begin{equation}
\left\Vert \widehat{f}\mid_{V^{+}}\right\Vert _{L_{\mathbb{C}}^{2}\left(
V^{+},d\lambda\right)  }\leq C\left\Vert \widehat{f}\right\Vert
_{L_{\mathbb{C}}^{2}\left(  \mathbb{Q}_{p}^{4},d^{4}k\right)  }=C\left\Vert
f\right\Vert _{L_{\mathbb{C}}^{2}\left(  \mathbb{Q}_{p}^{4},d^{4}k\right)
}.\label{Eq_nota_Prop_1}%
\end{equation}
Indeed, (\ref{Eq_nota_Prop_1}) holds when $\widehat{f}\in\mathcal{D}%
_{\mathbb{C}}(\mathbb{Q}_{p}^{4})$ and $\widehat{f}\mid_{V^{+}}\in
\mathcal{D}_{\mathbb{C}}(V^{+})$, see Claim D, then (\ref{Eq_nota_Prop_1})
follows by the fact that $\mathcal{D}_{\mathbb{C}}(V^{+})$\ is dense in
$L_{\mathbb{C}}^{2}\left(  V^{+},d\lambda\right)  $ and \ that \ $\mathcal{D}%
_{\mathbb{C}}(\mathbb{Q}_{p}^{4})$\ is dense in $L_{\mathbb{C}}^{2}\left(
\mathbb{Q}_{p}^{4},d^{4}k\right)  $.
\end{enumerate}
\end{remark}

\begin{remark}
\label{Note_separable_space}
Regarding the spaces of integrable functions introduced in the preceding Remark, we note the following.
\begin{enumerate}[(i)]
\item We have the following sequence:%
\[
L_{\mathbb{C}}^{2}\left(  V^{+},d\lambda\right)  \text{ }\overset
{J}{\hookrightarrow}\text{ }L_{\mathbb{C}}^{2}\left(  U_{\mathfrak{q}}%
,d^{3}\boldsymbol{k}\right)  \text{ }\hookrightarrow\text{ }L_{\mathbb{C}}%
^{2}\left(  \mathbb{Q}_{p}^{3},d^{3}\boldsymbol{k}\right)  ,
\]
where `$\hookrightarrow$' denotes an isometry. The mapping $J$ is defined as%
\[
f\left(  \omega\left(  \boldsymbol{k}\right)  ,\boldsymbol{k}\right)  \text{
}\overset{J}{\rightarrow}\text{ }\frac{f\left(  \omega\left(  \boldsymbol{k}%
\right)  ,\boldsymbol{k}\right)  }{\left\vert \sqrt{\omega\left(
\boldsymbol{k}\right)  }\right\vert^{1/2}_{p}}\text{, \ }%
\boldsymbol{k}\in U_{\mathfrak{q}}\text{.}%
\]
Since $U_{\mathfrak{q}}\subset\mathbb{Q}_{p}^{3}$ is open and compact, any
function $f:U_{\mathfrak{q}}\rightarrow\mathbb{C}$ can be extended to
$\mathbb{Q}_{p}^{3}$ by putting $f\mid$ $_{\mathbb{Q}_{p}^{3}\smallsetminus
U_{\mathfrak{q}}}\equiv0$. It is known that $L_{\mathbb{C}}^{2}\left(
\mathbb{Q}_{p}^{3},d^{3}\boldsymbol{k}\right)  $ admits a countable wavelet
basis, see e.g. \cite[Theorem 8.12.1]{A-K-S}, consequently $L_{\mathbb{C}}%
^{2}\left(  V^{+},d\lambda\right)  $ \ is separable.

\item Since $L_{\mathbb{C}}^{2}\left(  \mathbb{Q}_{p}^{4},d^{4}\boldsymbol{k}%
\right)  $ and $L_{\mathbb{C}}^{2}\left(  V^{+},d\lambda\right)  $ are
separable spaces, we have%
\[
\bigotimes\limits_{j=1}^{n}L_{\mathbb{C}}^{2}\left(  \mathbb{Q}_{p}^{4}%
,d^{4}x_{j}\right)  =L_{\mathbb{C}}^{2}\left(  \mathbb{Q}_{p}^{4n}%
,\prod\limits_{j=1}^{n}d^{4}x_{j}\right)  ,
\]
and
\[
\bigotimes\limits_{j=1}^{n}L_{\mathbb{C}}^{2}\left(  V^{+},d\lambda
_{j}\right)  =L_{\mathbb{C}}^{2}\left(  \left(  V^{+}\right)  ^{n}%
,\prod\limits_{j=1}^{n}d\lambda_{j}\right)  ,
\]
where each $d^{4}x_{j}$ denotes a copy of normalized Haar measure of
$\mathbb{Q}_{p}^{4}$, and each $d\lambda_{j}$ denotes a copy of the measure
$d\lambda$.

\item Take $\theta^{\left(  n+1\right)  }\left(  y,x_{1},\ldots,x_{n}\right)
\in L_{\mathbb{C}}^{2}\left(  \mathbb{Q}_{p}^{4},d^{4}y\right)  \bigotimes
L_{\mathbb{C}}^{2}\left(  \mathbb{Q}_{p}^{4n},\prod\limits_{j=1}^{n}d^{4}%
x_{j}\right)  $, then%
\begin{align*}
\int\limits_{\mathbb{Q}_{p}^{4n}}\int\limits_{V}\left\vert \theta^{\left(
n+1\right)  }\left(  y,x_{1},\ldots,x_{n}\right)  \right\vert ^{2}%
d\lambda\left(  y\right)  \prod\limits_{j=1}^{n}d^{4}x_{j}  & \leq\\
C\int\limits_{\mathbb{Q}_{p}^{4n}}\int\limits_{\mathbb{Q}_{p}^{4}}\left\vert
\theta^{\left(  n+1\right)  }\left(  y,x_{1},\ldots,x_{n}\right)  \right\vert
^{2}d^{4}y\prod\limits_{j=1}^{n}d^{4}x_{j}  & =C\left\Vert \theta^{\left(
n+1\right)  }\right\Vert _{L_{\mathbb{C}}^{2}\left(  \mathbb{Q}_{p}^{4(n+1)}%
,\prod\limits_{j=1}^{n+1}d^{4}x_{j}\right)  }^{2}.
\end{align*}
This result follows from Claim D, by using Fubini's theorem.
\end{enumerate}
\end{remark}

\begin{lemma}
For $f\in L_{\mathbb{C}}^{2}\left(  V^{+},d\lambda\right)  $, we define
$T_{V^{+}}\left(  f\right)  \in\mathcal{D}_{\mathbb{C}}^{\prime}$ by%
\[
\left(  T_{V^{+}}\left(  f\right)  ,\varphi\right)  =\int_{V^{+}}%
f(x)\varphi\left(  x\right)  d\lambda\left(  x\right)  \text{ for }\varphi
\in\mathcal{D}_{\mathbb{C}}\text{.}%
\]
Then we have the following sequence of continuous mappings:%
\[%
\begin{array}
[c]{ccccc}%
\mathcal{H}_{\infty}\left(  \mathbb{C}\right)  & \overset{R}{\rightarrow} &
L_{\mathbb{C}}^{2}\left(  V^{+},d\lambda\right)  & \overset{T_{V^{+}}%
}{\rightarrow} & \mathcal{H}_{\infty}^{\ast}\left(  \mathbb{C}\right)  ,
\end{array}
\]
where the map $R$ is defined as in Proposition \ref{Prop-1}.
\end{lemma}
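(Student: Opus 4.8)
The statement is a conjunction of two continuity assertions, one for each arrow, so I would treat them separately. The continuity of $R\colon \mathcal H_\infty(\mathbb C)\to L^2_{\mathbb C}(V^+,d\lambda)$ is precisely Proposition \ref{Prop-1}, hence nothing new is required there. The only genuinely new point is the continuity of $T_{V^+}\colon L^2_{\mathbb C}(V^+,d\lambda)\to\mathcal H_\infty^*(\mathbb C)$. I would prove this by working entirely on the Fourier side: recall from \eqref{dual_space} that $\mathcal H_\infty^*(\mathbb C)=\bigcup_{m\in\mathbb N}\mathcal H_{-m}(\mathbb C)$ and that $T\in\mathcal H_{-m}(\mathbb C)$ exactly when $\|T\|_{-m}^2=\int_{\mathbb Q_p^4}[\xi]_p^{-m}\,|\widehat T(\xi)|^2\,d^4\xi<\infty$. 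The plan is to produce one index $m$, independent of $f$, and a constant $C$ with $\|T_{V^+}(f)\|_{-m}\le C\,\|f\|_{L^2_{\mathbb C}(V^+,d\lambda)}$; this bound places $T_{V^+}(f)$ in the fixed Hilbert space $\mathcal H_{-m}(\mathbb C)\hookrightarrow\mathcal H_\infty^*(\mathbb C)$ and yields the asserted continuity at once.

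The decisive computation is that of $\widehat{T_{V^+}(f)}$. Since $V^+$ is compact and $d\lambda=d^3\mathbf k/|\sqrt{\omega(\mathbf k)}|_p$ is a finite measure on it --- the mass $\lambda(V^+)$ being finite because $|\sqrt{\omega(\mathbf k)}|_p=|1+\mathfrak q_0(\mathbf k)|_p^{1/2}$ is bounded away from $0$ on $U_{\mathfrak q}$ by Lemma \ref{lemma16} --- every $f\in L^2_{\mathbb C}(V^+,d\lambda)$ is also integrable, so $T_{V^+}(f)$ is a compactly supported measure and Fubini gives
\[
\widehat{T_{V^+}(f)}(\xi)=\int_{V^+}f(x)\,\chi_p\!\left(\mathfrak B(x,\xi)\right)d\lambda(x).
\]
Because $|\chi_p|\equiv 1$, this Fourier transform is a bounded function, with the uniform estimate $|\widehat{T_{V^+}(f)}(\xi)|\le\|f\|_{L^1_{\mathbb C}(V^+,d\lambda)}\le\lambda(V^+)^{1/2}\,\|f\|_{L^2_{\mathbb C}(V^+,d\lambda)}$ valid for every $\xi\in\mathbb Q_p^4$.

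Feeding this into the definition of the $\|\cdot\|_{-m}$ norm, I would then estimate
\[
\|T_{V^+}(f)\|_{-m}^2\le\lambda(V^+)\,\|f\|_{L^2_{\mathbb C}(V^+,d\lambda)}^2\int_{\mathbb Q_p^4}[\xi]_p^{-m}\,d^4\xi,
\]
so that everything reduces to the convergence of $\int_{\mathbb Q_p^4}[\xi]_p^{-m}\,d^4\xi$. Decomposing $\mathbb Q_p^4$ into $\mathbb Z_p^4$ (where $[\xi]_p=1$) and the spheres $\|\xi\|_p=p^{j}$ for $j\ge 1$ (where $[\xi]_p=p^{j}$ and the volume is $p^{4j}(1-p^{-4})$) turns this integral into the geometric series $1+(1-p^{-4})\sum_{j\ge1}p^{j(4-m)}$, which converges precisely when $m>4$. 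Fixing $m=5$ then gives $\|T_{V^+}(f)\|_{-5}\le C\,\|f\|_{L^2_{\mathbb C}(V^+,d\lambda)}$, so $T_{V^+}$ is continuous from $L^2_{\mathbb C}(V^+,d\lambda)$ into $\mathcal H_{-5}(\mathbb C)$ and hence into $\mathcal H_\infty^*(\mathbb C)$; combined with Proposition \ref{Prop-1} this proves the sequence of continuous maps. The only step demanding real care is the identification and uniform boundedness of $\widehat{T_{V^+}(f)}$: it is here that the compactness of $V^+$ and the finiteness of $d\lambda$ do the essential work, replacing what in the Archimedean setting would have to be a trace/regularity estimate and allowing a single negative index $m>4$ to suffice.
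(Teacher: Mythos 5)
Your proposal is correct and follows essentially the same route as the paper: both compute $\widehat{T_{V^{+}}(f)}(\xi)=\int_{V^{+}}\chi_{p}\left(\mathfrak{B}(x,\xi)\right)f(x)\,d\lambda(x)$, bound it uniformly via Cauchy--Schwarz using the compactness of $V^{+}$ and finiteness of $d\lambda$, and then use the integrability of $[\xi]_{p}^{-m}$ for $m\geq 5$ to land in a fixed $\mathcal{H}_{-m}(\mathbb{C})\subset\mathcal{H}_{\infty}^{\ast}(\mathbb{C})$ with the norm estimate $\Vert T_{V^{+}}(f)\Vert_{-m}\leq C\Vert f\Vert_{L_{\mathbb{C}}^{2}(V^{+},d\lambda)}$. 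The only differences are expository: you spell out the convergence of $\int_{\mathbb{Q}_{p}^{4}}[\xi]_{p}^{-m}d^{4}\xi$ by the sphere decomposition and justify $\lambda(V^{+})<\infty$ via Lemma \ref{lemma16}, where the paper simply cites these facts.
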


\begin{proof}
The support of $T_{V^{+}}\left(  f\right)  $ is compact since it is contained
in $V$, which is a compact subset of $\mathbb{Q}_{p}^{4}$. The Fourier
transform of $T_{V^{+}}\left(  f\right)  $ in $\mathcal{D}_{\mathbb{C}%
}^{\prime}$ is the locally constant function%
\[
\widehat{f}\left(  k\right)  =\int_{V^{+}}\chi_{p}\left(  \mathcal{B}\left(
x,k\right)  \right)  f(x)d\lambda\left(  x\right)  ,
\]
for a similar calculation the reader may see, for instance, \cite[Theorem
4.9.3]{A-K-S}. Now, identifying $f$ with the induced distribution
$T_{V^+}f$ on $V^+$, by using the definition of $\mathcal{H}_{\infty}^{\ast
}\left(  \mathbb{C}\right)  $ (see \eqref{dual_space}), the Cauchy-Schwartz
inequality, and the fact that $\int_{\mathbb{Q}_{p}^{4}}\left[  k\right]
^{-l}d^{4}k<\infty$ for $l\geq5$, we have
\begin{align*}
\left\Vert f\right\Vert _{-l}^{2} &  =\int_{\mathbb{Q}_{p}^{4}}\left[
k\right]  ^{-l}\left\vert \widehat{f}(k)\right\vert ^{2}d^{4}k=\int
_{\mathbb{Q}_{p}^{4}}\left[  k\right]  ^{-l}\left\vert \int_{V^{+}}\chi
_{p}\left(  \mathcal{B}\left(  x,k\right)  \right)  f(x)d\lambda\left(
x\right)  \right\vert ^{2}d^{4}k\\
&  \leq C(l)\int_{V^{+}}\left\vert f(x)\right\vert ^{2}d\lambda\left(
x\right)  =C(l)\left\Vert f\right\Vert _{L_{\mathbb{C}}^{2}\left(
V^{+},d\lambda\right)  }^{2},
\end{align*}
which implies that $T_{V^{+}}\left(  f\right)  \in\mathcal{H}_{\infty}^{\ast
}\left(  \mathbb{C}\right)  $.
\end{proof}

\section{Free non-Archimedean quantum fields}

\subsection{The Segal quantization}

\label{se5}

We start by \ reviewing some \ well-known fact about quantization. For an
in-depth discussion the reader may consult \cite{SW64,Reed-SimonII}, see also
\cite{Dimock,Folland,Lopu,Strocchi} for more physically-oriented approaches.
Our presentation follows closely the book of Reed and Simon \cite{Reed-SimonII}. 
In particular, our notation mimics the one used in that book. 
We set
$\mathcal{H}=L_{\mathbb{C}}^{2}\left(  V^{+},d\lambda\right)  $ and denote by
$\left\langle \cdot,\cdot\right\rangle $ the inner product of $\mathcal{H}$.
We assume that $\left\langle f,\alpha g\right\rangle =\alpha\left\langle
f,g\right\rangle $, for $\alpha\in\mathbb{C}$, and $f$, $g\in\mathcal{H}$. We
define \textit{the Fock space over} $\mathcal{H}$ as $\mathfrak{F}%
(\mathcal{H})=\oplus_{n=0}^{\infty}\mathcal{H}^{\left(  n\right)  }$, where
$\mathcal{H}^{\left(  n\right)  }=\otimes_{k=1}^{n}\mathcal{H}$, by definition
$\mathcal{H}^{\left(  0\right)  }=\mathbb{C}$. We denote by $S_{n}%
:\mathcal{H}^{\left(  n\right)  }\rightarrow S\mathcal{H}^{\left(  n\right)
}$, the symmetrization operator, and define $S=\oplus_{n=0}^{\infty}S_{n}$,
see \cite[Section II.4]{Reed-SimonI}. The symmetric Fock space over
$\mathcal{H}$ (also called \textit{the boson Fock space} over $\mathcal{H}$)
is defined as $\mathfrak{F}_{s}(\mathcal{H})=\oplus_{n=0}^{\infty}%
\mathcal{H}_{s}^{\left(  n\right)  }$, where $\mathcal{H}_{s}^{\left(
n\right)  }=S_{n}\mathcal{H}^{\left(  n\right)  }$. We call $\mathcal{H}%
_{s}^{\left(  n\right)  }$\ the $n$-\textit{particle subspace} of
$\mathfrak{F}_{s}(\mathcal{H})$. We use the same symbol 
$\left\langle \cdot,\cdot\right\rangle$ to denote the inner product of 
$\mathfrak{F}(\mathcal{H})$.

We now fix a vector $f$ in $\mathcal{H}$. For the vectors of the
form $\eta=\psi_{1}\otimes\ldots\otimes\psi_{n}$, we define a map
$b^{-}\left(  f\right)  :\mathcal{H}^{\left(  n\right)  }\rightarrow
\mathcal{H}^{\left(  n-1\right)  }$ by $b^{-}\left(  f\right)  \left(
\eta\right)  =\left\langle f,\psi_{1}\right\rangle \psi_{2}\otimes
\ldots\otimes\psi_{n}$. Then $b^{-}\left(  f\right)  $ extends to a bounded
map (of norm $\left\Vert f\right\Vert _{\mathcal{H}}$) of $\mathcal{H}%
^{\left(  n\right)  }$ in to $\mathcal{H}^{\left(  n-1\right)  }$. In the case
$n=0$, we define $b^{-}\left(  f\right)  :\mathcal{H}^{\left(  0\right)
}\rightarrow0$. The adjoint $b^{+}\left(  f\right)  :\mathcal{H}^{\left(
n\right)  }\rightarrow\mathcal{H}^{\left(  n+1\right)  }$ of $b^{-}\left(
f\right)  $ is defined as $b^{+}\left(  f\right)  \left(  \psi_{1}%
\otimes\ldots\otimes\psi_{n}\right)  =f\otimes\psi_{1}\otimes\ldots\otimes
\psi_{n}$. The map $f\rightarrow b^{+}\left(  f\right)  $ is linear, but
$f\rightarrow b^{-}\left(  f\right)  $ is anti-linear.

The boson Fock space is invariant under $b^{-}\left(  f\right)  $ but not
under $b^{+}\left(  f\right)  $. A vector $\psi=\left\{  \psi^{\left(
n\right)  }\right\}  _{n\in\mathbb{N}}\in\mathfrak{F}_{s}(\mathcal{H})$ is
called a \textit{finite particle vector} if $\psi_{n}=0$ for all but finitely
many $n$. \ The set of all finite vectors is denoted as $F_{0}$. We set the
vector $\Upsilon_{0}=\left(  1,0,0,\ldots\right)  $ to be the \textit{vacuum}.

Let $A$ be a self-adjoint operator on $\mathcal{H}$ with domain of essential
self-adjointness $D$. Let $D_{A}=\left\{  \psi\in F_{0};\psi^{\left(
n\right)  }\in\otimes_{k=1}^{n}D\text{ for each }n\right\}  $. We define the
operator $\Gamma\left(  A\right)  $ (\textit{the second quantization of }$A$)
on $D_{A}\cap\mathcal{H}_{s}^{\left(  n\right)  }$ as
\[
A\otimes I\otimes\cdots\otimes I+I\otimes A\otimes\cdots\otimes I+\cdots
+I\otimes I\otimes\cdots\otimes A,
\]
where $I$ is the identity operator. The operator $\Gamma\left(  A\right)  $ is
essentially self-adjoint on $D_{A}$. In the case $A=I$, the second
quantization $N=\Gamma\left(  A\right)  $ (\textit{the number operator}) is
essentially self-adjoint on $F_{0}$ and for $\phi\in\mathcal{H}_{s}^{\left(
n\right)  }$, $N\phi=n\phi$.

The\textit{\ annihilation operator} $a^{-}(f)$ on $\mathfrak{F}_{s}%
(\mathcal{H})$ with domain $F_{0}$ is given by%
\[
a^{-}(f)=\sqrt{N+1}\text{ }b^{-}(f).
\]
For $\psi$, $\eta$ in $F_{0}$,%
\[
\left\langle \sqrt{N+1}\text{ }b^{-}(f)\psi,\eta\right\rangle =\left\langle
\psi,Sb^{+}(f)\sqrt{N+1}\eta\right\rangle ,
\]
which implies that
\[
\left(  a^{-}(f)\right)  ^{\ast}\upharpoonright_{F_{0}}=Sb^{+}(f)\sqrt{N+1},
\]
where `$\ast$' denotes the adjoint operator. The operator $\left(
a^{-}(f)\right)  ^{\ast}$ is called the \textit{creation operator}. Both
$a^{-}(f)$ and $\left(  a^{-}(f)\right)  ^{\ast}\upharpoonright_{F_{0}}$ are
closable, the corresponding closures are denoted as $a^{-}(f)$\ and as
$a^{-}(f)^{\ast}$.

\begin{definition}
For $f\in\mathcal{H}$, the Segal quantum field operator $\boldsymbol{\Phi
}_{\text{S}}$ on $F_{0}$ is defined as
\begin{equation}
\boldsymbol{\Phi}_{\text{S}}(f)=\dfrac{1}{\sqrt{2}}[a^{-}(f)+a^{-}(f)^{\ast}].
\label{free field}%
\end{equation}

\end{definition}

The mapping \ from $\mathcal{H}$ into \ the self-adjoint operators on
$\mathfrak{F}_{s}(\mathcal{H})$ given by $f\rightarrow\boldsymbol{\Phi
}_{\text{S}}(f)$ is called \textit{the Segal quantization over} $\mathcal{H}$.
Notice that the Segal quantization is a real linear map.

\begin{remark}
By using \ the fundamental properties of the Segal quantization, see
\cite[Theorem X.41 ]{Reed-SimonII}, we obtain the following facts (among others):
\begin{enumerate}[(i)]
\item For each $f\in\mathcal{H}$, $\boldsymbol{\Phi}_{\text{S}}(f)$ is
essentially self-adjoint on $F_{0}$.

\item The commutation relations: for each $\psi\in F_{0}$, and $f$,
$g\in\mathcal{H}$,%
\begin{equation}
\boldsymbol{\Phi}_{\text{S}}(f)\boldsymbol{\Phi}_{\text{S}}(g)\psi
-\boldsymbol{\Phi}_{\text{S}}(g)\boldsymbol{\Phi}_{\text{S}}(f)\psi=\sqrt
{-1}\operatorname{Im}\left(  \left\langle f,g\right\rangle \right)  \psi,
\label{CMM}%
\end{equation}
that is, $\left[  \boldsymbol{\Phi}_{\text{S}}(f),\boldsymbol{\Phi}_{\text{S}%
}(g)\right]  =\sqrt{-1}\operatorname{Im}\left(  \left\langle f,g\right\rangle
\right)  I $, on $F_{0}$.
\end{enumerate}
\end{remark}

\subsubsection{The free Hermitian field of unit mass}

We define for each $f\in\mathcal{H}_{\infty}\left(  \mathbb{R}\right)  $,%
\[
\boldsymbol{\Phi}(f)=\boldsymbol{\Phi}_{\text{S}}(Rf)\,,
\]
with $R$ defined as in Lemma \ref{lemma16A}, and for each 
$g\in\mathcal{H}_{\infty}\left(  \mathbb{C}\right)$,
\begin{equation}
\boldsymbol{\Phi}(g)=\boldsymbol{\Phi}(\operatorname{Re}g)+\sqrt
{-1}\boldsymbol{\Phi}(\operatorname{Im}g)\,. \label{Quantum_field}%
\end{equation}
We call the mapping $g\rightarrow\boldsymbol{\Phi}(g)$ \textit{the
free Hermitian scalar field of unit mass}.

\begin{remark}
By extending the mapping $R$ as in Remark \ref{rem17}, the field $f\mapsto \boldsymbol{\Phi}(f)$ 
remains well-defined. We emphasize that the presence of $R$ (in any of its forms) means that we
are working on-shell.
\end{remark}

\subsubsection{The $p$-adic restricted Poincar\'{e} group}
As we do not have the structure of light cones available, we must choose a substitute for them.
Here we will base our treatment on the mass shells $V^\pm$.

We define \textit{the }$p$\textit{-adic restricted Lorentz} group as%
\[
\mathcal{L}_{+}^{\uparrow}=\left\{  \Lambda\in\boldsymbol{O}(\mathfrak{q}%
);\Lambda\left(  V^{\pm}\right)  =V^{\pm}\right\}  .
\]
This group is non trivial since transformations of the form%
\[
\left\{  \left[
\begin{array}
[c]{cc}%
1 & 0\\
0 & \digamma
\end{array}
\right]  \in\boldsymbol{O}(\mathfrak{q})\text{; }\digamma\in\boldsymbol{O}%
(\mathfrak{q}_{0})\right\}  ,
\]
belong to $\mathcal{L}_{+}^{\uparrow}$.
A further justification for choosing $V^\pm$ as a replacement for the light cones comes
from the fact that the distributions $\delta_{\pm}\left(
\mathfrak{q}-1\right)  $ are invariant under $\mathcal{L}_{+}^{\uparrow}$, see
\cite[Lemma 163]{Zuniga-LNM-2016}.

We define \textit{the }$p$\textit{-adic restricted Poincar\'{e} group}
$\mathcal{P}_{+}^{\uparrow}$ as the set of pairs $\left(  a,\Lambda\right)  $,
where $a\in\mathbb{Q}_{p}^{4}$ and $\Lambda\in\mathcal{L}_{+}^{\uparrow}$,
with the group operation
\[
\left(  a,\Lambda_{1}\right)  \left(  b,\Lambda_{2}\right)  =\left(
a+\Lambda_{1}b,\Lambda_{1}\Lambda_{2}\right)  .
\]

The group $\mathcal{P}_{+}^{\uparrow}$ acts naturally on $\mathbb{Q}_{p}^{4} $
by setting $\left(  a,\Lambda\right)  x=\Lambda x+a$. With the topology inherited from
$\left(  \mathbb{Q}_{p}^{4},\left\Vert \cdot\right\Vert _{p}\right)  $,
$\mathcal{L}_{+}^{\uparrow}$ and $\mathcal{P}_{+}^{\uparrow}$ become locally
compact topological groups.

On $L_{\mathbb{C}}^{2}\left(  V^{+},d\lambda\right)  $, we define the
following projective representation of the restricted Poincar\'{e} group:%
\begin{equation}
\left(  U\left(  a,\Lambda\right)  \psi\right)  \left(  k\right)  =\chi
_{p}\left(  \mathcal{B}\left(  a,k\right)  \right)  \psi\left(  \Lambda
^{-1}k\right)  . \label{representation_Poincare}%
\end{equation}

\subsection{The $p$-adic Wightman axioms}

We present here a $p$-adic counterpart of the classical Wightman axioms, see
e.g. \cite{SW64,Reed-SimonII}, and references therein. We use units where the
rationalized Planck's constant and the speed of light are equal to one. We
take $H=\mathfrak{F}_{s}(L_{\mathbb{C}}^{2}\left(  V^{+},d\lambda\right)  ) $,
$\mathfrak{U}=\Gamma\left(  U\left(  \cdot,\cdot\right)  \right)  $, with
$U\left(  \cdot,\cdot\right)  $ being defined as in (\ref{representation_Poincare}),
$\boldsymbol{\Phi}$ as in (\ref{Quantum_field}), and $D=F_{0}$. A
$p$\textit{-adic Hermitian scalar quantum field theory} is a quadruple
$\left\{  H,\mathfrak{U},\boldsymbol{\Phi},D\right\}  $ which satisfies the
following properties:

\noindent\textbf{Relativistic invariance of states}. $\ H$ is a separable
Hilbert space and $\mathfrak{U}\left(  \cdot,\cdot\right)  $ is a strongly
continuous unitary representation on $H$ of the $p$-adic restricted
Poincar\'{e} group.

\noindent\textbf{Spectral condition}. We define \textit{the closed forward
semigroup} $\overline{S(V^{+})}$ as the topological closure of the additive
semigroup generated by the vectors of $V^{+}$. Notice that since $V^{+}%
\subset\mathbb{Z}_{p}^{4}$, $\overline{S(V^{+})}$ is a compact subset of
$\mathbb{Z}_{p}^{4}$. Furthermore, since $\mathcal{L}_{+}^{\uparrow}\left(
V^{+}\right)  =V^{+}$, we have $\mathcal{L}_{+}^{\uparrow}\left(
\overline{S(V^{+})}\right)  =\overline{S(V^{+})}$. The $p$-adic counterpart of
the spectral condition is the following: there exists a projection-valued
measure $E_{V^{+}}$ on $\mathbb{Q}_{p}^{4}$ corresponding to $\mathfrak{U}%
(a,I)$ having su\-pport in $\overline{S(V^{+})}$.

\begin{remark}
In the classical case by using a Stone type theorem, see \cite[Theorem
VIII.12]{Reed-SimonI}, one shows the existence of four commuting operators
$P_{0}$, $P_{1}$, $P_{2}$, $P_{3}$, on a suitable Hilbert space so that
$\mathfrak{U}(a,I)=e^{i\sum a_{j}P_{j}}$. In the $p$-adic case, we do not have
a complete theory of semigroups, with $p$-adic time, for operators acting on
complex-valued functions. For this reason, at the moment, we do not have a
definition for the $p$-adic counterparts of the operators $P_{0}$, $P_{1}$,
$P_{2}$, $P_{3}$, and consequently, we do not know their spectra.
\end{remark}

\noindent\textbf{Existence and uniqueness of the vacuum.} There exists a
unique vector $\Upsilon_{0}\in H$ such that $U\left(  a,I\right)  \Upsilon
_{0}=\Upsilon_{0}$ for all $a\in\mathbb{Q}_{p}^{4}$, this vector is called
\textit{the vacuum}.

\noindent\textbf{Invariant domains for fields.} There exists a dense subspace
$D\subset H$ and a map from $\mathcal{H}_{\infty}\left(  \mathbb{C}\right)  $
to the unbounded operators on $H$ such that:
\begin{enumerate}[(i)]
\item For each $f\in\mathcal{H}_{\infty}\left(  \mathbb{C}\right)$, we have that
$D\subset Dom\left(  \boldsymbol{\Phi}\left(  f\right)  \right)  $, $D\subset
Dom\left(  \boldsymbol{\Phi}\left(  f\right)  ^{\ast}\right)  $, and
$\boldsymbol{\Phi}\left(  f\right)  ^{\ast}\upharpoonright D=\boldsymbol{\Phi
}\left(  \overline{f}\right)  \upharpoonright D$.

\item $\Upsilon_{0}\in D$, and \ $\boldsymbol{\Phi}\left(  f\right)
D\subset D$ for any $f\in\mathcal{H}_{\infty}\left(  \mathbb{C}\right)  $.

\item For a fixed $\psi\in D$, the map $f\rightarrow\boldsymbol{\Phi
}\left(  f\right)  \psi$ is linear.
\end{enumerate}
\noindent\textbf{Regularity of the field.} For any $\psi_{1}$ and $\psi_{2}$
in $D$, the map
\[
f\rightarrow\left\langle \psi_{1},\boldsymbol{\Phi}\left(  f\right)  \psi
_{2}\right\rangle _{H}%
\]
is an element of $\mathcal{H}_{\infty}^{\ast}\left(  \mathbb{C}\right)  $.
In the Archimedean case this is just a tempered distribution, here it
turns out to be an element of $\mathcal{H}_{\infty}^{\ast}\left(  \mathbb{C}\right)$,
providing yet another argument to consider this space as the correct replacement in
the $p-$adic framework of the Schwartz space $\mathcal{S}$.

\noindent\textbf{Poincar\'{e} invariance of the field.} For each $\left(
a,\Lambda\right)  \in\mathcal{P}_{+}^{\uparrow}$, $\mathfrak{U}(a,\Lambda
)D\subset D$, and for all $f\in\mathcal{H}_{\infty}\left(  \mathbb{C}\right)
$, $\psi\in D$,%
\[
\mathfrak{U}\left(  a,\Lambda\right)  \boldsymbol{\Phi}\left(  f\right)
\mathfrak{U}\left(  a,\Lambda\right)  ^{-1}\psi=\boldsymbol{\Phi}\left(
\left(  a,\Lambda\right)  f\right)  \psi,
\]
where
\[
\left(  a,\Lambda\right)  f\left(  x\right)  =f\left(  \Lambda^{-1}\left(
x-a\right)  \right)  .
\]

\noindent\textbf{Local commutativity.}
The $p$-adic local commutativity property states that\ if $f$, $g$ are in $\mathcal{D}%
_{\mathbb{C}}\left(  \mathbb{Z}_{p}^{4}\right)  $, then
\[
\left[  \boldsymbol{\Phi}(f),\boldsymbol{\Phi}\left(  g\right)  \right]
\Psi=\left(  \boldsymbol{\Phi}(f)\boldsymbol{\Phi}\left(  g\right)
-\boldsymbol{\Phi}\left(  g\right)  \boldsymbol{\Phi}(f)\right)  \Psi=0,
\]
for all $\Psi\in D$. In the Archimedean case, the commutator vanishes whenever
the test functions $f,g$ are supported on two respective spacelike-separated subsets,
that is, $f(x)g(y)=0$ whenever $x-y$ does not belong to the interior of the light cone.
This subset can be characterized as the `ball of radius $0$' of Minkowski spacetime
in the sense of the theory of indefinite quadratic forms (see, e.g., \cite{Hor10} and
references therein). Our result can be seen as the equivalent statement in the $p-$adic case,
with the unit ball playing this role.

\noindent\textbf{Cyclicity of the vacuum.} The set $D_{0}$ of finite linear
combinations of vectors \ of the form $\boldsymbol{\Phi}\left(  f_{1}\right)
\cdots\boldsymbol{\Phi}\left(  f_{n}\right)  \Upsilon_{0}$ is dense in $H$.

\begin{theorem}
\label{Theorem2} The following hold true:
\begin{enumerate}[(i)]
\item\label{itemi} The quadruple
\[
\left\{  \mathfrak{F}_{s}(L_{\mathbb{C}}^{2}\left(  V^{+},d\lambda\right)
),\Gamma\left(  U\left(  \cdot,\cdot\right)  \right)  ,\boldsymbol{\Phi}%
,F_{0}\right\}
\]
satisfies the $p$-adic Wightman axioms.

\item\label{itemii} For each $f\in\mathcal{H}_{\infty}\left(  \mathbb{C}\right)  $,%
\[
\boldsymbol{\Phi}\left(  \square_{\mathfrak{q},\alpha}f\right)  =0.
\]
\end{enumerate}
\end{theorem}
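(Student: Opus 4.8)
The plan is to treat the two assertions separately, disposing of part (ii) in one line and then checking the $p$-adic Wightman axioms of part (i) one at a time, invoking the general theory of Segal quantization (\cite[Theorem X.41]{Reed-SimonII}) for every axiom that is insensitive to the background field and isolating the three places where the $p$-adic geometry genuinely intervenes.

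For part (ii), recall from \eqref{Quantum_field} and Lemma \ref{lemma16A} that $\boldsymbol{\Phi}(g)=\boldsymbol{\Phi}_{\text{S}}(R(\operatorname{Re}g))+\sqrt{-1}\,\boldsymbol{\Phi}_{\text{S}}(R(\operatorname{Im}g))$ with $R(h)=\widehat{h}\mid_{V^{+}}$. Since $\widehat{\square_{\mathfrak{q},\alpha}h}=|\mathfrak{q}-1|_{p}^{\alpha}\widehat{h}$ and $\mathfrak{q}\equiv1$ on $V^{+}$, the factor $|\mathfrak{q}-1|_{p}^{\alpha}$ vanishes identically on $V^{+}$, so $R(\square_{\mathfrak{q},\alpha}h)=0$ in $L^{2}_{\mathbb{C}}(V^{+},d\lambda)$ for every $h$. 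As $\square_{\mathfrak{q},\alpha}$ has real symbol it commutes with taking real and imaginary parts (cf. the proof of Lemma \ref{lemma15}), whence $R(\operatorname{Re}\square_{\mathfrak{q},\alpha}f)=R(\square_{\mathfrak{q},\alpha}\operatorname{Re}f)=0$ and likewise for the imaginary part; therefore $\boldsymbol{\Phi}(\square_{\mathfrak{q},\alpha}f)=\boldsymbol{\Phi}_{\text{S}}(0)+\sqrt{-1}\,\boldsymbol{\Phi}_{\text{S}}(0)=0$.

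For part (i), the routine axioms go as follows. Separability of $H=\mathfrak{F}_{s}(L^{2}_{\mathbb{C}}(V^{+},d\lambda))$ follows from separability of $L^{2}_{\mathbb{C}}(V^{+},d\lambda)$ (Remark \ref{Note_separable_space}); a direct computation using $\mathfrak{B}(\Lambda x,\Lambda y)=\mathfrak{B}(x,y)$ shows that $U(a,\Lambda)$ of \eqref{representation_Poincare} is a (genuine) unitary representation of $\mathcal{P}_{+}^{\uparrow}$, unitarity using the $\mathcal{L}_{+}^{\uparrow}$-invariance of $d\lambda$ (\cite[Lemma 163]{Zuniga-LNM-2016}), and strong continuity is immediate because $a\mapsto\chi_{p}(\mathfrak{B}(a,k))$ is locally constant uniformly for $k$ in the compact set $V^{+}\subset\mathbb{Z}_{p}^{4}$; hence $\mathfrak{U}=\Gamma(U)$ is a strongly continuous unitary representation. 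The invariant-domain and adjoint statements, essential self-adjointness, and the commutation relation \eqref{CMM} are exactly \cite[Theorem X.41]{Reed-SimonII}. Regularity follows from the estimate $\|Rf\|_{L^{2}(V^{+},d\lambda)}\le C\|f\|_{l}$ of Proposition \ref{Prop-1}, which gives $|\langle\psi_{1},\boldsymbol{\Phi}(f)\psi_{2}\rangle|\le C(\psi_{1},\psi_{2})\|f\|_{l}$ and hence continuity of $f\mapsto\langle\psi_{1},\boldsymbol{\Phi}(f)\psi_{2}\rangle$ on $\mathcal{H}_{\infty}(\mathbb{C})$, i.e.\ membership in $\mathcal{H}_{\infty}^{\ast}(\mathbb{C})$ by \eqref{dual_space}. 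Poincar\'{e} covariance reduces to the intertwining identity $R((a,\Lambda)f)=U(a,\Lambda)\,Rf$, obtained by the substitution $y=\Lambda^{-1}(x-a)$ in the Fourier integral (using $|\det\Lambda|_{p}=1$ and invariance of $\mathfrak{B}$), combined with $\Gamma(U)\boldsymbol{\Phi}_{\text{S}}(h)\Gamma(U)^{-1}=\boldsymbol{\Phi}_{\text{S}}(Uh)$. Cyclicity follows from the standard Fock-space argument once $R(\mathcal{H}_{\infty}(\mathbb{C}))$ is known to be dense in $L^{2}_{\mathbb{C}}(V^{+},d\lambda)$, which holds because it contains $\mathcal{D}_{\mathbb{C}}(V^{+})$ (extend by zero and restrict) and the latter is dense (Remark \ref{note_Prop_1}).

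The substantive points are the spectral condition, uniqueness of the vacuum, and local commutativity. For the spectral condition I would build the projection-valued measure from the abelian family $a\mapsto\mathfrak{U}(a,I)$: on the $n$-particle sector it acts as multiplication by $\chi_{p}(\mathfrak{B}(a,k_{1}+\cdots+k_{n}))$, so its spectral measure is the push-forward of $d\lambda^{\otimes n}$ under $(k_{1},\dots,k_{n})\mapsto k_{1}+\cdots+k_{n}$ and is supported in the $n$-fold sumset of $V^{+}$, which lies in $S(V^{+})$; the vacuum sector sits at the apex $0$, so summing over $n$ places the support of $E_{V^{+}}$ inside $\overline{S(V^{+})}$. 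Uniqueness of the vacuum is the statement that $a\mapsto\mathfrak{U}(a,I)$ has no nonzero invariant vector off the $0$-particle sector: an invariant $\Psi\in\mathcal{H}_{s}^{(n)}$, $n\ge1$, must be supported where $k_{1}+\cdots+k_{n}=0$, and this set has $d\lambda^{\otimes n}$-measure zero because it is contained in the zero set of the non-constant $p$-adic analytic function $(\mathbf{k}_{1},\dots,\mathbf{k}_{n})\mapsto\sum_{j}\sqrt{\omega(\mathbf{k}_{j})}$ (which equals $n\ne0$ at the origin), forcing $\Psi=0$.

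Finally, local commutativity is the heart of the matter and the step I expect to carry the real content. Expanding $f$ and $g$ into real and imaginary parts and using bilinearity of the commutator, $[\boldsymbol{\Phi}(f),\boldsymbol{\Phi}(g)]$ becomes a combination of four commutators of real fields, each of which by \eqref{CMM} is the scalar $\sqrt{-1}\,\operatorname{Im}\langle R\varphi,R\psi\rangle\,I$ with $\varphi,\psi$ real and supported in $\mathbb{Z}_{p}^{4}$; so everything reduces to showing $\operatorname{Im}\langle R\varphi,R\psi\rangle=0$ for such $\varphi,\psi$. Here the $p$-adic geometry does all the work: for $k\in V^{+}\subset\mathbb{Z}_{p}^{4}$ and $x\in\mathbb{Z}_{p}^{4}$ one has $\mathfrak{B}(x,k)\in\mathbb{Z}_{p}$, hence $\chi_{p}(\mathfrak{B}(x,k))=1$, so $R\varphi=\widehat{\varphi}\mid_{V^{+}}$ is the \emph{constant} $C(\mathfrak{q})\int_{\mathbb{Z}_{p}^{4}}\varphi\,d^{4}x\in\mathbb{R}$. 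Consequently $\langle R\varphi,R\psi\rangle$ is a real multiple of $\mathrm{vol}(V^{+})$, its imaginary part vanishes, and $[\boldsymbol{\Phi}(f),\boldsymbol{\Phi}(g)]=0$ on $F_{0}=D$. This is exactly the statement that the unit ball plays the role of the spacelike region, and the only delicate bookkeeping is tracking the real/imaginary decomposition of the complex fields so that the $c$-number is seen to vanish term by term.
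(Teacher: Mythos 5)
Your proposal is correct, and its overall architecture---verifying the axioms one at a time, with Segal quantization (Theorem X.41 of \cite{Reed-SimonII}) absorbing everything that is blind to the $p$-adic geometry, and part (ii) reduced to the vanishing of $|\mathfrak{q}-1|_{p}^{\alpha}$ on $V^{+}$---is the same as the paper's. Within that frame you argue differently in three places. For local commutativity you show that for real $\varphi$ supported in $\mathbb{Z}_{p}^{4}$ the restriction $\widehat{\varphi}\mid_{V^{+}}$ is a \emph{real constant} (since $\mathfrak{B}(x,k)\in\mathbb{Z}_{p}$ and $\chi_{p}\mid_{\mathbb{Z}_{p}}\equiv1$), so $\operatorname{Im}\left\langle R\varphi,R\psi\right\rangle =0$; the paper packages the identical fact as the vanishing on $\mathbb{Z}_{p}^{4}$ of the Pauli--Jordan-type function $\Delta$ of \eqref{Eq_formula-delta}---same content, different bookkeeping. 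For regularity you invoke the abstract number-operator bound $\Vert\boldsymbol{\Phi}_{\text{S}}(h)\psi\Vert\leq\sqrt{2}\Vert h\Vert\,\Vert(N+1)^{1/2}\psi\Vert$ together with Proposition \ref{Prop-1}, where the paper reproves this estimate by hand (the $I_{0}$, $I_{1}$ computation); yours is shorter, the paper's is self-contained. The genuine divergence is uniqueness of the vacuum: the paper dismisses it as ``the same as the Archimedean'' argument, but the classical proof uses that a sum of vectors on the positive mass shell has strictly positive time component, hence cannot vanish---and this \emph{fails} over $\mathbb{Q}_{p}$, where the paper's notion of positivity is not additive (in $\mathbb{Q}_{5}$ one has $2+2+(-4)=0$ with all three terms ``positive''). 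Your replacement---an invariant $n$-particle vector must be supported on $\{k_{1}+\cdots+k_{n}=0\}$, which has $d\lambda^{\otimes n}$-measure zero because it lies in the zero set of the analytic function $\sum_{j}\sqrt{\omega(\boldsymbol{k}_{j})}$---is exactly the fix this axiom needs in the $p$-adic setting, and is more complete than what the paper offers. Two minor points to tighten: to get measure zero you should note the function is non-constant (hence not identically zero) on \emph{each} of the finitely many charts covering $(U_{\mathfrak{q}})^{n}$, not just near the origin (fixing all variables but one reduces this to non-constancy of $\sqrt{\omega(\cdot)}$ on a ball); and both you and the paper implicitly use $0\in\overline{S(V^{+})}$ for the vacuum's contribution to the spectral measure, which holds since $p^{j}k\rightarrow0$ in $\mathbb{Q}_{p}^{4}$ for any fixed $k\in V^{+}$.
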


\begin{proof}
In the proof of the first part \eqref{itemi}, we use the notation
\[
\mathfrak{F}_{s}=\mathfrak{F}_{s}(L_{\mathbb{C}}^{2}\left(  V^{+}%
,d\lambda\right)  )=\allowbreak\oplus_{n=0}^{\infty}\mathcal{H}_{s}^{\left(
n\right)  }.
\]

\noindent\textbf{Relativistic invariance of states}. We first note that
$\mathfrak{F}_{s}$\ is \ separable because $L_{\mathbb{C}}^{2}\left(
V^{+},d\lambda\right)  $ is separable, see Remark \ref{Note_separable_space}
(i). On the other hand, since $V^{+}$ is invariant under $\mathcal{L}_{+}^{\uparrow
}$, $U\left(  \cdot,\cdot\right)  $ is a strongly continuous unitary
representation of $\mathcal{P}_{+}^{\uparrow}$ on $L_{\mathbb{C}}^{2}\left(
V^{+},d\lambda\right)  $, see (\ref{representation_Poincare}). By definition
$\Gamma\left(  U\right)  $ is the unitary operator on $\mathfrak{F}_{s}$ given on $\mathcal{H}_{s}^{\left(  n\right)  }$ by $\otimes
_{k=1}^{n}U\left(  \cdot,\cdot\right)  $, consequently $\Gamma\left(
U\right)  :\mathcal{H}_{s}^{\left(  n\right)  }\rightarrow\mathcal{H}%
_{s}^{\left(  n\right)  }$ determines a strongly continuous unitary representation
of $\mathcal{P}_{+}^{\uparrow}$ on $\mathcal{H}_{s}^{\left(  n\right)  }$.
Notice that $\Gamma\left(  U\right)  $ is strongly continuous in $F_{0} $, and since
$F_{0}$ is dense in $\mathfrak{F}_{s}$ we conclude that $\Gamma\left(
U\right)  $ is a strongly continuous unitary representation of $\mathcal{P}%
_{+}^{\uparrow}$ \ on $\mathfrak{F}_{s}$.

\noindent\textbf{Spectral condition}. We show that the four parameter group
$\Gamma\left(  U\left(  a,I\right)  \right)  $ has associated a
projection-valued measure supported on $\overline{S(V^{+})}$. The argument
needed is exactly the classical one, see \cite[p. 213]{Reed-SimonII}. The
notion of closed forward semigroup, which is the $p-$adic counterpart of
the closed forward light cone, allows us to carry out the calculations as
in the classical case. We first notice that $L_{\mathbb{C}}^{2}\left(
V^{+},d\lambda\right)  $ is already a spectral representation of $U(a,I)$
since%
\begin{equation}
\left\langle \varphi,U(a,I)\varphi\right\rangle _{L_{\mathbb{C}}^{2}\left(
V^{+},d\lambda\right)  }=%
{\displaystyle\int\limits_{V^{+}}}
\chi_{p}\left(  \mathfrak{B}\left(  a,k\right)  \right)  \left\vert
\varphi\left(  k\right)  \right\vert ^{2}d\lambda\left(  k\right)  .
\label{Eq_spectral_cond}%
\end{equation}
Notice that if we define for $\varphi$, $\theta\in L_{\mathbb{C}}^{2}\left(
V^{+},d\lambda\right)  $, the set function%
\[
B\rightarrow%
{\displaystyle\int\limits_{V^{+}}}
\overline{\varphi\left(  k\right)  }\chi_{p}\left(  \mathfrak{B}\left(
a,k\right)  \right)  \theta\left(  k\right)  d\lambda\left(  k\right)  ,
\]
$B$ being a Borel set in $V^{+}$, and denote the corresponding projection-valued
measure as $d(\varphi,E_{k}\varphi)$, in the case $\varphi=\theta$, then
\ (\ref{Eq_spectral_cond}) can be rewritten as
\[
\left\langle \varphi,U(a,I)\varphi\right\rangle _{L_{\mathbb{C}}^{2}\left(
V^{+},d\lambda\right)  }=%
{\displaystyle\int\limits_{V^{+}}}
\chi_{p}\left(  \mathfrak{B}\left(  a,k\right)  \right)  d(\varphi
,E_{k}\varphi).
\]
Now, since $\Gamma\left(  U(a,I)\right)  \upharpoonright\mathcal{H}%
_{s}^{\left(  n\right)  }=%
{\textstyle\bigotimes\nolimits_{k=1}^{n}}
U(a,I)$, if $\varphi^{\left(  n\right)  }\in\mathcal{H}_{s}^{\left(  n\right)
}$ with $n>0$, then
\begin{align*}
& \left\langle \varphi^{(n)},U(a,I)\varphi\right\rangle =\\%
& \int_{V^+}\cdots\int_{V^+}
\chi_{p}\left(  \mathfrak{B}\left(  a,\sum_{i=1}^{n}k_{i}\right)  \right)
\left\vert \varphi^{\left(  n\right)  }\left(  k_{1},\ldots,k_{n}\right)
\right\vert ^{2}\prod\limits_{k=1}^{n}d\lambda\left(  k_{i}\right)  =\\%
& \int_{V^+}
\chi_{p}\left(  \mathfrak{B}\left(  a,l\right)  \right)  d\mu_{\varphi
^{\left(  n\right)  }}(l)\,,
\end{align*}
where
\[
\mu_{\varphi^{(n)}}(A)=
\int\underset{\sum k_i\in A}{\cdots}\int
\left\vert \varphi^{\left(  n\right)  }\left(  k_{1},\ldots,k_{n}\right)
\right\vert ^{2}\prod\limits_{k=1}^{n}d\lambda\left(  k_{i}\right) \,,
\]
$A$ being a Borel set in $\overline{S(V^+)}$.
Since $\lambda$ is supported on $V^{+}\subset\overline{S(V^{+})}$ and
$S(V^{+})$\ is an additive semigroup, then $\mu_{\varphi^{\left(  n\right)  }}$ is
supported on $\overline{S(V^{+})}$, for any $\varphi^{\left(  n\right)  }%
\in\mathcal{H}_{s}^{\left(  n\right)  }$. We now take $\Psi=\left\{
\Psi^{\left(  n\right)  }\right\}  _{n\in\mathbb{N}}$ in $\mathfrak{F}_{s}$
and denote by $\mu_{\Psi}$ the spectral measure so that%
\[
\left\langle \Psi,\Gamma\left(  U\left(  a,I\right)  \right)  \Psi
\right\rangle =\int\chi_{p}\left(  \mathfrak{B}\left(  a,k\right)  \right)
d\mu_{\Psi}\left(  k\right)  ,
\]
then $\mu_{\Psi}=\sum_{n=0}^{\infty}\mu_{\Psi^{(n)}}$ since $\Gamma\left(
U(a,I)\right)  :\mathcal{H}_{s}^{\left(  n\right)  }\rightarrow\mathcal{H}%
_{s}^{\left(  n\right)  }$.

\noindent\textbf{Existence and uniqueness of the vacuum. }The 
argument in the $p$-adic case is the same as the Archimedean one, see \cite[p. 213]{Reed-SimonII}.

\noindent\textbf{Invariant domains for fields. }By Proposition \ \ref{Prop-1},
we have
\begin{equation}
\mathcal{H}_{\infty}(\mathbb{C})\overset{R}{\rightarrow}L_{\mathbb{C}}%
^{2}\left(  V^{+},d\lambda\right)  \rightarrow F_{0}\rightarrow\mathfrak{F}%
_{s}(L_{\mathbb{C}}^{2}\left(  V^{+},d\lambda\right)  ), \label{Eq_sequence_2}%
\end{equation}
where all the arrows denote continuous mappings. By using sequence
(\ref{Eq_sequence}), $\mathcal{D}_{\mathbb{C}}(V^{+})\subset\mathcal{D}%
_{\mathbb{C}}(\mathbb{Q}_{p}^{4})$, and since $\mathcal{D}_{\mathbb{C}%
}(\mathbb{Q}_{p}^{4})\subset\mathcal{H}_{\infty}(\mathbb{C})$, \ $\mathcal{F}%
(\mathcal{D}_{\mathbb{C}})=\mathcal{D}_{\mathbb{C}}$, and $\mathcal{D}%
_{\mathbb{C}}(V^{+})$ is dense in $L_{\mathbb{C}}^{2}\left(  V^{+}%
,d\lambda\right)  $, we conclude that $R(\mathcal{H}_{\infty}(\mathbb{C}))$ is
dense in $L_{\mathbb{C}}^{2}\left(  V^{+},d\lambda\right)$, and hence 
$\oplus^\infty_{n=0}S_n(\otimes_{n}R(\mathcal{H}_{\infty}(\mathbb{C})))$
in $\mathfrak{F}_s(L^2_\mathbb{C}(V^+,d\lambda ))$.

If $f$ is real-valued, we use that $\boldsymbol{\Phi}_S(f)$ 
is essentially self-adjoint on $F_{0}$,
the fact that $\boldsymbol{\Phi}_S(f):F_{0}\rightarrow F_{0}$, and
sequence \eqref{Eq_sequence_2}, jointly with the density of $R(\mathcal{H}%
_{\infty}(\mathbb{C}))$ to obtain that $\boldsymbol{\Phi}(f)
\upharpoonright_{F_{0}}$ is essentially self-adjoint, and $\boldsymbol{\Phi}$$\left(  f\right)  :F_{0}%
\rightarrow F_{0}$. \ If $f$ is complex-valued, the results follows from the
previous discussion by using the definition of $\boldsymbol{\Phi}$$\left(
f\right)  $.

\noindent\textbf{Regularity of the field. }\ Suppose that $\psi_{1}$,
$\psi_{2}\in F_{0}$ and that $f_{n}$ $\rightarrow$\ $f\in$ $\mathcal{H}%
_{\infty}(\mathbb{C})$ (i.e. $f_{n}$ $\overset{\left\Vert \cdot\right\Vert
_{l}}{\rightarrow}$\ $f$ for any $l\in\mathbb{N}$), with $f_{n}$ real-valued.
Then (\ref{Eq21}) implies that
\[
\widehat{f}_{n}\mid_{V^{+}}\overset{L_{\mathbb{C}}^{2}\left(  V^{+}%
,d\lambda\right)  }{\rightarrow}\ \widehat{f}\mid_{V^{+}},
\]
i.e. $R(f_{n})\rightarrow R(f)$ in $\mathfrak{F}_{s}$, see sequence
(\ref{Eq_sequence_2}). Now by using Segal's quantization, cf. Theorem X.41-(d)
in \cite{Reed-SimonII}, we have $\boldsymbol{\Phi}\left(  f_{n}\right)
\psi\rightarrow\boldsymbol{\Phi}\left(  f\right)  \psi$ \ for all $\psi$ in
$F_{0}$, therefore%
\[
\left\langle \psi_{1},\boldsymbol{\Phi}\left(  f_{n}\right)  \psi
_{2}\right\rangle \rightarrow\left\langle \psi_{1},\boldsymbol{\Phi}\left(
f\right)  \psi_{2}\right\rangle \text{.}%
\]
By treating the real and imaginary parts of $f$ separately, we obtain that
$\left\langle \psi_{1},\boldsymbol{\Phi}\left(  f\right)  \psi_{2}%
\right\rangle $ is a complex-valued bilinear form in $F_{0}\times F_{0}$, and
that
\begin{equation}
\left\vert \left\langle \psi_{1},\boldsymbol{\Phi}\left(  f\right)  \psi
_{2}\right\rangle \right\vert \leq\left\Vert \psi_{1}\right\Vert \left\Vert
\boldsymbol{\Phi}\left(  f\right)  \psi_{2}\right\Vert . \label{Eq27}%
\end{equation}
We now estimate $\left\Vert \boldsymbol{\Phi}\left(  f\right)  \psi
_{2}\right\Vert $. By the definition of $\boldsymbol{\Phi}\left(  f\right)  $,
it is sufficient to consider that $f$ is real-valued. By taking $\psi
_{2}=\left\{  \psi_{2}^{\left(  n\right)  }\right\}  _{n\in N}$, $x_{i}%
\in\mathbb{Q}_{p}^{4}$ for $i\in\{1,\ldots,n\}$, $y\in V^{+}$, and using that%
\begin{align*}
\left(  \boldsymbol{\Phi}\left(  f\right)  \psi_{2}\right)  ^{\left(
n\right)  }\left(  x_{1},\cdots,x_{n}\right)   &  =\frac{\sqrt{n+1}}{\sqrt{2}%
}\int_{V^{+}}\overline{\widehat{f}(y)}\psi_{2}^{\left(  n+1\right)  }\left(
y,x_{1},\cdots,x_{n}\right)  d\lambda\left(  y\right) \\
&  +\frac{1}{\sqrt{2n}}\sum_{i=1}^{n}\widehat{f}(x_{i})\psi_{2}^{\left(
n-1\right)  }\left(  x_{1},\cdots,\widetilde{x}_{i},\cdots,x_{n}\right)  ,
\end{align*}
where $\widetilde{x}_{i}$ means that $x_{i}$\ is omitted, we have%
\begin{gather*}
\left\Vert \left(  \boldsymbol{\Phi}\left(  f\right)  \psi_{2}\right)
^{\left(  n\right)  }\right\Vert _{\mathcal{H}_{s}^{\left(  n\right)  }}%
^{2}=\\
\frac{(n+1)}{2}\int\limits_{\mathbb{Q}_{p}^{4n}}\left\vert \int_{V^{+}%
}\overline{\widehat{f}(y)}\psi_{2}^{\left(  n+1\right)  }\left(
y,x_{1},\cdots,x_{n}\right)  d\lambda\left(  y\right)  \right\vert ^{2}%
\prod\limits_{j=1}^{n}d^{4}x_{j}+\\
\frac{1}{2n}\int\limits_{\mathbb{Q}_{p}^{4n}}\left\vert \sum_{i=1}^{n}%
\widehat{f}(x_{i})\psi_{2}^{\left(  n-1\right)  }\left(  x_{1},\cdots
,\widetilde{x}_{i},\cdots,x_{n}\right)  \right\vert ^{2}\prod\limits_{j=1}%
^{n}d^{4}x_{j}=:I_{0}+I_{1}.
\end{gather*}
To estimate $I_{0}$, we use the Cauchy-Schwartz inequality, estimation
\eqref{Eq21}, and Remark \ref{Note_separable_space} (iii) to get:
\begin{align*}
I_{0}  &  \leq\frac{(n+1)}{2}\left\{  \int_{V^{+}}\left\vert \widehat{f\left(
y\right)  }\right\vert ^{2}d\lambda\left(  y\right)  \right\}  \times\\
&  \left\{  \int\limits_{\mathbb{Q}_{p}^{4n}}\int_{V^{+}}\left\vert \psi
_{2}^{\left(  n+1\right)  }\left(  y,x_{1},\cdots,x_{n}\right)  \right\vert
^{2}d\lambda\left(  y\right)  \prod\limits_{j=1}^{n}d^{4}x_{j}\right\} \\
&  \leq C_{1}(n)\left\Vert f\right\Vert _{l}^{2}\int\limits_{\mathbb{Q}%
_{p}^{4n}}\int\limits_{\mathbb{Q}_{p}^{4}}\left\vert \psi_{2}^{\left(
n+1\right)  }\left(  y,x_{1},\cdots,x_{n}\right)  \right\vert ^{2}d^{4}%
y\prod\limits_{j=1}^{n}d^{4}x_{j}\\
&  \leq C_{1}(n)\left\Vert f\right\Vert _{l}^{2}\left\Vert \psi_{2}^{\left(
n+1\right)  }\right\Vert _{\mathcal{H}_{s}^{\left(  n+1\right)  }}^{2}\,,
\end{align*}
for any $l\in\mathbb{N}$. For $I_{1}$, we have%
\[
I_{1}\leq\frac{1}{2n}\left(  n\left\Vert f\right\Vert _{0}\left\Vert \psi
_{2}^{\left(  n-1\right)  }\right\Vert _{\mathcal{H}_{s}^{\left(  n-1\right)
}}\right)  ^{2}=n\left\Vert f\right\Vert _{0}^{2}\left\Vert \psi_{2}^{\left(
n-1\right)  }\right\Vert _{\mathcal{H}_{s}^{\left(  n-1\right)  }}^{2}.
\]
Consequently,
\[
\left\Vert \boldsymbol{\Phi}\left(  f\right)  \psi_{2}\right\Vert \leq
\sqrt{2}\left\Vert f\right\Vert _{l}\left\Vert \psi_{2}\right\Vert \text{ for any
}l\in\mathbb{N}\text{,}%
\]
which implies that \
\[
f\rightarrow\left\langle \psi_{1},\boldsymbol{\Phi}\left(  f\right)  \psi
_{2}\right\rangle \text{ is an element of }\mathcal{H}_{\infty}^{\ast}\left(
\mathbb{C}\right)  \text{,}%
\]
see (\ref{dual_space}).

\noindent\textbf{Poincar\'{e} invariance of the field.} \ The proof is identical to that of
Theorem X.42 in \cite{Reed-SimonII}.

\noindent\textbf{Cyclicity of the vacuum. }The cyclicity of the vacuum for
$\boldsymbol{\Phi}\left(  \cdot\right)  $ follows from Theorem X. 41 (parts
(b) and (d)) in \cite{Reed-SimonII}, by using the fact that the mapping%
\begin{equation}%
\begin{array}
[c]{cccc}%
R: & \mathcal{D}_{\mathbb{C}}(\mathbb{Q}_{p}^{4}) & \rightarrow &
L_{\mathbb{C}}^{2}\left(  V^{+},d\lambda\right) \\
&  &  & \\
& f & \rightarrow & \widehat{f}\mid_{V^{+}}%
\end{array}
\label{Eq30}%
\end{equation}
has a dense range. Indeed, by using that $\mathcal{D}_{\mathbb{C}}(V^{+})$ is
dense in $L_{\mathbb{C}}^{2}\left(  V^{+},d\lambda\right)  $, see Remark
\ref{note_Prop_1}, and the sequence (\ref{Eq_sequence}), we conclude that
$\mathcal{D}_{\mathbb{C}}(\mathbb{Q}_{p}^{4})$ is dense in $L_{\mathbb{C}}%
^{2}\left(  V^{+},d\lambda\right)  $. Finally, (\ref{Eq30}) follows from the
fact that $\mathcal{F}(\mathcal{D}_{\mathbb{C}}(\mathbb{Q}_{p}^{4}%
))=\mathcal{D}_{\mathbb{C}}(\mathbb{Q}_{p}^{4})$.

\noindent\textbf{Local commutativity.} Segal's quantization can be performed on the
field $\boldsymbol{\Phi}(f)$, $f\in\mathcal{H}_{\infty
}\left(  \mathbb{C}\right)  $, see \cite[Theorem X.41]{Reed-SimonII}.
Local commutativity in this context means that
\begin{equation}
\left[  \boldsymbol{\Phi}\left(  f\right)  ,\boldsymbol{\Phi}\left(  g\right)
\right]  \psi=\boldsymbol{\Phi}\left(  f\right)  \boldsymbol{\Phi}\left(
g\right)  \psi-\boldsymbol{\Phi}\left(  g\right)  \boldsymbol{\Phi}\left(
f\right)  \psi=0, \label{CCR}%
\end{equation}
for any $f$, $g\in\mathcal{H}_{\infty}\left(\mathbb{C}\right)$ with support
on an appropriate domain, and for all $\psi\in F_{0}$. Without loss of
generality we may suppose that $f$ and $g$ in \eqref{CCR} are real-valued
since $\boldsymbol{\Phi}$ is linear. Since the range of $R:\mathcal{D}_{\mathbb{C}}\rightarrow
L_{\mathbb{C}}^{2}\left(  V^{+},d\lambda\right)  $ is dense in $L_{\mathbb{C}%
}^{2}\left(  V^{+},d\lambda\right)  $, we may assume that $f$, $g$ belong
to\ $\mathcal{D}_{\mathbb{C}}$, cf. \cite[Theorem X.41-(d)]{Reed-SimonII}. By
using the Segal quantization, cf. \cite[Theorem X.41-(c) ]{Reed-SimonII}, we
have%
\begin{align*}
\left[  \boldsymbol{\Phi}\left(  f\right)  ,\boldsymbol{\Phi}\left(  g\right)
\right]  \psi &  =\sqrt{-1}\operatorname{Im}\left\langle Rf,Rg\right\rangle
_{L_{\mathbb{C}}^{2}\left(  V^{+},d\lambda\right)  }\psi\\
&  =\frac{1}{2}\left\{  \int\limits_{V^{+}}\left\{  \overline{\widehat
{f}\left(  k\right)  }\widehat{g}\left(  k\right)  -\widehat{f}\left(
k\right)  \overline{\widehat{g}\left(  k\right)  }\right\}  d\lambda\left(
k\right)  \right\}  \psi.
\end{align*}
Now, we define%
\begin{equation}
\Delta\left(  x\right)  =\int\limits_{V^{+}}\left\{  \chi_{p}\left(
-\mathcal{B}\left(  x,k\right)  \right)  -\chi_{p}\left(  \mathcal{B}\left(
x,k\right)  \right)  \right\}  d\lambda\left(  k\right)  ,
\label{Eq_formula-delta}%
\end{equation}
which is a well-defined function in $\mathbb{Q}_{p}^{4}$ because $V^{+}$ is
open and compact. Then%
\begin{equation}
\left[  \boldsymbol{\Phi}\left(  f\right)  ,\boldsymbol{\Phi}\left(  g\right)
\right]  \psi=\frac{1}{2}\left\{  \int\limits_{\mathbb{Q}_{p}^{4}}%
\int\limits_{\mathbb{Q}_{p}^{4}}\Delta\left(  x-y\right)  f\left(  x\right)
g\left(  y\right)  d^{4}xd^{4}y\right\}  \psi. \label{CCR2}%
\end{equation}
Therefore, the study of the local commutativity in the $p-$adic quantum field theory
of a scalar field becomes the study of the vanishing of $\Delta(x)$ as 
a distribution on 
$\mathcal{D}_{\mathbb{C}}(\mathbb{Q}_{p}^{4})\times\mathcal{D}_{\mathbb{C}}(\mathbb{Q}_{p}^{4})$. 
It is then enough to observe that $\Delta\left(  x\right)  \equiv0$ if
$x\in\mathbb{Z}_{p}^{4}$, because $\left. \chi_p\right|_{\mathbb{Z}_p}\equiv 1$.

Finally, to prove the second part \eqref{itemii} notice that, since \ $\square_{\mathfrak{q},\alpha}:\mathcal{H}_{\infty}\left(
\mathbb{C}\right)  \rightarrow\mathcal{H}_{\infty}\left(  \mathbb{C}\right)
$, see Lemma \ref{lemma15} , $\boldsymbol{\Phi}$ $\left(  \square
_{\mathfrak{q},\alpha}f\right)  $, $f\in\mathcal{H}_{\infty}\left(
\mathbb{C}\right)  $, is well-defined, and since $\mathcal{H}_{\infty}\left(
\mathbb{C}\right)  \subset L_{\mathbb{C}}^{2}\left(  \mathbb{Q}_{p}^{4}%
,d^{4}k\right)  $, we have $\mathcal{F}\left(  \square_{\mathfrak{q},\alpha
}f\right)  \allowbreak=\left\vert \mathfrak{q}-1\right\vert _{p}^{\alpha
}\mathcal{F}(f)$, so $R(\square_{\mathfrak{q},\alpha}f)=0$, and consequently
$\boldsymbol{\Phi}\left(  \square_{\mathfrak{q},\alpha}f\right)  =0$, for all
$f\in\mathcal{H}_{\infty}\left(  \mathbb{C}\right)  $.
\end{proof}

\subsection{Conjugated fields}

We take $\mathcal{H}=L_{\mathbb{C}}^{2}\left(  V^{+},d\lambda\right)  $ as
before. Recall that $\left(  k_{0},\boldsymbol{k}\right)  \in V^{+}$ if and
only if $\left(  k_{0},-\boldsymbol{k}\right)  \in V^{+}$. By using this fact,
we define%
\[%
\begin{array}
[c]{cccc}%
\boldsymbol{C}: & \mathcal{H} & \rightarrow & \mathcal{H}\\
&  &  & \\
& f\left(  k_{0},\boldsymbol{k}\right)  & \rightarrow & \overline{f\left(
k_{0},-\boldsymbol{k}\right)  }\text{.}%
\end{array}
\]
Then $\boldsymbol{C}$ induces a \textit{conjugation}
on $\mathcal{H}$, i.e. $\boldsymbol{C}$ gives an antilinear isometry
satisfying $\boldsymbol{C}^{2}=I$. We set $\mathcal{H}_{\boldsymbol{C}%
}:=\left\{  f\in\mathcal{H};\boldsymbol{C}f=f\right\}  $.

We recall that $\omega\left(  \boldsymbol{k}\right)  :U_{\mathfrak{q}%
}\rightarrow\mathbb{Q}_{p}$ is \ a non-vanishing analytic \ function. We
define%
\[
\mu (\boldsymbol{k})=
\begin{cases}
\sqrt{ |\omega (\boldsymbol{k})|_p}\ &\mbox{ if }\boldsymbol{k}\in U_{\mathfrak{q}}\,,\\[6pt]
0\ &\mbox{ if } \boldsymbol{k}\in\mathbb{Q}_{p}^{3}\backslash
U_{\mathfrak{q}}.
\end{cases}
\]
Then $\mu\left(  \boldsymbol{k}\right)  \in\mathcal{D}_{\mathbb{R}}%
(\mathbb{Q}_{p}^{3})$.

We now define the canonical fields corresponding to $\boldsymbol{C}$ as
follows:%
\begin{align*}
\boldsymbol{\varphi}\left(  f\right)   &  =\frac{1}{\sqrt{2}}\left\{  \left(
a^{-}\left(  Rf\right)  \right)  ^{\ast}+a^{-}\left(  \boldsymbol{C}Rf\right)
\right\}  \text{, for }f\in\mathcal{H}_{\infty}(\mathbb{C})\text{, and}\\
& \\
\boldsymbol{\pi}\left(  f\right)   &  =\frac{\sqrt{-1}}{\sqrt{2}}\left\{
\left(  a^{-}\left(  \mu Rf\right)  \right)  ^{\ast}-a^{-}\left(
\boldsymbol{C}\mu Rf\right)  \right\}  \text{, for }f\in\mathcal{H}_{\infty
}(\mathbb{C})\text{.}%
\end{align*}
We call $f\rightarrow\boldsymbol{\varphi}\left(  f\right)  $ the
\textit{canonical free field} over $\mathcal{H}_{\boldsymbol{C}}$ of mass $1$,
and $f\rightarrow\boldsymbol{\pi}\left(  f\right)  $ the \textit{canonical
conjugate momentum} over $\mathcal{H}_{\boldsymbol{C}}$ of mass $1$. These
maps are complex linear and $\boldsymbol{\varphi}\left(  f\right)  $,
$\boldsymbol{\pi}\left(  f\right)  $\ are self-adjoint if and only if
$Rf\in\mathcal{H}_{\boldsymbol{C}}$.

The distribution $\delta\left(  x_{0}-t_{0}\right)  g\left(  \boldsymbol{x}%
\right)  $ is defined as the direct product of the distributions
$\delta\left(  x_{0}-t_{0}\right)  $ and $g\left(  \boldsymbol{x}\right)  $:
\[%
\begin{array}
[c]{cccc}%
\delta\left(  x_{0}-t_{0}\right)  \times g\left(  \boldsymbol{x}\right)  : &
\mathcal{D}_{\mathbb{C}}(\mathbb{Q}_{p})\times\mathcal{D}_{\mathbb{C}%
}(\mathbb{Q}_{p}^{3}) & \rightarrow & \mathbb{C}\\
&  &  & \\
& \sum_{i}\phi_{i}\left(  x_{0}\right)  \theta_{i}\left(  \boldsymbol{x}%
\right)  & \rightarrow & \sum_{i}\phi_{i}\left(  t_{0}\right)  \int
_{\mathbb{Q}_{p}^{3}}g\left(  \boldsymbol{x}\right)  \theta_{i}\left(
\boldsymbol{x}\right)  d^{3}\boldsymbol{x},
\end{array}
\]
see e.g. \cite{V-V-Z}. If $g\in L_{\mathbb{C}}^{2}\left(  \mathbb{Q}_{p}%
^{3},d^{3}\boldsymbol{x}\right)  $, then the Fourier transform of the
distribution $\delta\left(  x_{0}-t_{0}\right)  g\left(  \boldsymbol{x}%
\right)  $ is $\chi_{p}\left( k_{0}t_{0}\right)  \widehat
{g}\left(  \boldsymbol{k}\right)  $, where $\widehat{g}\left(  \boldsymbol{k}%
\right)  \in L_{\mathbb{C}}^{2}\left(  \mathbb{Q}_{p}^{3},d^{3}\boldsymbol{k}%
\right)  $ is the $3-$dimensional Fourier transform with respect to the
bilinear form $-\mathfrak{B}_{0}\left(  \boldsymbol{x},\boldsymbol{k}\right)
$. By using Lemma \ref{lemma16A}, we can extend the projection $R$ to the
distributions of the form $\delta\left(  x_{0}-t_{0}\right)  g\left(
\boldsymbol{x}\right)  $, $g\in L_{\mathbb{C}}^{2}\left(  \mathbb{Q}_{p}%
^{3},d^{3}\boldsymbol{x}\right)  $, and thus we extend \ the class of functions
on which $\boldsymbol{\varphi}\left(  \cdot\right)  $ and $\boldsymbol{\pi
}\left(  \cdot\right)  $ are defined to include these distributions.

In the case $t_{0}=0$, with $g$ real-valued, we have%
\[
\left(  \boldsymbol{C}R\widehat{\delta g}\right)  \left(  k_{0},\boldsymbol{k}%
\right)  =\overline{R\widehat{\delta g}\left(  k_{0},-\boldsymbol{k}\right)
}=\overline{R\widehat{g}\left(  k_{0},-\boldsymbol{k}\right)  }=\overline
{\widehat{g} (-\boldsymbol{k}))}  =\widehat{g}\left(
\boldsymbol{k}\right)  =R\left(  \widehat{\delta g}\right)  .
\]
Consequently, $R\left(  \delta g\right)  $ and $\mu R\left(  \delta g\right)
$ are in $\mathcal{H}_{\boldsymbol{C}}$, and $\boldsymbol{\varphi}\left(
\delta g\right)  $, $\boldsymbol{\pi}\left(  \delta g\right)  $\ are
self-adjoint if $g\in L_{\mathbb{C}}^{2}\left(  \mathbb{Q}_{p}^{3}%
,d^{3}\boldsymbol{x}\right)  $ is real. We call the maps $g\rightarrow
\boldsymbol{\varphi}\left(  \delta g\right)  $ and $g\rightarrow
\boldsymbol{\pi}\left(  \delta g\right)  $ \textit{the time-zero fields}.

From now on, we will only use `test functions' of the form $\delta g$ with
$g\in L_{\mathbb{C}}^{2}\left(  \mathbb{Q}_{p}^{3},d^{3}\boldsymbol{x}\right)
$ in $\boldsymbol{\varphi}\left(  \cdot\right)  $ and $\boldsymbol{\pi}\left(
\cdot\right)  $, and write $\boldsymbol{\varphi}\left(  g\right)  $ and
$\boldsymbol{\pi}\left(  g\right)  $ instead of \ $\boldsymbol{\varphi}\left(
\delta g\right)  $ and $\boldsymbol{\pi}\left(  \delta g\right)  $. If $f$ and
$g$ are functions from $L_{\mathbb{R}}^{2}\left(  \mathbb{Q}_{p}^{3}%
,d^{3}\boldsymbol{x}\right)  $, by using Theorem X.43-(c), we have%
\begin{equation}
\left[  \boldsymbol{\varphi}\left(  f\right)  ,\boldsymbol{\pi}\left(
g\right)  \right]  \psi=\sqrt{-1}\left\{  \int\limits_{V^{+}}\overline
{\widehat{f}(k)}\widehat{g}(k)\mu\left(  k\right)  d\lambda(k)\right\}
\psi\text{, \ for all }\psi\in F_{0}. \label{CCR4}%
\end{equation}

\subsection{Transferring fields from $\mathfrak{F}_{s}\left(  L_{\mathbb{C}%
}^{2}\left(  V^{+},d\lambda\right)  \right)  $ to $\mathfrak{F}_{s}\left(
L_{\mathbb{C}}^{2}\left(  U_{\mathfrak{q}},d^{3}\boldsymbol{k}\right)
\right)  $}

We use the notation%
\[
a^{\dagger}\left(  f\right)  =\left(  a^{-}\left(  f\right)  \right)  ^{\ast
}\text{, \ \ \ }a\left(  f\right)  =\left(  a^{-}\left(  \boldsymbol{C}%
f\right)  \right)  .
\]
As we already mentioned, each function $f(\boldsymbol{k})=f\left(
\sqrt{\omega\left(  \boldsymbol{k}\right)  },\boldsymbol{k}\right)
\in\allowbreak L_{\mathbb{C}}^{2}\left(  V^{+},d\lambda\right)  $ is a
function on $U_{\mathfrak{q}}$. We take 
$$
\left(  Jf\right)  \left(
k_{0},\boldsymbol{k}\right)  =\frac{f\left(  \sqrt{\omega\left(
\boldsymbol{k}\right)  },\boldsymbol{k}\right)  }{\left\vert \sqrt
{\omega\left(  \boldsymbol{k}\right)  }\right\vert _{p}^{\frac{1}{2}}}
$$
as before. Then $J$ is a unitary isometry of $L_{\mathbb{C}}^{2}\left(
V^{+},d\lambda\right)  $ onto $L_{\mathbb{C}}^{2}\left(  U_{\mathfrak{q}%
},d^{3}\boldsymbol{k}\right)  $. The annihilation and creation operators on
$\mathfrak{F}_{s}\left(  L_{\mathbb{C}}^{2}\left(  U_{\mathfrak{q}}%
,d^{3}\boldsymbol{k}\right)  \right)  $, $\widetilde{a}\left(  \cdot\right)
$, $\widetilde{a}^{\dagger}\left(  \cdot\right)  $ are related to $a\left(
\cdot\right)  $\ and $a^{\dagger}\left(  \cdot\right)  $\ by the formulas:%
\begin{align*}
\widetilde{a}\left( Jf \right)   &  =\Gamma\left(  J\right)  a\left(  f\right)  \Gamma\left(
J\right)  ^{-1}\text{,}\\
\widetilde{a}^{\dagger}\left( Jf \right)   &  =\Gamma\left(  J\right)  a^{\dagger}\left(
f\right)  \Gamma\left(  J\right)  ^{-1}\text{.}%
\end{align*}
By using the unitary map $\Gamma\left(  J\right)  $, we carry the quantum
fields over $\mathfrak{F}_{s}\left(  L_{\mathbb{C}}^{2}\left(  U_{\mathfrak{q}%
},d^{3}\boldsymbol{k}\right)  \right)  $ as follows:%
\[
\widetilde{\boldsymbol{\Phi}}\left(  f\right)  =\Gamma\left(  J\right)
\boldsymbol{\Phi}\left(  f\right)  \Gamma\left(  J\right)  ^{-1}=\frac
{1}{\sqrt{2}}\left\{  \widetilde{a}\left(  \widetilde{\boldsymbol{C}}\frac
{Rf}{\left\vert \sqrt{\omega\left(  \boldsymbol{k}\right)  }\right\vert
_{p}^{\frac{1}{2}}}\right)  +\widetilde{a}^{\dagger}\left(  \frac
{Rf}{\left\vert \sqrt{\omega\left(  \boldsymbol{k}\right)  }\right\vert
_{p}^{\frac{1}{2}}}\right)  \right\}
\]
for $f\in\mathcal{H}_{\mathbb{\infty}}(\mathbb{R})$, and
\[
\widetilde{\boldsymbol{\varphi}}\left(  f\right)  =\Gamma\left(  J\right)
\boldsymbol{\varphi}\left(  f\right)  \Gamma\left(  J\right)  ^{-1}=\frac
{1}{\sqrt{2}}\left\{  \widetilde{a}\left(  \frac{R\left(  f\delta\right)
}{\left\vert \sqrt{\omega\left(  \boldsymbol{k}\right)  }\right\vert
_{p}^{\frac{1}{2}}}\right)  +\widetilde{a}^{\dagger}\left(  \frac{R\left(
f\delta\right)  }{\left\vert \sqrt{\omega\left(  \boldsymbol{k}\right)
}\right\vert _{p}^{\frac{1}{2}}}\right)  \right\}
\]
for $f\in L_{\mathbb{C}}^{2}(\mathbb{Q}_{p}^{3},d^{3}\boldsymbol{x})$, where
$\widetilde{\boldsymbol{C}}=\Gamma\left(  J\right)  \boldsymbol{C}%
\Gamma\left(  J\right)  ^{-1}$ acts by $\left(  \widetilde{\boldsymbol{C}%
}g\right)  \left(  \boldsymbol{k}\right)  =\overline{g(-\boldsymbol{k})}$.

We drop the tilde $\ \widetilde{\cdot}$, and from now on, we work with
fields on $\mathfrak{F}_{s}\left(  L_{\mathbb{C}}^{2}\left(  U_{\mathfrak{q}%
},d^{3}\boldsymbol{k}\right)  \right)  $, for $f$, $g$ real-valued. Then,
formula \ (\ref{CCR4}) becomes%
\[
\left[  \boldsymbol{\varphi}\left(  f\right)  ,\boldsymbol{\pi}\left(
f\right)  \right]  =\sqrt{-1}\int_{U_{\mathfrak{q}}}f(\boldsymbol{x}%
)g(\boldsymbol{x})d^{3}\boldsymbol{x},
\]
which is the canonical commutation relation in $L_{\mathbb{C}}^{2}%
(U_{\mathfrak{q}},d^{3}\boldsymbol{x})$.

\subsection{\label{Section 5.5}Some classical calculations}

In this section, we discuss in a $p-$adic frame the annihilation and creation
operators introduced above, to show that they conform to the common usage in
the Physics literature. We start by defining
\[
D_{0}=\left\{  \psi;\psi\in F_{0},\psi^{\left(  n\right)  }\in\mathcal{D}%
_{\mathbb{C}}(U_{\mathfrak{q}}^{3n})\text{ for all }n\right\}
\]
and for each $l\in\mathbb{Q}_{p}^{3}$ (we do not use bold letters for
$3$-dimensional vectors) an operator $a\left(  l\right)  $ on $\mathfrak{F}%
_{s}\left(  L_{\mathbb{C}}^{2}\left(  U_{\mathfrak{q}},d^{3}x\right)  \right)
=\oplus_{n=0}^{\infty}\mathcal{H}_{s}^{\left(  n\right)  }$ with domain
$D_{0}$ by%
\[
\left(  a\left(  l\right)  \psi\right)  ^{\left(  n\right)  }\left(
k_{1},\ldots,k_{n}\right)  =\sqrt{n+1}\psi^{\left(  n+1\right)  }\left(
l,k_{1},\ldots,k_{n}\right)  ,\text{ \ }n\geq 0\, .
\]
The formal adjoint of $a\left(  l\right)$ is given by%

\[
\left(  a\left(  l\right)  ^{\dagger}\psi\right)  ^{\left(  n\right)  }\left(
k_{1},\ldots,k_{n}\right)  =\frac{1}{\sqrt{n}}\sum\limits_{j=1}^{n}%
\delta\left(  l-k_{j}\right)  \psi^{\left(  n-1\right)  }\left(  k_{1}%
,\ldots,\widetilde{k}_{j},\ldots k_{n}\right)  \text{, }%
\]
for $n\geq1$, and by definition $\left(  a\left(  l\right)  ^{\dagger}%
\psi\right)  ^{\left(  n\right)  }\left(  k_{1},\ldots,k_{n}\right)  =0$ for
$n=0$. This operator is a well-defined quadratic form on $D_{0}\times D_{0}$:
if $\psi_{2}=\left\{  \psi_{2}^{\left(  n\right)  }\right\}  _{n\in\mathbb{N}%
}$, $\psi_{1}=\left\{  \psi_{1}^{\left(  n\right)  }\right\}  _{n\in
\mathbb{N}}\allowbreak\in F_{0}$, then the quadratic form
\begin{multline*}
\left\langle \psi_{2},a\left(  l\right)  ^{\dagger}\psi_{1}\right\rangle
=\sum\limits_{n=1}^{\infty}\left\langle \psi_{2}^{\left(  n\right)  },\left(
a\left(  l\right)  ^{\dagger}\psi_{1}\right)  ^{\left(  n\right)
}\right\rangle _{\mathcal{H}_{s}^{\left(  n\right)  }}=\\
\sum\limits_{n=1}^{\infty}\frac{1}{\sqrt{n}}\sum\limits_{j=1}^{n}%
\int\limits_{U_{\mathfrak{q}}^{n-1}}\overline{\psi_{2}^{\left(  n\right)
}\left(  k_{1},\ldots,k_{j-1},l,k_{j+1},\ldots,k_{n}\right)  }\times\\
\psi_{1}^{\left(  n-1\right)  }\left(  k_{1},\ldots,k_{j-1},k_{j+1}%
,\ldots,k_{n}\right)  \prod\limits_{\substack{i=1 \\i\neq j}}^{n}d^{3}k_{i}%
\end{multline*}
is well-defined.
The formulas%
\begin{equation}
a\left(  g\right)  =\int_{U_{\mathfrak{q}}}a\left(  k\right)  g\left(
-k\right)  d^{3}k\text{ \ and \ }a^{\dagger}\left(  g\right)  =\int
_{U_{\mathfrak{q}}}a^{\dagger}\left(  k\right)  g\left(  k\right)
d^{3}k\text{,} \label{operator_valued_distribution}%
\end{equation}
hold for all $g\left(  k\right)  \in\mathcal{D}_{\mathbb{C}}(U_{\mathfrak{q}%
})$, if the equalities are understood in the sense of quadratic forms, i.e.%
\[
\left\langle \psi_{2},a\left(  g\right)  \psi_{1}\right\rangle :=\int
_{U_{\mathfrak{q}}}\left\langle \psi_{2},a\left(  k\right)  \psi
_{1}\right\rangle g\left(  -k\right)  d^{3}k\text{ }%
\]
and
\[
\left\langle \psi_{2},a\left(  g\right)  \psi_{1}\right\rangle :=\int
_{U_{\mathfrak{q}}}\left\langle \psi_{2},a^{\dagger}\left(  k\right)  \psi
_{1}\right\rangle g\left(  k\right)  d^{3}k\text{.}%
\]
On the other hand, since $a\left(  l\right)  :D_{0}\rightarrow D_{0}$, the
powers of $a\left(  l\right)  $ are well-defined on $D_{0}$. Then%
\[
\left\langle  \psi_{1},\left(  a\left(  l\right)  ^{\dagger}\right)  ^{n}\psi
_{2}\right\rangle  =\left\langle  \left(  a\left(  l\right)  \right)  ^{n}\psi_{1}%
,\psi_{2}\right\rangle\,,
\]
for each $n$, where the equality is to be understood in the sense of quadratic forms, and
\[
\left\langle  \psi_{1},\left(  \prod\limits_{i=N_{1}+1}^{N_{2}}a^{\dagger}\left(
l_{i}\right)  \right)  \left(  \prod\limits_{i=1}^{N_{1}}a\left(
l_{i}\right)  \right)  \psi_{2}\right\rangle
\]
is a well-defined quadratic form on $D_{0}\times D_{0}$. In addition, if
$f_{i}\in\mathcal{D}_{\mathbb{C}}(U_{\mathfrak{q}})$, then the following
expressions are well-defined as quadratic forms: The product
\begin{gather*}
\left(  \prod\limits_{i=N_{1}+1}^{N_{2}}a^{\dagger}\left(  f_{i}\right)
\right)  \left(  \prod\limits_{i=1}^{N_{1}}a\left(  f_{i}\right)  \right)  =\\
\int\limits_{U_{\mathfrak{q}}^{3N_{2}}}\left(  \prod\limits_{i=N_{1}+1}^{N_{2}%
}a^{\dagger}\left(  k_{i}\right)  \right)  \left(  \prod\limits_{i=1}^{N_{1}%
}a\left(  -k_{i}\right)  \right)  \left(  \prod\limits_{i=1}^{N_{2}}%
f_{i}\left(  k_{i}\right)  \right)  d^{3}k_{1}\cdots d^{3}k_{N_{2}}\,,
\end{gather*}
the number operator
\[
N=\int\limits_{U_{\mathfrak{q}}}a^{\dagger}\left(  k\right)  a\left(
k\right)  d^{3}k\,,
\]
and the free Hamiltonian of unit mass,
\[
H_{0}=\int\limits_{U_{\mathfrak{q}}}\mu\left(  k\right)  a^{\dagger}\left(
k\right)  a\left(  k\right)  d^3k\,.
\]

Finally, by using quadratic forms on $D_{0}$ we can express the free scalar
field and the time zero fields in terms of $a^{\dagger}\left(  k\right)  $ and
$a\left(  k\right)  $ (i.e. by using (\ref{operator_valued_distribution}) with
$g$ real-valued):%
\begin{multline*}
\boldsymbol{\Phi}\left(  t,x\right)  =\\
\frac{1}{\sqrt{2}}\int\limits_{U_{\mathfrak{q}}}\left\{  \chi_{p}\left(
\sqrt{\omega\left(  k\right)  }t-\mathfrak{B}_{0}\left(  k,x\right)  \right)
a^{\dagger}\left(  k\right)  +\chi_{p}\left(  -\sqrt{\omega\left(  k\right)
}t+\mathfrak{B}_{0}\left(  k,x\right)  \right)  a\left(  k\right)  \right\} \\
\times\frac{d^{3}k}{\left\vert \sqrt{\omega\left(  k\right)  }\right\vert
_{p}^{\frac{1}{2}}},
\end{multline*}%
\[
\boldsymbol{\varphi}\left(  x\right)  =\frac{1}{\sqrt{2}}\int
\limits_{U_{\mathfrak{q}}}\left\{  \chi_{p}\left(  -\mathfrak{B}_{0}\left(
k,x\right)  \right)  a^{\dagger}\left(  k\right)  +\chi_{p}\left(
\mathfrak{B}_{0}\left(  k,x\right)  \right)  a\left(  k\right)  \right\}
\frac{d^{3}k}{\left\vert \sqrt{\omega\left(  k\right)  }\right\vert
_{p}^{\frac{1}{2}}},
\]%
\[
\boldsymbol{\pi}\left(  x\right)  =\frac{\sqrt{-1}}{\sqrt{2}}\int
\limits_{U_{\mathfrak{q}}}\left\{  \chi_{p}\left(  -\mathfrak{B}_{0}\left(
k,x\right)  \right)  a^{\dagger}\left(  k\right)  -\chi_{p}\left(
\mathfrak{B}_{0}\left(  k,x\right)  \right)  a\left(  k\right)  \right\}
\left\vert \sqrt{\omega\left(  k\right)  }\right\vert _{p}^{\frac{1}{2}}%
d^{3}k.
\]

\subsection{A $p$-adic Klein-Gordon equation}

In this section, we consider the inhomogeneous $p-$adic Klein-Gordon equation:
\begin{equation}
\square_{\mathfrak{q},\alpha}u\left( t,\mathbf{x}\right)  =h\left( t,\mathbf{x}\right)  ,
\label{Eq_Klein_Gordon}%
\end{equation}
where $\left( t,\mathbf{x}\right)  \in\mathbb{Q}_{p}\times\mathbb{Q}^{3}_{p}$ and
$h\left( t,\mathbf{x}\right)  \in\mathcal{D}_{\mathbb{C}}(\mathbb{Q}_{p}%
\times\mathbb{Q}^{3}_{p})$. We use the techniques and results
of \cite[Chapter 6]{Zuniga-LNM-2016}. By a solution (or weak solution) we
understand a distribution from $\mathcal{D}_{\mathbb{C}}^{\prime}%
(\mathbb{Q}_{p}\times\mathbb{Q}^{3}_{p})$ satisfying (\ref{Eq_Klein_Gordon}).
We denote by $E_{\mathfrak{q}}^{0}\left( t,\mathbf{x}\right)  $, the fundamental
solution of (\ref{Eq_Klein_Gordon}) obtained in Theorem \ref{Theorem1}.

\begin{theorem}
\label{Theorem3} The following hold true:
\begin{enumerate}[(i)]
\item The equation
\begin{equation}
\square_{\mathfrak{q},\alpha}u\left( t,\mathbf{x}\right)  =0 \label{Eq_Klein-Gordon_1}%
\end{equation}
admits plane waves, this means that if $\left(  E^{\pm},\boldsymbol{\kappa}\right)
\in V^{\pm}$, that is, they form a fixed pair of solutions to $E^{\pm}=\pm\sqrt{\omega\left(  \boldsymbol{\kappa}\right)  }$,
then $\chi_{p}\left\{  -\mathcal{B}\left(  \left(  t,\mathbf{x}\right)  ,\left(
E^{\pm},\boldsymbol{\kappa}\right)  \right)  \right\}  $ is a weak solution of
(\ref{Eq_Klein-Gordon_1}).

\item The distributions%
\begin{gather*}
\int\limits_{U_{\mathfrak{q}}}\chi_{p}\left\{  -\mathcal{B}\left(  \left(
t,\mathbf{x}\right)  ,\left(  \sqrt{\omega\left(  \mathbf{k}\right)  },\mathbf{k}\right)  \right)
\right\}  \frac{d^{3}\mathbf{k}}{\left\vert \sqrt{\omega\left(  \mathbf{k}\right)  }\right\vert
_{p}}+\\
\int\limits_{U_{\mathfrak{q}}}\chi_{p}\left\{  \mathcal{B}\left(  \left(
t,\mathbf{x}\right)  ,\left(  -\sqrt{\omega\left(  \mathbf{k}\right)  },\mathbf{k}\right)  \right)
\right\}  \frac{d^{3}\mathbf{k}}{\left\vert \sqrt{\omega\left(  \mathbf{k}\right)  }\right\vert
_{p}}%
\end{gather*}
are the unique weak solutions of \eqref{Eq_Klein-Gordon_1} (up to the multiplication
by a non-zero complex constant) which are invariant
under $\mathcal{L}_{+}^{\uparrow}$.

\item The distributions%
\begin{multline*}
u(t,\mathbf{x};A,B,C)=E_{\mathfrak{q}}^{0}\left( t,\mathbf{x}\right)  \ast h\left( t,\mathbf{x}\right)
+\\
C\int\limits_{U_{\mathfrak{q}}}\left\{  \chi_{p}\left(  -\sqrt{\omega\left(
\mathbf{k}\right)  }t+\mathfrak{B}_{0}\left(  \mathbf{k},\mathbf{x}\right)  \right)  A\left(  \mathbf{k}\right)
+\chi_{p}\left(  \sqrt{\omega\left( \mathbf{k}\right)  }t+\mathfrak{B}_{0}\left(
\mathbf{k},\mathbf{x}\right)  \right)  B\left( \mathbf{k}\right)  \right\} \\
\times\frac{d^{3}\mathbf{k}}{\left\vert \sqrt{\omega\left( \mathbf{k}\right)  }\right\vert
_{p}},
\end{multline*}
where $C$ is a non-zero complex number, and $A\left( \mathbf{k}\right)  $, $B\left(
\mathbf{k}\right)  \in\mathcal{D}_{\mathbb{C}}(\mathbb{Q}_{p}^{3})$, are \ weak
solutions of (\ref{Eq_Klein_Gordon}).
\end{enumerate}
\end{theorem}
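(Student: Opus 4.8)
The plan is to treat the three parts in increasing order of complexity, reducing each to computations with the Fourier transform and the multiplication operator $|\mathfrak{q}-1|_p^\alpha$ that defines $\square_{\mathfrak{q},\alpha}$. For part (i), the key observation is that a plane wave $\chi_p\{-\mathcal{B}((t,\mathbf{x}),(E^\pm,\boldsymbol{\kappa}))\}$ is, as a distribution, essentially a Dirac delta supported at the single point $(E^\pm,\boldsymbol{\kappa})$ after Fourier transform. First I would compute $\mathcal{F}$ of this plane wave and recognize that, because $(E^\pm,\boldsymbol{\kappa})\in V^\pm$, we have $\mathfrak{q}(E^\pm,\boldsymbol{\kappa})=1$, so $|\mathfrak{q}-1|_p^\alpha$ vanishes on the support of the transformed distribution. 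By the definition of $\square_{\mathfrak{q},\alpha}=\mathcal{F}^{-1}\circ|\mathfrak{q}-1|_p^\alpha\circ\mathcal{F}$ and Lemma \ref{lemma12} (which is exactly the statement that $|\mathfrak{q}-1|_p^\alpha$ annihilates distributions supported on $V$), applying the operator gives zero in $\mathcal{D}_{\mathbb{C}}^\prime$. The verification that the pairing makes sense is routine since the plane wave is a bounded locally constant function and hence a well-defined tempered-type distribution.

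For part (ii), the strategy is to build on part (i) by superposing the plane waves over the mass shell. The two integrals are precisely $T_{V^+}(1)$ and $T_{V^-}(1)$ up to the measure $d\lambda=d^3\mathbf{k}/|\sqrt{\omega(\mathbf{k})}|_p$, so their Fourier transforms are the invariant distributions $\delta_{V^+}$ and $\delta_{V^-}$, whose sum is $\delta(\mathfrak{q}-1)$. The annihilation by $\square_{\mathfrak{q},\alpha}$ follows again from Lemma \ref{lemma12}, so they are weak solutions. The uniqueness claim is the substantive content: I would argue that any $\mathcal{L}_+^\uparrow$-invariant weak solution $u$ has $\mathcal{F}u$ supported on $V$ (since $\square_{\mathfrak{q},\alpha}u=0$ forces $|\mathfrak{q}-1|_p^\alpha\mathcal{F}u=0$, and the zero locus of $|\mathfrak{q}-1|_p^\alpha$ meeting the relevant set is exactly $V$), and then invoke the Rallis--Schiffman uniqueness recalled in Remark \ref{note14}(iii), adapted to the $\mathcal{L}_+^\uparrow$-invariant splitting $V=V^+\sqcup V^-$, to conclude that $\mathcal{F}u$ is a constant combination of $\delta_{V^+}$ and $\delta_{V^-}$; invariance under $\mathcal{L}_+^\uparrow$ (which preserves each of $V^\pm$ separately) fixes the combination up to a single overall constant.

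For part (iii), the approach is to verify directly that $u(t,\mathbf{x};A,B,C)$ solves the inhomogeneous equation by linearity: the convolution term $E_{\mathfrak{q}}^0\ast h$ is a fundamental-solution solution of $\square_{\mathfrak{q},\alpha}u=h$ by the defining property of $E_{\mathfrak{q}}^0$ from Theorem \ref{Theorem1}, while the remaining integral is an $A,B$-weighted version of the homogeneous solutions from part (ii) and is therefore annihilated by $\square_{\mathfrak{q},\alpha}$. The only points requiring care are that $A,B\in\mathcal{D}_{\mathbb{C}}(\mathbb{Q}_p^3)$ guarantee the integrals define genuine elements of $\mathcal{D}_{\mathbb{C}}^\prime$, and that the operator commutes with integration against $A,B$ over $U_{\mathfrak{q}}$, which is justified because $\square_{\mathfrak{q},\alpha}$ acts on the $(t,\mathbf{x})$ variables while $A,B$ depend only on $\mathbf{k}$.

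The main obstacle I expect is the uniqueness argument in part (ii): establishing rigorously that an $\mathcal{L}_+^\uparrow$-invariant weak solution must have its Fourier transform supported on $V$ requires controlling the support of solutions to $|\mathfrak{q}-1|_p^\alpha\mathcal{F}u=0$, and then correctly matching the dimension of the space of invariant distributions supported on $V^+\sqcup V^-$ against the group $\mathcal{L}_+^\uparrow$ rather than the full $\boldsymbol{O}(\mathfrak{q})$. Since $\mathcal{L}_+^\uparrow$ fixes $V^+$ and $V^-$ individually, a priori one obtains a two-parameter family $c_+\delta_{V^+}+c_-\delta_{V^-}$; reconciling this with the stated uniqueness "up to multiplication by a non-zero complex constant" will require either an additional symmetry (such as the conjugation $\boldsymbol{C}$ relating $V^+$ and $V^-$) or a careful reinterpretation of what "up to a constant" means for the specific pair of integrals written, and this is where the argument needs the most attention.
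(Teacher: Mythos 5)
Your plans for parts (i) and (iii) are correct and essentially reproduce the paper's own argument: the paper also identifies the plane wave as the inverse Fourier transform of a point mass sitting on $V^{\pm}$, and it proves (iii) exactly as you do, by combining the fundamental-solution property from Theorem \ref{Theorem1} with the homogeneous solutions of part (ii). One inaccuracy to fix: Lemma \ref{lemma12} is the statement $|\mathfrak{q}-1|_{p}^{\alpha}\,\delta(\mathfrak{q}-1)=0$; it is \emph{not} the statement that $|\mathfrak{q}-1|_{p}^{\alpha}$ annihilates every distribution supported on $V$. For the point mass at $(E^{\pm},\boldsymbol{\kappa})$ you must rerun the product-of-distributions argument of Remark \ref{note12} (writing $y=k^{*}+p^{k}z$ with $\mathfrak{q}(k^{*})=1$ gives $|\mathfrak{q}(y)-1|_{p}\leq p^{-k}$, so the regularized products tend to zero), and for $\delta_{V^{\pm}}=1_{V^{\pm}}\delta(\mathfrak{q}-1)$ you need Remark \ref{note13}; both are immediate, but neither is literally Lemma \ref{lemma12}. (The paper instead quotes Lemma 169 of \cite{Zuniga-LNM-2016}, which says the weak solutions are exactly the distributions whose Fourier transform is supported on $V$; your direct computation is an acceptable substitute for the implication you actually use.)

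The obstacle you isolate in part (ii) is a genuine gap, and you should know that the paper's own proof does not close it either. Since $\delta_{V^{+}}$ and $\delta_{V^{-}}$ are \emph{each} invariant under $\mathcal{L}_{+}^{\uparrow}$ (the paper cites Lemma 163 of \cite{Zuniga-LNM-2016} for precisely this) and each is annihilated by $|\mathfrak{q}-1|_{p}^{\alpha}$ (Remark \ref{note13}), the distributions $\mathcal{F}^{-1}\left[c_{+}\delta_{V^{+}}+c_{-}\delta_{V^{-}}\right]$ form a two-parameter family of $\mathcal{L}_{+}^{\uparrow}$-invariant weak solutions; in particular $\mathcal{F}^{-1}\left[\delta_{V^{+}}\right]$ alone is an invariant solution not proportional to the displayed sum, so uniqueness up to one constant cannot be a consequence of $\mathcal{L}_{+}^{\uparrow}$-invariance by itself. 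What Rallis--Schiffman (Remark \ref{note14}) actually yields is one-dimensionality of the space of $\boldsymbol{O}(\mathfrak{q})$-invariant distributions supported on $V$; the paper proves uniqueness in that class and then transfers the claim to the subgroup $\mathcal{L}_{+}^{\uparrow}$ merely by rewriting $C\delta_{V}=C\delta_{V^{+}}+C\delta_{V^{-}}$, which does not follow: invariance under a smaller group is a weaker constraint and admits more solutions. So your closing diagnosis is exactly right, and your proposed "adapted" Rallis--Schiffman step can at best show that the invariant solutions are spanned by $\mathcal{F}^{-1}\left[\delta_{V^{+}}\right]$ and $\mathcal{F}^{-1}\left[\delta_{V^{-}}\right]$ (and even that requires proving transitivity of $\mathcal{L}_{+}^{\uparrow}$ on each shell, which neither you nor the paper establishes); it cannot collapse the family to a single constant. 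To obtain the stated uniqueness one must add a symmetry exchanging the shells --- the natural choice is $-I\in\boldsymbol{O}(\mathfrak{q})$, i.e.\ $k\mapsto-k$, which maps $V^{+}$ onto $V^{-}$ but does not lie in $\mathcal{L}_{+}^{\uparrow}$ --- or simply read the uniqueness with respect to full $\boldsymbol{O}(\mathfrak{q})$-invariance; note that the paper's conjugation $\boldsymbol{C}$ will not serve this purpose, as it maps $V^{+}$ to itself.
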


\begin{proof}\mbox{}
\begin{enumerate}[(i)]
\item Since $\mathcal{F}_{\substack{k_{0}\rightarrow t\\k\rightarrow x}%
}^{-1}(\delta\left(  k_{0}-E^{\pm},\mathbf{k}-\boldsymbol{\kappa}\right)  )=\chi_{p}\left\{
-\mathcal{B}\left(  \left(  E^{\pm},\boldsymbol{\kappa}\right)  ,\left(  t,\mathbf{x}\right)
\right)  \right\}  $, the condition $E^{\pm}=\pm\sqrt{\omega\left(
\boldsymbol{\kappa}\right)  }$ implies that $k_{0}^{\pm}=\pm\sqrt{\omega\left(
\mathbf{k}\right)  }$, so $\delta\left(  k_{0}-E^{\pm},\mathbf{k}-\boldsymbol{\kappa}\right)  $ is supported on $V^{\pm}\subset V$. The result follows from the
fact that the weak solutions of (\ref{Eq_Klein-Gordon_1}) are exactly the
distributions from $\mathcal{D}_{\mathbb{C}}^{\prime}(\mathbb{Q}_{p}%
\times\mathbb{Q}^{3}_{p})$ whose Fourier transform is supported on $V$, see
\cite[Lemma 169]{Zuniga-LNM-2016}.

\item The distributions of the form $C\delta_{V}$, for $C\in\mathbb{C}^{\times}$, 
are the unique solutions of (\ref{Eq_Klein-Gordon_1}) which
are invariant under $\boldsymbol{O}(\mathfrak{q})$, see \cite[Lemma
169]{Zuniga-LNM-2016} and \cite[Proposition 2-2.]{Rallis-Schiffman}. By
writing $C\delta_{V}=C\delta_{V^{+}}+C\delta_{V^{-}}$ in $\mathcal{D}%
_{\mathbb{C}}^{\prime}(\mathbb{Q}_{p}\times\mathbb{Q}^{3}_{p})$ and using the
fact that $\delta_{V^{\pm}}$ are invariant under $\mathcal{L}_{+}^{\uparrow
}=\left\{  \Lambda\in\boldsymbol{O}(\mathfrak{q});\Lambda\left(  V^{\pm
}\right)  =V^{\pm}\right\}  $, see \cite[Lemma 163]{Zuniga-LNM-2016}, we
conclude that $C\delta_{V^{+}}+C\delta_{V^{-}}$are the unique weak solutions
of (\ref{Eq_Klein-Gordon_1}) which are invariant under $\mathcal{L}%
_{+}^{\uparrow}$. The announced formula follows by computing the inverse
Fourier transform of $\delta_{V^{\pm}}$.

\item The result follows from the second part by using Theorem \ref{Theorem1}.
\end{enumerate}
\end{proof}

\begin{remark}
Notice that $\left\vert \sqrt{\omega\left( \mathbf{k}\right)  }\right\vert
_{p}A\left( \mathbf{k}\right)  $, $\left\vert \sqrt{\omega\left( \mathbf{k}\right)
}\right\vert _{p}B\left( \mathbf{k}\right)$, are test functions, and also
\begin{multline*}
\int\limits_{U_{\mathfrak{q}}}\chi_{p}\left(  \sqrt{\omega\left( \mathbf{k}\right)
}t+\mathfrak{B}_{0}\left( \mathbf{k},\mathbf{x}\right)\right) B\left( \mathbf{k}\right)  \frac
{d^{3}\mathbf{k}}{\left\vert \sqrt{\omega\left( \mathbf{k}\right)  }\right\vert _{p}}\\
=\int\limits_{U_{\mathfrak{q}}}\chi_{p}\left(  \sqrt{\omega\left( \mathbf{k}\right)
}t-\mathfrak{B}_{0}\left( \mathbf{k},\mathbf{x}\right)  \right)  B\left( -\mathbf{k}\right)
\frac{d^{3}\mathbf{k}}{\left\vert \sqrt{\omega\left( \mathbf{k}\right)  }\right\vert _{p}},
\end{multline*}
so the unique weak solution of $\square_{\mathfrak{q},\alpha
}u\left( t,\mathbf{x}\right)  =0$ (with $C=1/\sqrt{2}$) invariant under
$\mathcal{L}_{+}^{\uparrow}$ corresponds to the free scalar field
$\boldsymbol{\Phi}\left( t,\mathbf{x}\right)  $, with $a\left( \mathbf{k}\right)  =\left\vert
\sqrt{\omega\left( \mathbf{k}\right)  }\right\vert _{p}A\left( \mathbf{k}\right)  $,
$a^{\dagger}\left( \mathbf{k}\right)  =\left\vert \sqrt{\omega\left( \mathbf{k}\right)
}\right\vert _{p}B\left( \mathbf{k}\right)$. As we have seen, these solutions can be quantized
using the machinery of the second quantization in such a way that Wightman
axioms are satisfied.
\end{remark}

\begin{acknowledgement}
The first author was supported by Conacyt, the Fundaci\'{o}n Sof\'{\i}%
a Kovalevskaia and the Sociedad Matem\'{a}tica Mexicana. The third author
was partially supported by Conacyt Grant No. 250845. The authors thank
Sergii Torba for many useful discussions.
\end{acknowledgement}

\bigskip

\end{document}